\newtheorem{theorem}{Theorem}[section]
\newtheorem{lemma}[theorem]{Lemma}
\newtheorem{corollary}[theorem]{Corollary}
\newtheorem{conjecture}[theorem]{Conjecture}
\newtheorem{proposition}[theorem]{Proposition}
\theoremstyle{definition}
\newenvironment{remark}
  {\pushQED{\qed}\remarkx}
  {\popQED\endremarkx}
\theoremstyle{remark}
\newtheorem*{claim}{Claim}
\newcommand{\etal}{\textit{et al}.}
\newcommand{\NN}{\mathbb{N}}
\newcommand{\ZN}{\mathbb{Z}}
\newcommand{\QN}{\mathbb{Q}}
\newcommand{\RN}{\mathbb{R}}
\newtype{\class}{\mathbf}
\newcommand{\sh}[1]{\class<\Sigma^p_{#1}>}
\newcommand{\NPR}{\class<\ensuremath{\bm{\exists \RN}}>}
\newcommand{\coNPR}{\class<\ensuremath{\bm{\forall \RN}}>}
\newcommand{\PR}[2]{\class<\bm{\Pi}_{#1}^{#2} \bm{\RN}>}
\newcommand{\SR}[2]{\class<\bm{\Sigma}_{#1}^{#2} \bm{\RN}>}
\DeclareMathOperator{\relop}{\rho}
\DeclareMathOperator{\BP}{BP}
\DeclareMathOperator*{\opt}{opt}
\newcommand{\ERD}{\textsf{\upshape ERD}$_{\RN}$}
\newcommand{\LERD}{\textsf{\upshape LERD}$_{\RN}$}
\newcommand{\IED}{\textsf{\upshape ImageEDense}$_{\RN}$}
\newcommand{\LS}{\textsf{\upshape LocSupp}$_{\RN}$}
\newcommand{\FFEAS}{\ensuremath{\mbox{$4$-\textsf{\upshape FEAS}$_{\RN}$}}}
\newcommand{\STR}[2]{$\Sigma_{#1}^{#2}$-\textsf{\upshape TR}}
\newcommand{\PTR}[2]{$\Pi_{#1}^{#2}$-\textsf{\upshape TR}}
\newcommand{\SPOLY}[2]{$\Sigma_{#1}^{#2}$-\textsf{\upshape POLY}}
\newcommand{\PPOLY}[2]{$\Pi_{#1}^{#2}$-\textsf{\upshape POLY}}
\newcommand{\twocolext}{$2$-\textsf{\upshape COLORING EXTENSION}}
\newcommand{\co}{\ensuremath{\mathrm{co}\mbox{-}}}
\newcommand{\bo}{\ensuremath{\mathrm{bo}\mbox{-}}}
\newcommand{\bd}{\ensuremath{\mathrm{bc}\mbox{-}}}
\newcommand{\bowod}{\ensuremath{\mathrm{bo}}} 
\newcommand{\bdwod}{\ensuremath{\mathrm{bc}}} 
\def\CC{\cal C}
\def\RCC8{\textup{RCC8}}
\def\itemi{\item [$(i)$]}
\def\itemii{\item [$(ii)$]}
\begin{document}

\title{Beyond the Existential Theory of the Reals}

\author{
{Marcus Schaefer
} \\
{\small School of Computing} \\[-0.13cm]
{\small DePaul University} \\[-0.13cm]
{\small Chicago, Illinois 60604, USA} \\[-0.13cm]
{\small \tt mschaefer@cdm.depaul.edu}\\[-0.13cm]
\and
{Daniel \v{S}tefankovi\v{c}
} \\
{\small Computer Science Department} \\[-0.13cm]
{\small University of Rochester} \\[-0.13cm]
{\small Rochester, NY 14627-0226} \\[-0.13cm]
{\small \tt stefanko@cs.rochester.edu}\\[-0.13cm]
}

\date{}

\maketitle

\begin{abstract}
 We show that completeness at higher levels of the theory of the reals is a robust notion (under changing
 the signature and bounding the domain of the quantifiers). This mends recognized gaps in the hierarchy, and
 leads to stronger completeness results for various computational problems. We exhibit several families of complete problems which can be used for future completeness results in the real hierarchy. As an application we sharpen some results by B\"{u}rgisser and Cucker on the complexity of properties of semialgebraic sets, including the Hausdorff distance problem also studied by Jungeblut, Kleist, and Miltzow.
\end{abstract}

{\bfseries\noindent Keywords: }
existential theory of the real numbers, theory of the reals, real hierarchy, computational complexity, semialgebraic sets, Hausdorff distance.

{\bfseries\noindent MSC Classification: }
68Q15, 68Q17, 14P10

\maketitle

\section{Introduction}

Three recent papers have captured the complexity of computational problems at the second level of the theory of the real numbers: Jungeblut, Kleist and Miltzow~\cite{JKM22} show that computing the Hausdorff distance of two semialgebraic sets is as hard as the $\forall\exists_{<}$-fragment of the theory of the reals (with $<$ as the only relation, no negation); D'Costa, Lefaucheux, Neumann, Ouaknine, and Worrell~\cite{DCLNOW21} study a problem in linear dynamical systems, whose complement is also as hard as the $\forall\exists_{<}$-fragment; and Dobbins, Kleist, Miltzow and Rz\k{a}\.{z}ewski~\cite{DKMR18,DKMR22} showed that a version of area universality is as hard as the $\forall\exists$-fragment (with no restriction on relations). Moreover, Blanc and Hansen~\cite{BH22a} locate a problem in evolutionary game theory within the $\forall\exists$-fragment (without proving it hard).

The different restrictions on which relations are allowed suggest the possibility that there are distinct sublevels of complexity at the second level of the theory of the reals. We show in Section~\ref{sec:HRC} that this is not the case: there is only one such level of complexity. Previously, such a result was known only for the first level of the hierarchy~\cite{SS17}, justifying the complexity class \NPR, which corresponds to the existential theory of the reals. The robustness results of this paper justify the {\em real (complexity) hierarchy} corresponding to all finite levels of the theory of the reals. We
identify several (very restricted) families of complete problems for the real hierarchy.
These problems should be useful in future hardness proofs.

As a first step towards testing that claim, Section~\ref{sec:SAS} looks at the computational complexity of various problems about semialgebraic sets. We extend and sharpen relevant work by B\"{u}rgisser and Cucker~\cite{BC09} who first determined the complexity of these problems in the Blum-Shub-Smale model of real computation. The descriptions of some of these problems require exotic (non-standard) quantifiers; we will show that when determining the complexity of these problems in the real hierarchy, exotic quantifiers can be eliminated in nearly all the natural problems considered by B\"{u}rgisser and Cucker.

Before we can state our results formally, we introduce the theory of the reals in some more depth, describe the encoding of computational problems, and discuss how our work relates to the Blum-Shub-Smale model of real computation.

\begin{remark}[Changes and Corrections]\label{rem:changes}
Towards the end of 2024 it was brought to our attention that proofs of results in Section~\ref{sec:BU} on the bounded universe case were missing important details~\cite{HP24}.
On review we found that the details for the bounded-open case could be fixed, this case is now treated in Section~\ref{sec:bouni}. On the other hand, we were not able to repair all the proofs for bounded-closed domains; we now have to admit the possibility that classes over restricted signatures may be weaker in this case. We do not believe that this is really true, but we do not have a proof that it is not. We were able to show that the bounded-domain versions do not proliferate. This new material is presented in Section~\ref{sec:bduni}. We have taken the opportunity of this rewrite to strengthen some of the degree bounds we claimed in the original version.

Changes in more detail: {\em Corollary~2.10} of the original version had to be restricted to the open case, this is now Corollary~\ref{cor:strictify2}. The bounded-open claims of {\em Proposition~2.11} and {\em Corollary 2.12} of the original paper are now covered by Proposition~\ref{prop:boundedopen}. The claims for bounded-closed domains in {\em Proposition~2.11} and {\em Corollary 2.12} have become open questions. Instead we have the weaker Proposition~\ref{prop:bcunb} on unrestricted signatures, and the non-proliferation result~\ref{prop:bceq}. For the first and second levels, the bounded-domain case can be resolved, see Proposition~\ref{prop:bc2nd}, but for higher-levels it remains open, see Conjecture~\ref{con:bceq}.

Independently of this, we also added some detail to {\em Lemma~1.7} of the original paper, which is now Lemma~\ref{lem:Tseitin} to improve some of the later degree bounds, in particular the bound in Proposition~\ref{prop:QPOLY} (same numbering in both versions).
\end{remark}

\subsection*{The Theory of the Reals}

The {\em theory of the reals} is the set of all sentences (that is, no free variables) true over the real numbers. It includes statements such as $(\exists x)\ x^ 2 > 0$, $(\forall y)(\exists x)\ x^2 = y \vee x^2 = -y$, and $(\exists x)(\forall y_1, y_2)(\exists z)\ x+y_1 = y_1 + x \wedge y_2+z = x$. Tarski~\cite{T48} showed that the theory of the reals, and thereby analytic geometry, is decidable; for a modern treatment, see~\cite{BPR06}.

The theory of the reals is highly expressive. Many natural problems in computational geometry, graph drawing and other areas can be expressed in small fragments of the theory, most prominently, the existential fragment, in which only existential quantifiers are allowed. Within the past decade it has become increasingly evident that the existential fragment not only expresses many of these problems, but captures their computational, and sometimes algebraic, complexity precisely. This led to the introduction of a complexity class, \NPR\ (we read this as ``exists R''), capturing the complexity of deciding truth in the {\em existential theory of the reals}. This is analogous to \NP, which captures the complexity of deciding truth of existentially quantified statements over $\{0,1\}$. Recent examples of problems shown to be \NPR-complete include
the art gallery problem~\cite{AAM22},
polygon coverings~\cite{A22},
angular resolution of a graph~\cite{S23},
continuous constraint satisfaction problems~\cite{MS22},
Nash equilibria~\cite{BH22},
and training neural networks~\cite{BHJMW22}. For a very enjoyable and thorough introduction to the existential theory of the reals, see Matou{\v{s}}ek~\cite{M14}; for a (partial) list of \NPR-complete problems, see~\cite{W12}.

Just as in the case of \NP, attention has extended beyond the existential fragment. Many computational problems very naturally involve quantifier alternation. Take, as an example, the notion of Hausdorff distance. Given two sets $A, B \subseteq \RN^n$, the {\em directed Hausdorff distance from $A$ to $B$} is $\sup_{x\in A} d(x,B)$, where $d(x,B) = \inf_{y \in B} \lVert x-y \rVert$ and $\lVert \cdot \rVert$ is the Euclidean norm. The {\em Hausdorff distance}, $d_H(A,B)$, of two sets $A$ and $B$ is the minimum of the directed Hausdorff distances from $A$ to $B$ and from $B$ to $A$. If $A$ and $B$ are semialgebraic, membership in $A$ and $B$ is encoded by Boolean formulas $\varphi(x)$ and $\psi(y)$ in the quantifier-free fragment, so $d_H(A,B) \leq r$ can be expressed as
\begin{equation*}
\begin{split}
(\forall \varepsilon > 0) & (\forall a,b)(\exists a',b') \\
                         & \left(\varphi(a) \wedge \psi(b)\right) \implies \left(\psi(b') \wedge \varphi(a') \wedge \max(\lVert a-b'\rVert,\lVert b-a'\rVert) < r+\varepsilon\right),
\end{split}
\end{equation*}
which belongs to the $\forall\exists$-fragment of the theory of the reals.
But is it as hard as deciding truth in that fragment?

That questions turns out to be somewhat subtle. It is straightforward to define complexity classes $\SR{k}{}$ and $\PR{k}{}$ that correspond to deciding the truth in the $\exists\forall\cdots$ and $\forall\exists\cdots$ ($k-1$ quantifier alternations) fragments of the theory, but showing hardness for these classes has run into issues. What happened?

The general theory of the reals allows a large signature, including function symbols $\{0,1,+,\cdot\}$ and relation symbols $\{<, \leq, =, >, \geq\}$.\footnote{We do allow abbreviations, e.g.\ we write powers such as $x^4$, but we understand that to be shorthand for $x\cdot x \cdot x \cdot x$.} Clearly, the expressiveness of the language does not change if we drop $>$ and $\geq$.\footnote{We will keep using both symbols, since they can easily be expressed by exchanging lhs and rhs of the inequality and flipping the sign.}
What happens with $=$, $<$ and $\leq$ though?
To make the question non-trivial, we have to disallow negation; because of trichotomy, $x<y \vee x = y \vee y<x$, that is not an onerous restriction.

There are seven non-trivial combinations of $\{<, \leq, =\}$, but many of these are easily seen to be equivalent (in terms of expressiveness). For example, once we have $<$ and $=$, we have $\leq$, using an ``or'', and once we have $\leq$, we have $=$, using an ``and''. It follows that there are only four signatures of interest: $\{<\}$, $\{\leq\}$, $\{=\}$ and $\{<,\leq, =\}$.

When studying the existential theory of the reals in~\cite{SS17}, we identified only two of these:
$\NPR_{<}$, corresponding to $\{<\}$ and
$\NPR$, corresponding to $\{<,\leq, =\}$. There is a reason for that: with an existential quantifier we can
define $a < b$ as $(\exists x)\ ax^2+1 \leq bx^2$, and $a \leq b$ as $(\exists x)\ a + x^2 = b$, so in the existential theory, only signatures $\{<\}$ and $\{<,\leq, =\}$ are of interest. Since
$\NPR_{<}= \NPR$, as we showed in~\cite{SS17}, \NPR\ is very robust as a complexity class: Problems identified earlier as $\NPR_{<}$-complete, such as rectilinear crossing number~\cite{B91} and segment intersection graphs~\cite{KM94}, and problems identified as $\NPR$-complete, such as the pseudoline stretchability problem~\cite{M88,S91b} are all computationally equivalent (even if they differ algebraically).

As researchers began exploring the second level of the theory, similar gaps became apparent. D'Costa, Lefaucheux, Neumann, Ouaknine, and Worrell~\cite{DCLNOW21} suggest three variants of each class:
\SR{k}{}, \SR{k}{<}, and $\SR{k}{\leq}$. They observe that $\SR{k}{} = \SR{k}{<}$ for $k=2$, since any equality $a = b$ can be replaced with $(\forall x)\  (x(a-b)) ^2 < 1$, and the universal quantifier can be added to the final, universal, block of quantifiers; clearly the same argument applies for all even $k \geq 2$, and to $\PR{k}{}$ if $k\geq 1$ is odd, e.g.\ for $\forall\RN$. D'Costa \etal\ did not consider the variant $\SR{k}{=}$.

In this terminology, D'Costa, Lefaucheux, Neumann, Ouaknine, and Worrell~\cite{DCLNOW21} showed that the compact escape problem is complete for $\SR{2}{\leq}$ (which implies that the complement of the problem is $\PR{2}{<}$-complete). Dobbins, Kleist, Miltzow and Rz\k{a}\.{z}ewski~\cite{DKMR18,DKMR22} showed that a variant of area universality is $\PR{2}{}$-complete. And
Jungeblut, Kleist and Miltzow~\cite{JKM22} showed that the Hausdorff distance of two semialgebraic sets is
$\PR{2}{<}$-complete. Earlier, B\"{u}rgisser and Cucker~\cite{BC09} showed that testing whether a rational function is surjective is $\PR{2}{}$-complete.

\smallskip

Our main contribution in this paper is to show that all these variants, with one exception, are the same, not just at the second level, but at all finite levels.

\begin{theorem}\label{thm:strictify}
  We have $\SR{k}{} = \SR{k}{<}$ and $\PR{k}{} = \PR{k}{<}$ for all $k \geq 1$.
\end{theorem}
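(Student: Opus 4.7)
The reverse inclusions $\SR{k}{<}\subseteq\SR{k}{}$ and $\PR{k}{<}\subseteq\PR{k}{}$ are immediate since $\{<\}\subseteq\{<,\leq,=\}$. For the non-trivial inclusions, the plan is a prefix-preserving translation: given a prenex $\Sigma_k$- or $\Pi_k$-sentence in $\{<,\leq,=\}$, produce an equivalent prenex sentence in $\{<\}$ with exactly the same quantifier block structure. The workhorse is the $\forall$-trick
\[
a = b \ \Longleftrightarrow\ (\forall y)\,(y(a-b))^2 < 1, \qquad a \leq b \ \Longleftrightarrow\ (\forall y, z)\,((yz)^2 < 1) \vee (a < b + y^2),
\]
which rewrites each closed atom as a block of $\forall$-quantifiers over $<$-atoms. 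These can be pulled outside the matrix as a single $\forall\vec{y}$-block, turning the matrix $\phi$ into $\forall\vec{y}\,\phi'$ with $\phi'$ in the signature $\{<\}$.

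Case A: the innermost quantifier block is $\forall$ (i.e.\ $\SR{k}{}$ with $k$ even or $\PR{k}{}$ with $k$ odd). The $\forall\vec{y}$ block produced by the trick is absorbed into the existing innermost $\forall$-block, and the depth of the prefix is unchanged. This is the argument already used at the second level by D'Costa~\etal~\cite{DCLNOW21} and carries over verbatim.

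Case B: the innermost quantifier block is $\exists$ (i.e.\ $\SR{k}{}$ with $k$ odd or $\PR{k}{}$ with $k$ even). For $k=1$ the required equivalence is $\NPR_{<} = \NPR$ proved in~\cite{SS17}, which I would invoke as a black box. For $k \geq 2$ my plan is to use the bounded-domain robustness of the hierarchy (established independently in this paper, as announced in the abstract) to restrict the innermost $\exists\vec{x}_k$ to a compact ball $K$, and then carry out a compactness-based $\epsilon$-relaxation: compress the equational part of the matrix into a single polynomial identity $p = 0$ via sum of squares, trade this for its $\{<\}$-expressible neighbourhood $p^2 < \epsilon^2$, and pair the resulting $\forall\epsilon > 0$ with the $\forall$-block at level $k-1$ (which is $\forall$ by the parity of this case) to keep the quantifier depth at $k$. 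The main obstacle is the pathological interaction between strict $<$-atoms in $\phi$ and relaxed equational atoms at their common boundary: for instance, $(y = 0)\wedge(y < 0)$ is unsatisfiable but becomes satisfiable after any $\epsilon$-relaxation of the equation, so the naive translation can flip the truth value. Overcoming this will require a careful normal form for the matrix that separates strict atoms from the equational/inequational skeleton (presumably slack variables plus sum-of-squares compression within each DNF disjunct), ensuring that the compactness/continuity argument on $K$ legitimately recovers the exact statement from the relaxed one.
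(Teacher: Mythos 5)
Your Case A (innermost block universal) is essentially the paper's argument; there it is achieved via a Tseitin reduction applied to $\neg\varphi$, yielding a matrix of the form $f\neq 0$ that becomes the strict atom $f^2>0$, but your direct $\forall$-elimination of $=$ and $\leq$ is an equally valid route.

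Case B contains a genuine gap, and it is not the one you flag. The interaction between strict atoms and relaxed equational atoms is a non-issue: since the innermost block is existential, a Tseitin reduction first compresses the entire matrix into a single polynomial equation $f(x)=0$ (this is exactly the paper's Proposition~\ref{prop:QPOLY}), so no strict atoms remain to misbehave at the boundary. The real obstacle is twofold. First, the ``bounded-domain robustness'' you want to invoke as an independent black box is, in this paper, derived from the very same machinery (Solern\'o's effective {\L}ojasiewicz inequality, Theorem~\ref{thm:Linq}, via Corollary~\ref{cor:strictify2}), so the plan is circular. Second, even the elementary compactification $x\mapsto z/(1-z^2)$ only gives quantification over the \emph{open} cube $(-1,1)^n$, which is not compact; and the naive passage to the closed cube is unsound, since $(1-z^2)^d f(z/(1-z^2))$ always vanishes at $z=\pm 1$ regardless of whether $f$ has a zero. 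More fundamentally, $\exists x\ f(x)=0$ is \emph{not} equivalent to $\forall\epsilon>0\ \exists x\ f(x)^2<\epsilon^2$ over $\RN^n$: the infimum of $f^2$ can be $0$ without $f$ vanishing (a zero ``at infinity''), e.g.\ $f(x_1,x_2)=(x_1x_2-1)^2+x_2^2$. The paper's Theorem~\ref{thm:strictify2} handles precisely this by coupling the relaxation scale to an explicit bound $|yx_j|<1$ on the witness, and the backward implication crucially uses the {\L}ojasiewicz exponent $L=D^{n^c}$ to lower-bound how fast $|f|$ can decay near infinity, yielding a contradiction when $f$ has no zero. A purely qualitative compactness argument cannot supply that quantitative coupling, which is why the effective {\L}ojasiewicz bound is the key missing ingredient in your proposal.
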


It follows that computing the Hausdorff distance of two semialgebraic sets
is \PR{2}{}-complete~\cite{JKM22}. Theorem~\ref{thm:strictify} resolves an open question by
D'Costa, Lefaucheux, Neumann, Ouaknine, and Worrell~\cite{DCLNOW21}.

\begin{corollary}\label{cor:leq}
  We have $\SR{k}{} = \SR{k}{\leq}$ and $\PR{k}{} = \PR{k}{\leq}$ for all $k \geq 1$.
\end{corollary}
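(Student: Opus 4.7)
The plan is to derive Corollary~\ref{cor:leq} from Theorem~\ref{thm:strictify} by a straightforward complementation argument, rather than by building a direct translation between signatures. The inclusions $\SR{k}{\leq} \subseteq \SR{k}{}$ and $\PR{k}{\leq} \subseteq \PR{k}{}$ are trivial since $\{\leq\} \subseteq \{<, \leq, =\}$, so only the reverse directions need justification, and I would obtain both at once by dualizing the two parts of Theorem~\ref{thm:strictify}.

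The key observation I plan to use is that the complement operation trades the strict and non-strict signatures. Concretely, I would verify the identifications
\[
  \overline{\SR{k}{}} = \PR{k}{}, \quad \overline{\SR{k}{<}} = \PR{k}{\leq}, \quad \overline{\PR{k}{<}} = \SR{k}{\leq},
\]
and the three symmetric ones obtained by swapping $\SR{}{}$ and $\PR{}{}$. Each of these follows by pushing the outer negation into the quantifier-free matrix via de Morgan: quantifiers flip, $\wedge$ and $\vee$ swap, and each atomic formula is replaced by its negation. For $\SR{k}{<}$ the only atoms are $a<b$, whose negation $b \leq a$ lives in $\{\leq\}$; for $\SR{k}{\leq}$ the negation of $a \leq b$ is $b < a$, which lives in $\{<\}$; and for $\SR{k}{}$ we additionally need $\neg(a=b)$, which we can rewrite as $a<b \vee b<a$, keeping us inside $\{<,\leq,=\}$. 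All of these pushings use only $\wedge$ and $\vee$ (no residual negations), so the resulting formulas are bona fide members of the target class. The opposite inclusions are verified identically by negating the complement back.

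With these complement identifications in hand, the corollary is immediate: applying the complement operation to both sides of $\SR{k}{} = \SR{k}{<}$ from Theorem~\ref{thm:strictify} yields $\PR{k}{} = \PR{k}{\leq}$, and applying it to $\PR{k}{} = \PR{k}{<}$ yields $\SR{k}{} = \SR{k}{\leq}$. I do not anticipate any real obstacle; the only delicate point is simply bookkeeping in the de Morgan computation, in particular confirming that the negation of $=$ uses a disjunction and a strict inequality and hence sits inside $\{<,\leq,=\}$ but not inside $\{\leq\}$ alone — which is precisely why the non-strict variant in the corollary is $\leq$ and nothing more restrictive.
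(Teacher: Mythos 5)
Your proposal is correct and takes essentially the same route as the paper: the paper also derives Corollary~\ref{cor:leq} from Theorem~\ref{thm:strictify} by observing that complementation exchanges the strict and non-strict signatures (specifically, $\co\PR{k}{<} \subseteq \SR{k}{\leq}$ combined with $\co\PR{k}{<} = \co\PR{k}{} = \SR{k}{}$ and the trivial inclusion $\SR{k}{\leq} \subseteq \SR{k}{}$). Your only cosmetic difference is stating the complement identifications as equalities rather than one-sided inclusions, which is fine.
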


The corollary implies that the compact escape problem, as defined in~\cite{DCLNOW21}, is \SR{2}{}-complete.

With an eye towards future hardness reductions, we also investigate bounded universes. Specifically, we consider $I = (-1,1)$, the {\em bounded open} case and $I = [-1,1]$ the {\em bounded closed} case. We will use prefixes ``\bowod'' and ``\bdwod'' to distinguish these variants of the complexity classes.

\begin{corollary}\label{cor:bound}
  We have $\SR{k}{} =  \bo\SR{k}{<}$ and $\PR{k}{} = \bo\PR{k}{<}$ for all $k \geq 1$.
\end{corollary}

As in the unbounded case it also follows that $\SR{k}{} =  \bo\SR{k}{\leq}$ and $\PR{k}{} = \bo\PR{k}{\leq}$ for all $k \geq 1$.

For the bounded-close case we can only show that $\SR{k}{} =  \bd\SR{k}{}$ and $\PR{k}{} =  \bd\PR{k}{}$ for all $k \geq 1$, see Proposition~\ref{prop:bcunb}; for the restricted signature cases we can show that $\bd\SR{k}{<} = \bd\SR{k}{=}$ for odd $k
\geq 1$, see Proposition~\ref{prop:bceq}, but not that they are the same as $\SR{k}{}$. This remains a striking open problem.

\begin{conjecture}\label{con:bceq}
 $\SR{k}{} =  \bd\SR{k}{<}$ for all $k \geq 1$.
\end{conjecture}

For $k = 1$ the conjecture is known to be true by a result due to Ouaknine and Worrell~\cite[Theorem 7]{OW17}. For $k=2$ we show in Proposition~\ref{prop:bc2nd} that the truth of the conjecture follows from work by Jungeblut, Kleist and Miltzow~\cite{JKM24}.
The bounded-domain results generalize earlier work by the authors~\cite{SS17} for $k = 1$, and D'Costa, Lefaucheux, Neumann, Ouaknine, and Worrell~\cite{DCLNOW21} for $k = 2$.

\smallskip

Theorem~\ref{thm:strictify}, as well as Corollaries~\ref{cor:leq} and~\ref{cor:bound} can be sharpened by exhibiting very restricted families of sentences which are $\SR{k}{}$-complete. This is done in Propositions~\ref{prop:QPOLY}, \ref{prop:strictcomp}, and~\ref{prop:bceq}; Tables~\ref{tab:shcomp} and~\ref{tab:phcomp} summarize the families of complete problems.

\begin{remark}[Equality]\label{rem:eq}
Noticeably absent are $\SR{k}{=}$ and $\PR{k}{=}$ because they behave anomalously. If the final quantifier block is existential we can define $\leq$ and $<$ using $a\leq b \Leftrightarrow (\exists x)\ b=a+x^2$ and
$a<b \Leftrightarrow (\exists x,x')\ b=a+x^2\wedge xx' = 1$. So $\SR{k}{=} = \SR{k}{}$ for all odd $k$ and
$\PR{k}{=} = \PR{k}{}$ for all even $k$; this was first shown by Dobbins, Kleist, Miltzow and Rz\k{a}\.{z}ewski~\cite[Theorem 2.2]{DKMR22}.

On the other hand, if the final quantifier block is universal, equality is weak. Using the Schwartz-Zippel lemma one can easily show that $\PR{1}{=}\subseteq \coRP$, so each language in $\PR{1}{=}$ can be decided in randomized polynomial time, and it is quite possible that $\PR{1}{=} = \P$. The complexity of \PR{1}{=}\ is closely related to polynomial identity testing, a notoriously difficult field, see~\cite{SV15} for a paper investigating a similar model.

It may be true that $\SR{k}{=} = \SR{k-1}{}$ for even $k \geq 2$ and $\PR{k}{=} = \PR{k-1}{}$ for odd $k \geq 3$, and
this may be easier to settle than the case of \PR{1}{=}\ by making use of the additional quantifiers.
\qed\end{remark}

\begin{remark}[Addition only]
 Sontag~\cite{S85} showed that if we do not allow multiplication, the resulting complexity classes are exactly the levels of the polynomial-time hierarchy. So anybody conjecturing $\NP = \NPR$ or $\sh{k} = \SR{k}{}$ is conjecturing that multiplication can be eliminated at these levels.
\qed\end{remark}

\subsection*{A Word on Encodings}

Since we are working within a logical fragment, some natural objects require encoding. The theory of the reals allows only two constants: $0$ and $1$, so we have to write integers using only these constants; computing the number in binary using Horner's scheme, this can be done efficiently, e.g. $13 = (1+1)\cdot ((1+1)\cdot ((1+1)+1))+1$. So we will write integer constants, since we know that they can be removed without increasing the encoding length significantly. We have to be more careful with polynomials. For us, a (multivariate) polynomial is a sum of monomial terms, such as $107x_1^3x_3^5-93x_2x_3^4$. In formulas we allow arithmetical terms, like $(x_1-x_2)(x_1-x_3)(x_2-x_3)$, but in a polynomial representation, we would have to multiply out the terms, which can lead to an exponential blow-up in the encoding length. With the help of existential quantifiers, such a blow-up can be avoided using a standard Tseitin-reduction. We will use the following lemma, which is based on Lemma 3.2 in~\cite{SS17}.

\begin{lemma}\label{lem:Tseitin}
   From a Boolean formula $\varphi(x_1, \ldots, x_i)$ in real variables $x_1, \ldots, x_i$ we can efficiently construct a
   family of quadratic polynomials $f_{\ell}:\RN^{i+j}\rightarrow \RN$, $\ell \in [k]$ such that
   \[\varphi(x_1, \ldots, x_i) \Leftrightarrow (\exists y_1, \ldots y_j) \bigwedge_{\ell \in [k]} f_{\ell} (x_1, \ldots, x_i, y_1, \ldots, y_j) = 0\]
  for all $x_1, \ldots, x_i$; each $f_{\ell}$ depends on at most four variables. 
  Defining $f = \sum_{\ell \in [k]}f^2_{\ell}$, we obtain a non-negative polynomial $f: \RN^{i+j}\rightarrow \RN$ of degree at most $4$
  such that
  \[\varphi(x_1, \ldots, x_i) \Leftrightarrow (\exists y_1, \ldots, y_j)\ f(x_1, \ldots, x_i, y_1, \ldots, y_j) = 0\]
  for all $x_1, \ldots, x_i$.
\end{lemma}

We study computational problems over different types of sets defined by conditions on polynomials. A set $S$ is
\begin{itemize}
\item {\em algebraic} if $S = \{x: f(x) = 0\}$ for some polynomial $f$,
\item {\em basic semialgebraic} if $S = \{x: f(x) = 0, g_1(x) \geq 0, \ldots, g_n(x) \geq 0, h_1(x) > 0, \ldots, h_m(x) > 0\}$ for some polynomials $f$, $g_i$, and $h_j$, and
\item {\em semialgebraic} if it is the Boolean combination of basic semialgebraic sets, equivalently, if there is a Boolean formula $\varphi(x)$ consisting of conditions on polynomials such that $S = \{x: \varphi(x)\}$.
\end{itemize}

We specify algebraic and basic semialgebraic sets by their polynomials. A semialgebraic set is given by its defining formula $\varphi$. Lemma~\ref{lem:Tseitin} implies that every semialgebraic set can be viewed as the projection of an algebraic set, and semialgebraic sets are sometimes specified that way: $S = \{x: (\exists y)\ f(x,y) = 0\}$, though we will not do so here. Semialgebraic sets can also be specified as $S = \{x: C(x) = 1\}$, where $C$ is an {\em algebraic circuit}, see~\cite[Section 2]{BC09} for a precise definition. Circuits are a very succinct way of encoding a semialgebraic set central to the BSS-model, which we discuss next. It is unlikely that algebraic circuits can be turned into formulas without significant overhead, or the addition of existential quantifiers.

\subsection*{The Blum-Shub-Smale Model}

There is a notion of Turing-machine computations over real (and complex) numbers introduced by Blum, Shub, and Smale, known as the {\em BSS-model}~\cite{BCSS98}; Turing-machines in this model can have real constants, and perform addition, multiplication and comparisons of arbitrary real numbers in a single step. In the BSS-model, a real Turing machine takes as an input a tuple of real numbers and accepts or rejects if it terminates. A real Turing machine accepts a subset of $\RN^* = \bigcup_k \RN^k$. In this way, we can extend computation over $\{0,1\}$ to $\RN$, or any ring.

As an example, consider the problem \FFEAS\ which asks whether a polynomial of degree at most $4$ is {\em feasible}, that is, has a zero. A polynomial of degree at most $4$ can be written $f(x) = \sum_{\alpha} c_{\alpha} x^{\alpha}$, where $\alpha$ ranges over $\{0,1,2,3,4\}^n$, and each $\alpha$ has weight at most $4$. In the BSS-model we then encode $f$ as the sequence $c_{\alpha}$ of its $O(n^4)$ coefficients.

In analogy with the discrete classes, Blum, Shub and Smale introduced real equivalents of $\P$ and $\NP$, known as $\P_{\RN}$ and $\NP_{\RN}$. Given a polynomial $f$, a real Turing-machine with non-determinism can guess $x =(x_1, \ldots, x_n) \in \RN^n$ and verify that $f(x) = 0$. This shows that $\FFEAS \in \NP_{\RN}$. Blum, Shub, and Smale~\cite[Section 6]{BSS89} proved an analogue of Cook's theorem for $\NP_{\RN}$ by showing that the \FFEAS-problem is $\NP_{\RN}$-complete. Since then, these classes, with their own, real ``$\P_{\RN} = \NP_{\RN}?$''-problem have been studied in depth.

From the real world, we can return to discrete computational problems in two steps: we first restrict the real Turing-machines so they are only allowed to work with real constants $0$ and $1$. We write  $\P^0_{\RN}$ and $\NP^0_{\RN}$ for the resulting restricted classes. These classes are still classes of real numbers, so in a second step we discretize a class $\CC$, by intersecting it with $\{0,1\}^* = \bigcup_k \{0,1\}^k$, resulting in the {\em Boolean part}, $\BP(\CC)$, of $\CC$. We also write $\BP^0(\CC)$ for $\BP({\CC}^0)$, that is, applying both discretization steps to the class $\CC$ at once.

B{\"u}rgisser and Cucker~\cite[Proposition 8.5]{BC06} showed that the discrete version of the \FFEAS-problem: decide the feasibility of a polynomial of degree at most $4$ with integer coefficients, is  $\BP(\NP^0_{\RN})$-complete. Since the same problem is \NPR-complete~\cite[Theorem 4.1]{SS17}, this implies that
$\BP(\NP^0_{\RN}) = \NPR$, so the discretization of $\NP_{\RN}$ corresponds exactly to \NPR.

Blum, Cucker, Shub, and Smale~\cite[Section 21.4]{BCSS98} also introduce the notion of a {\em (real) polynomial hierarchy}, and they identify a family of complete problems (see comment after Proposition~\ref{prop:QPOLY}). B{\"u}rgisser and Cucker~\cite[Section 9]{BC09} introduced
a discrete version of that hierarchy, e.g.\ writing $\BP^0(\forall\exists)$ for the second level of that hierarchy. They identify a complete problem for each level, which is essentially the same problem we are working with, see Proposition~\ref{prop:QPOLY} below. This shows that the discretized real polynomial hierarchy, and the hierarchy we introduce below are (extensionally) the same. In particular $\BP^0(\forall\exists) = \PR{2}{}$.

Intensionally, though, our two approaches differ. B{\"u}rgisser and Cucker derive the discretized problems from their real counterparts in the real Turing machine model. We start with discrete computational problems and focus on determining their complexity through their logical structure. This gives us a more fine-grained view, closer to descriptive complexity, which can distinguish variant complexity classes like $\PR{k}{<}$, $\PR{k}{}$, and $\bd\PR{k}{}$, while proving that they are equal. Section~\ref{sec:SAS} contains a more detailed discussion of results from~\cite{BC09}.

Basu and Zell~\cite{BZ10} prove a real analogue of Toda's theorem for the BSS polynomial hierarchy by showing that computing the Betti numbers of a semialgebraic set is hard for that hierarchy.

\section{A Hierarchy of Real Complexity}\label{sec:HRC}

In Section~\ref{sec:LTR} we define complexity classes corresponding to various fragments of the theory of the reals,
establishing a {\em real (complexity) hierarchy}, in analogy with the polynomial-time hierarchy. Our main contribution is to show that the levels of the real hierarchy are robust under certain definitional modifications. We base our results on an effective {\L}ojasiewicz inequality due to Solern\'{o}~\cite{S91}, as explained in Section~\ref{sec:BG}. In Section~\ref{sec:BU} we extend our results to bounded quantification.

\subsection{Leveling the Theory of the Reals}\label{sec:LTR}

We assume sentences are in {\em prenex} form, that is,
\[(Q_1 x_1)(Q_2 x_2) \cdots (Q_i x_i)\ \varphi(x_1, \ldots, x_i),\]
where $Q \in \{\exists,\forall\}$ and $\varphi$ is a quantifier-free Boolean formula (without negation) over atomic predicates of the form $t(x_1, \ldots, x_i) \relop 0$, where $\relop \in \{<, \leq, =, \geq, >\}$ and $t(x_1, \ldots, x_i)$ is an arithmetical term in the (real) variables $x_1, \ldots, x_i$; $\varphi$ is known as the {\em matrix} of the sentence. We call a sentence {\em strict} if it does not contain the predicates $\leq$, $=$, and $\geq$ (and this is the reason we do not allow negation; with it, these predicates can be simulated).

We say a sentence is in $\Sigma_k$ $(\Pi_k$) if $Q_1 = \exists$ ($Q_1 = \forall$)
and there are $k-1$ quantifier alternations. We let \STR{k}{}\ (\STR{k}{<}) be the set of all true
(strict) sentences of the form $\Sigma_k$, and, similarly for \PTR{k}{}. Taking these as complete problems, we define
a hierarchy of complexity classes: $\SR{k}{}$ ($\SR{k}{<}$) is the set of problems that polynomial-time reduce to
\STR{k}{}\ (\STR{k}{<}) and similarly for \PR{k}{}\ and \PR{k}{<}.

For small, finite $k$, there are alternative names for these classes, including $\SR{1}{} = \NPR$, $\PR{1}{} = \coNPR$, $\PR{2}{} = \forall\exists\RN$, $\SR{2}{} = \exists\forall\RN$; for the strict versions we have $\exists_{<}\RN$, $\forall\exists_<\RN$ and so on.

In terms of traditional complexity, Renegar~\cite{R92a,R92b,R92c} showed that $\SR{k}{}$ and $\PR{k}{}$ can be decided in time $O(2^{2^{O(k)} n^k})$, where $n$ is the length of the formula. So $\SR{k}{}$ and $\PR{k}{}$ lie in $\EXP$ for any fixed $k$ (they even lie in \PSPACE, see~\cite[Remark 13.10]{BPR06}).

By definition, $\SR{k}{<} \subseteq \SR{k}{}$ and $\PR{k}{<}\subseteq \PR{k}{}$. In Section~\ref{sec:BG} we will show that equality holds for all $k$. This closes the gap between two possible definitions of the $k$-th level of the real hierarchy. We take the investigation further, by sharpening the form of the complete problems for \STR{k}{}\ and \PTR{k}{}, restricting the matrix of the sentences. Let \SPOLY{k}{\relop}\ (and \PPOLY{k}{\relop}) be the special case of \STR{k}{}\ (\PTR{k}{}) in which the matrix of the sentences has the form $\varphi(x_1, \ldots, x_i) = f(x_1, \ldots, x_i) \relop 0$ for a polynomial $f \in \ZN(x_1, \ldots, x_i)$ and $\relop \in \{=, <, \leq\}$. If we specify $\relop = \#$, we assume $\relop$ to be $=$ if the final quantifier block of the sentence is existential, and $<$ otherwise.

\SPOLY{k}{\#}\ and \PPOLY{k}{\#}\ are complete for \SR{k}{}\ and \PR{k}{}. This was first shown by B\"{u}rgisser and Cucker~\cite[Section 9]{BC09}, where the problem (with $\neq$ in place of $<$) is called $\textsf{STANDARD}^{\ZN}(Q_1\cdots Q_k)$; the same result was also observed in~\cite{DCLNOW21}. In the BSS-model, the result goes even farther back, to Blum, Cucker, Shub, and Smale~\cite[Section 21.4]{BCSS98}.

\begin{proposition}\label{prop:QPOLY}
 \SPOLY{k}{\#}\ is \SR{k}{}-complete, and \PPOLY{k}{\#}\ is \PR{k}{}-complete for all $k \geq 1$. We can assume that polynomial in the matrix is non-negative and of degree at most $4$, more specifically, it is the sum of squares of quadratic polynomials, each depending on at most four variables. 
\end{proposition}

\begin{proof}
If the final quantifier block is existential, we use a Tseitin reduction, Lemma~\ref{lem:Tseitin}, to find a polynomial $f \in \ZN(x_1, \ldots, x_i, y_1, \ldots, y_j)$ such that
 $\varphi(x_1, \ldots, x_i)$ is equivalent to $(\exists y_1, \ldots, y_i)\ f(x_1, \ldots, x_i, y_1, \ldots, y_j) = 0$.
The additional existential quantifier can be absorbed by the final quantifier block of the original sentence, since it is existential as well.

Otherwise, the final quantifier block is universal. We apply the Tseitin reduction to $\neg \varphi(x_1, \ldots, x_i)$ to get a non-negative polynomial $f \in \ZN(x_1, \ldots, x_i, y_1, \ldots, y_j)$ such that $\varphi(x_1, \ldots, x_i)$ is equivalent to $(\forall y_1, \ldots, y_i)\ f(x_1, \ldots, x_i, y_1, \ldots, y_j) > 0$, using non-negativity of $f$, and the additional universal quantifiers can be absorbed into the final quantifier block.
\end{proof}

Proposition~\ref{prop:QPOLY} directly implies Theorem~\ref{thm:strictify} for those cases in which the final quantifier block is universal, for example, $\forall_<\RN = \forall\RN$ and $\exists\forall_<\RN = \exists\forall\RN$.

\subsection{Bridging the Gap}\label{sec:BG}

We work with an effective {\L}ojasiewicz inequality due to Solern\'{o}~\cite[Theorem 3]{S91}. Write $[n]$ for $\{1,\ldots,n\}$.

\begin{theorem}[Solern\'{o}]\label{thm:Linq}
There exists a positive integer constant $c$ such that the following is true. Let $\Omega=\prod_{j\in[n]} [s_j,t_j]\subseteq{\mathbb R}^n$.
Let $f,g\in\RN[x_1,\dots,x_n]$ be polynomials of total degree at most $D$. Assume that $f(x_1,\dots,x_n)=0$ implies $g(x_1,\dots,x_n)=0$ for all $x_1,\dots,x_n\in\Omega$.
Then there exists $\alpha>0$ (dependent on $f,g$, and $\Omega$) such that
\[
\vert f(x_1,\dots,x_n)\vert \geq \alpha \vert g(x_1,\dots,x_n)\vert ^{L}
\]
for all $x_1,\dots,x_n\in\Omega$, where $L=D^{n^c}$. If $f,g \in \ZN[x_1, \ldots, x_n]$, then
we can choose $\alpha = 2^{-\ell D^{cn^2}}$, where $\ell$ is the number of bits in the longest coefficient of $f$ and $g$.
\end{theorem}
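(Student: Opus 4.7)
The plan is to prove existence of the exponent $L$ and constant $\alpha$ first by a classical compactness-and-curve-selection argument, and then upgrade to the stated quantitative bounds $L=D^{n^c}$ and $\alpha=2^{-\ell D^{cn^2}}$ using quantitative methods from real algebraic geometry. This is the shape of Solern\'{o}'s original argument in~\cite{S91}; I sketch the pieces I would assemble.

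For the qualitative existence I would follow the standard proof of the {\L}ojasiewicz inequality. I would consider the ratio $|f(x)|/|g(x)|^L$ on $\{x\in\Omega : g(x)\ne 0\}$ and suppose for contradiction that its infimum is $0$ for every $L$. Compactness of $\Omega$ yields a subsequential limit $x^\ast$; because the ratio tends to $0$ along the sequence, $f(x^\ast)=0$ (else the ratio converges to something positive) and $g(x^\ast)=0$ (else the ratio blows up), and the hypothesis $f=0\Rightarrow g=0$ is consistent with this. The semialgebraic curve selection lemma then produces a real-analytic arc $\gamma(s)$ through $x^\ast$ along which $|f\circ\gamma(s)|\sim c_1 s^a$ and $|g\circ\gamma(s)|\sim c_2 s^b$ admit Puiseux expansions with positive rational exponents $a,b$. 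Any $L\le a/b$ makes the ratio bounded below along $\gamma$; taking $L$ as the maximum of finitely many such $a/b$ (one for each critical branch coming out of the real-analytic stratification of the zero locus of $fg$), and invoking compactness once more for a uniform $\alpha$, closes the qualitative step.

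For the effective bounds I would invoke quantitative real quantifier elimination. The statement ``$L$ works for $(f,g,\Omega)$'' is a $\forall$-sentence whose sole atomic predicate is a polynomial inequality whose degree is polynomial in $D$ and $L$ and whose bit-length is controlled by $\ell$. Applying Renegar- or Heintz-style elimination together with Cauchy-type root-separation bounds on the univariate projections produced during cylindrical decomposition yields, on the one hand, a threshold value of $L$ beyond which the sentence is forced true---and tracking how degrees compound over the $n$ projection steps produces the bound $L=D^{n^c}$---and on the other hand, an explicit lower bound on the witness $\alpha$ of the form $2^{-\ell D^{cn^2}}$ from root-separation applied to integer polynomials of the relevant degree and bit-length.

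The main obstacle, and indeed the real content of~\cite{S91}, is matching the exponents \emph{tightly}: obtaining $L=D^{n^c}$ rather than the much weaker $2^{O(n\log D)}$ one gets from off-the-shelf quantifier elimination, and similarly for $\alpha$. This requires a careful effective Nullstellensatz/Positivstellensatz argument specialized to the {\L}ojasiewicz setting rather than a black-box invocation of generic CAD bounds. Since the present paper uses Solern\'{o}'s inequality as a black box in Section~\ref{sec:BG}, I would cite the theorem from~\cite{S91} rather than reconstruct the delicate bookkeeping needed to produce the named constant $c$.
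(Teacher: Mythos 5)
The paper does not prove this theorem; it is stated verbatim and cited from Solern\'{o}~\cite[Theorem 3]{S91} as an external black-box result, which is exactly what you propose to do, so the approaches coincide. One small slip in your explanatory sketch: along an arc with $|f\circ\gamma(s)|\sim c_1 s^a$ and $|g\circ\gamma(s)|\sim c_2 s^b$, the ratio $|f|/|g|^L$ stays bounded below as $s\to 0^+$ precisely when $L\geq a/b$ (not $L\leq a/b$), which is consistent with your later choice of $L$ as the maximum of $a/b$ over the finitely many branches.
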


\begin{remark}
The reader familiar with the proof that $\NPR = \NPR_{<}$ in~\cite{SS17} may remember that it was based on
an explicit lower bound on positive polynomials by Jeronimo and Perrucci~\cite{JP10}. Why are we taking recourse to
Solern\'{o}'s much older, less explicit bound~\cite{S91} here? The reason is that Solern\'{o}'s paper is one of the last that gives bounds for polynomials with real coefficients; more recent bounds tend to assume integer coefficients, and express bounds in terms of the bit-length of the coefficients, rather than---as Solern\'{o} did---their size. While it is possible that these newer, more explicit bounds can be extended to real coefficients, this would probably require reproving these results carefully. Since we do not require explicit bounds for our results, Solern\'{o}'s effective {\L}ojasiewicz inequality is fully sufficient for our purposes.
\qed\end{remark}

Theorem~\ref{thm:Linq} allows us to turn an $\exists_=$ into a $\forall\exists_{<}$ statement.

\begin{theorem}\label{thm:strictify2}
Let $f(x_1,\dots,x_n)\in \RN[x_1,\dots,x_n]$ be of total degree at most $D \geq 2$.
Then the sentence
\[
\Phi = (\exists x_1,\dots,x_n)\ f(x_1,\dots,x_n) = 0,
\]
is equivalent to
\[
\begin{split}
\Psi = (\forall z>0) & (\exists y\in (0,z)) (\exists x_1,\dots,x_n)\ \\
                     & \Big(\vert f(x_1,\dots,x_n)\vert < y^C\Big)\wedge \bigwedge_{j\in [n]} \Big( \vert y x_j\vert  < 1 \Big),
\end{split}
\]
where $C= D^{n^c}$ and $c$ is a fixed integer constant, independent of $f$.
\end{theorem}

\begin{proof}
Assume that $\Phi$ is true. Let $x_1,\dots,x_n\in\RN$ be values such that $f(x_1,\dots,x_n) = 0$. For sufficiently small $y'>0$
we have $\vert y' x_j\vert  < 1$ for all $j\in [n]$. For every $z>0$ we can then
take $y=\min\{z/2,y'\}$ to satisfy $\Psi$.

Assume that $\Psi$ is true and that $\Phi$ is false. Since polynomials are continuous we can assume that
$f(x_1,\dots,x_n)>0$ for all $x_1,\dots,x_n\in\RN$.  Let $y_1,y_2,\dots$ be a sequence of positive reals in $(0,1]$ such that $y_{k+1}<y_k/2$. Corresponding to these we choose
$x_1^{(k)},\dots,x_n^{(k)}$ that satisfies $\Psi$ for $y_k$, that is,
\begin{equation}\label{eq:strict1}
\Big(\vert f(x_1^{(k)},\dots,x_n^{(k)})\vert < y_k^C\Big)\wedge \bigwedge_{j\in [n]} \Big( \vert y_k x_j^{(k)}\vert  < 1 \Big).
\end{equation}
Then  $(x_1^{(k)},\dots,x_n^{(k)})_{k=1}^{\infty}$, as an infinite sequence in the closed space $[-\infty,\infty]^{n}$, contains a convergent subsequence; abusing notation we will use the same sequence for the convergent subsequence.
Let $(z_1,\dots,z_n)\in [-\infty,\infty]^n$ be the limit of the sequence.

We next apply a transformation that removes infinite limit points. To that end, let $S$ be the set of indices $i\in [n]$ for which $z_i$ is finite, $T=1+\max_{i\in S} \vert z_i\vert $, and $\Omega=\prod_{j\in [n]} [s_j,t_j]$
where $s_j=-T,t_j=T$ for $j\in S$, and $s_j=0,t_j=T$, for $j\not\in S$. Define
\[
\sigma_j(x) = \Bigg\{ \begin{array}{rl}
                      x & \mbox{if}\ z_j\in [-T,T], \\
                      1/x & \mbox{if}\ z_j=\infty, \\
                      -1/x & \mbox{if}\ z_j=-\infty,
                    \end{array}
\]
for $j\in [n]$,
\[g(x_1,\dots,x_n)=\prod_{j\in [n]\setminus S} x_j,\]
and
\[
W(x_1,\dots,x_n) = \left(g(x_1,\dots,x_n)\right)^D f(\sigma_1(x_1),\dots,\sigma_n(x_n)).
\]
Then $W(x_1,\dots,x_n)$ is a polynomial of total degree at most $D(n+1+$. Since $f$ is positive, it
follows that $W(x_1,\dots,x_n)=0$ implies $g(x_1,\dots,x_n)=0$ for all $(x_1,\dots,x_n)$.
By Theorem~\ref{thm:Linq}, there exists $\alpha>0$ such that
for all $(y_1,\dots,y_n)\in\Omega$ we have
\begin{equation}\label{eq:strict3}
W(y_1,\dots,y_n)\geq\alpha (g(y_1,\dots,y_n))^L,
\end{equation}
where $L=((n+1)D)^{n^{c''}} \leq D^{n^{c'}}$ for some $c'$ not depending on $f$.

Let $(\hat{z}_1,\dots,\hat{z}_n) = (\sigma_1(z_1),\dots,\sigma_n(z_n))\in\Omega$. Note that
$$\Big(\sigma_1(x_1^{(k)}),\dots,\sigma_n(x_n^{(k)})\Big)_{k=1}^{\infty}$$
converges to $(\hat{z}_1,\dots,\hat{z}_n)$ and for all sufficiently large $k$ we have
$$\big(\sigma_1(x_1^{(k)}),\dots,\sigma_n(x_n^{(k)})\big)\in\Omega.$$

Using $\vert x_j^{(k)}\vert < 1/y_k$ we obtain
\begin{equation}\label{eq:strict2}
g( \sigma_1(x_1^{(k)}),\dots, \sigma_n(x_n^{(k)})) \geq y_k^{n-\vert S\vert } \geq y_k^{n}.
\end{equation}
We have
\begin{equation}\label{eq:strict4}
\begin{split}
W(\sigma_1(x_1^{(k)}),\dots,\sigma_n(x_n^{(k)}))\left(g(\sigma(x_1^{(k)}),\dots,\sigma(x_n^{(k)}))\right)^{-D} = \\
 f(x_1^{(k)},\dots,x_n^{(k)}) < y_k^C.
\end{split}
\end{equation}
Using~\eqref{eq:strict3}, with $y_i=\sigma_i(x_i^{(k)})$, in~\eqref{eq:strict4} we obtain
$$
\alpha \left(g(\sigma(x_1^{(k)}),\dots,\sigma(x_n^{(k)}))\right)^{L-D} < y_k^C.
$$
Using~\eqref{eq:strict2} in the last equation we obtain
$$
\alpha y_k^{n(L-D)} < y_k^C.
$$
which is false for sufficiently large $y_k$, since $C-n(L-D)\geq 1$ for some $c$ not depending on $f$.
\end{proof}

We first use Theorem~\ref{thm:strictify2} to prove Theorem~\ref{thm:strictify}; we then show how to sharpen the result in Proposition~\ref{prop:strictcomp}.

\begin{proof}[Proof of Theorem~\ref{thm:strictify}.]
  Proposition~\ref{prop:QPOLY} implies that \SPOLY{k}{\#}\ and \PPOLY{k}{\#}\ are complete for $\SR{k}{}$ and
  $\PR{k}{}$. If the last quantifier block is universal, then the matrix is strict, so the result follows in this case.

  This leaves us with the case that the last quantifier block is existential. For $k=1$ the result $\SR{1}{} = \SR{1}{<}$ was proved in~\cite{SS17}, hence in all remaining cases there is at least one quantifier alternation, and the final block of quantifiers is existential.

  The complete problems \SPOLY{k}{\#}\ and \PPOLY{k}{\#}\ in these cases have a matrix of the form $g(x_1, \ldots, x_i) = 0$. Without loss of generality, let us assume that the variables existentially quantified in the final block are $x_1, \ldots, x_n$ for some $n < i$. Fix arbitrary values for the remaining variables $x_{n+1}, \ldots, x_i \in \RN$. Let $f(x_1, \ldots, x_n)$ be the polynomial resulting from fixing $x_{n+1}, \ldots, x_i$ in $g(x_1, \ldots, x_i)$. Theorem~\ref{thm:strictify2} implies that
\[(\exists x_1,\dots,x_n)\ f(x_1,\dots,x_n) = 0\]
is equivalent to the strict formula
\[
(\forall z>0) (\exists y\in (0,z)) (\exists x_1,\dots,x_n)\ \Big(\vert f(x_1,\dots,x_n)\vert < y^C\Big)\wedge \bigwedge_{j\in [n]} \Big( \vert y x_j\vert  < 1 \Big),\]
where $C=n D^{n^c} + (D+1) n $. But then
\[(\exists x_1,\dots,x_n)\ g(x_1,\dots,x_i) = 0\]
is equivalent to
\[
(\forall z>0) (\exists y\in (0,z)) (\exists x_1,\dots,x_n)\ \Big(\vert g(x_1,\dots,x_i)\vert < y^C\Big)\wedge \bigwedge_{j\in [n]} \Big( \vert y x_j\vert  < 1 \Big),\]
for the fixed values of $x_{n+1}, \ldots, x_i$. Since those values were chosen arbitrarily, and nothing in the two formulas depends on those values, we conclude that the two formulas are equivalent for all values of $x_{n+1}, \ldots, x_i \in \RN$. We can therefore replace the first formula with the second without affecting the truth of the full sentence. Since the final two quantifier blocks of our formula were of the type $\forall\exists$ this does not change the type of the formula, but the resulting formula is now strict.
\end{proof}

The complete problems produced by the previous proof contain Boolean matrices which, with some extra work, can be transformed into polynomial inequalities. A simple inequality, based on repeated squaring, helps reduce the degree of the polynomial.

\begin{lemma}\label{lem:ydexp}
The inequality
\begin{equation*}
\sum_{k\in[m]} (y_k - y_{k-1}^2-1)^2 < 1
\end{equation*}
is feasible (has a solution), and in any solution, $y_m > y_0^{2^m}$.
\end{lemma}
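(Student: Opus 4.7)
The plan is to treat the two assertions separately and keep the argument elementary. For feasibility, I would exhibit an explicit solution: fix any value for $y_0$ (say $y_0=0$) and recursively set $y_k = y_{k-1}^2 + 1$ for $k=1,\dots,m$. Every summand $(y_k - y_{k-1}^2 - 1)^2$ then vanishes, so the left-hand side equals $0 < 1$.

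For the lower bound, the key observation is that the left-hand side is a sum of nonnegative reals bounded by $1$, so each individual summand is strictly less than $1$. This yields $|y_k - y_{k-1}^2 - 1| < 1$, which rearranges to $y_k > y_{k-1}^2$ for every $k \in [m]$. In particular $y_k > y_{k-1}^2 \geq 0$ forces $y_k > 0$ for every $k \geq 1$, a fact I will need in the next step.

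From here I would iterate the bound by induction on $k$, with the claim $y_k > y_0^{2^k}$. The base case $k=1$ is exactly $y_1 > y_0^{2}$. For the inductive step, the induction hypothesis gives two nonnegative quantities $y_k > y_0^{2^k} \geq 0$ (the right side is nonnegative as an even power, and $y_k > 0$ by the previous paragraph), so squaring preserves the strict inequality and yields $y_k^2 > y_0^{2^{k+1}}$; combining with $y_{k+1} > y_k^2$ gives $y_{k+1} > y_0^{2^{k+1}}$. Setting $k=m$ finishes the proof.

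There is no substantive obstacle here; the lemma is essentially an exercise in unwinding the inequality. The only subtlety worth verifying is that the squaring in the induction is performed on nonnegative quantities, which is why one first records that $y_k > 0$ for all $k \geq 1$ before trying to square the inductive hypothesis.
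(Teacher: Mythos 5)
Your proof is correct and takes the same route as the paper: the paper's proof is the terse one-liner ``the inequality implies $y_i > y_{i-1}^2$, from which the claim follows inductively,'' and you have simply unpacked that into a fully spelled-out induction, being appropriately careful about the nonnegativity of the quantities before squaring.
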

\begin{proof}
The inequality implies that $y_i > y_{i-1}^2$, from which the claim follows inductively.
\end{proof}

We can now establish the hardness of \SPOLY{k}{<}\ and \PPOLY{k}{<}.

\begin{proposition}\label{prop:strictcomp}
    \SPOLY{k}{<}\ is \SR{k}{}-complete, and \PPOLY{k}{<}\ is \PR{k}{}-complete for all $k \geq 1$. We can assume that the degree of the polynomial in the matrix is at most $6$ (at most $4$ if the final quantifier block is universal).
\end{proposition}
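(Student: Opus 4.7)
The plan is to upgrade the completeness of \SPOLY{k}{\#}\ and \PPOLY{k}{\#}\ from Proposition~\ref{prop:QPOLY} by converting the equational matrix $f=0$ (of degree at most $4$) into a single polynomial strict inequality of degree at most $9$. The universal-final case is immediate: Proposition~\ref{prop:QPOLY} already gives a matrix of the form $f^2>0$, equivalently $-f^2<0$, of degree at most $8$. For $k=1$ with existential final block, I would invoke the result $\NPR=\NPR_{<}$ of~\cite{SS17} to supply a strict complete problem. The substantive work is the case $k\ge 2$ with existential final block; here the preceding quantifier block is universal.

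For that main case, I would feed the innermost $\exists\vec x:\, f=0$ into Theorem~\ref{thm:strictify2}, obtaining the conjunctive strict form
\[
(\forall z>0)(\exists y\in(0,z))(\exists\vec x)\ \bigl(|f|<y^C\bigr)\wedge\bigwedge_{j=1}^n\bigl(|yx_j|<1\bigr),
\]
and merge the two new quantifier blocks with the adjacent universal and existential ones. The remaining task is to fold the conjunction, together with the quantifier bounds $z>0$ and $y\in(0,z)$, into a single polynomial strict inequality of degree at most $9$. My plan uses three substitutions/gadgets: (i) substitute $z=1/(1+w^2)$ for a fresh universal $w$, making the restricted $\forall z>0$ unrestricted and turning $y<z$ into the polynomial condition $y(1+w^2)<1$; (ii) substitute $y=s^2$ for a fresh existential $s$, enforcing $y\ge 0$ structurally; and (iii) apply the Lemma~\ref{lem:ydexp} gadget with $m=O(\log C)$ new existentials $y_0,\dots,y_m$ together with the seed $(y_0 s^2 - 2)^2 < 1$, so that the gadget forces $y_m > y_0^{2^m} > s^{-2\cdot 2^m} \ge s^{-2C}$; then the single polynomial condition $y_m f^2 < 1$ suffices to imply $|f|<y^C$. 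Packaging each of the resulting strict inequalities as the square of its slack and summing yields the candidate matrix $\Psi<1$, with $\Psi$ the sum of $(s^2(1+w^2))^2$, $(y_0 s^2-2)^2$, the chain squares, $y_m f^2$, and $\sum_j(s^2 x_j)^2$. The degree-$9$ contribution is exactly $y_m f^2$ ($1+2\cdot 4$), every other summand has degree at most $8$, and the matrix polynomial $\Psi-1$ therefore has degree at most $9$.

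The degree count matches the claim, and the forward direction is straightforward: given $\vec x$ with $f(\vec x)=0$, for any $w$ one chooses $s$ with $s^2$ small enough, then $y_0=2/s^2$ and $y_k=y_{k-1}^2+1$, so all but the smallest slacks vanish. The main obstacle I expect is making the reverse direction rigorous: the summand $y_m f^2$ is not automatically nonnegative, and one has to rule out spurious witnesses in which a very negative $y_m f^2$ masks a violated chain. My plan for this is either to encode $y_m$ via a substitution that structurally forces $y_m\ge 0$ while preserving the degree-$9$ budget, or to replace the pair ``$y_m>0$ and $y_m f^2<1$'' by the single polynomial inequality $y_m(1-y_m f^2)>0$ and then rearrange the chain so the overall degree stays at $9$. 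Once the packaging is pinned down, the equivalence with $\Phi$ reduces to Theorem~\ref{thm:strictify2}: taking $|w|\to\infty$ drives $z=1/(1+w^2)\to 0$ along a witness sequence, and the {\L}ojasiewicz-based argument from the proof of Theorem~\ref{thm:strictify2} recovers an $\vec x$ with $f(\vec x)=0$, establishing $\Phi$.
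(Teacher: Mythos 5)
Your high-level plan matches the paper's for the main case: apply Theorem~\ref{thm:strictify2}, then use the repeated-squaring chain of Lemma~\ref{lem:ydexp} to eliminate the unbounded-degree term $y^C$, and package everything into a single strict polynomial inequality. Your seed gadget $(y_0 s^2 - 2)^2 < 1$, which forces $y_0 \approx 1/s^2 = 1/y > 1$ so that the chain bound $y_m > y_0^{2^m} \geq y^{-2^m}$ is genuinely doubly exponential, is a real ingredient and not cosmetics: the matrix displayed in the paper as~\eqref{eq:withoutH} seeds the chain directly from $y_0 \in (0,z) < 1$, for which Lemma~\ref{lem:ydexp}'s conclusion $y_m > y_0^{2^m}$ is vacuous, and it multiplies $x_j^2$ by $y_m^2$ rather than by $y^2$, which over-constrains the witness $x$. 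Keeping $|yx_j| < 1$ as $\sum_j (s^2 x_j)^2 < 1$, as you do, is the faithful rendering. So on these points your route is the more defensible one.

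The gap you flag yourself, however, is genuine and your sketch does not close it. Because $y_m$ is an unconstrained existential, $y_m f^2$ can be made arbitrarily negative, and then $\Psi < 1$ admits spurious witnesses (for instance $y_1 = \cdots = y_{m-1} = 0$, $y_m$ a moderately negative constant, and $x$ with $f(x)$ large) in which the chain inequalities are violated individually, so Lemma~\ref{lem:ydexp} cannot be invoked to lower-bound $y_m$ at all. Your two candidate repairs---replacing $y_m f^2$ by $y_m^2 f^2$, or by the condition $y_m(1 - y_m f^2) > 0$---each push the matrix degree to $10$, overshooting the claimed bound of $9$. A packaging that makes every left-hand summand structurally nonnegative without the overshoot is the one the paper itself uses in Corollary~\ref{cor:strictify2} and Lemma~\ref{lem:SPOLYk1}: run the doubly-exponential \emph{decrease} chain $400\sum_k (y_k-y_{k-1}^2)^2$ of Lemma~\ref{lem:expsmall}, compare against $y_m^2(1-y_m^2)$ on the right (which forces $0 < y_m^2 < 1$ so that the lemma yields $y_m^2 \leq y_0^{(5/6)2^m}$), put $f^2$ on the left (so $f^2 < y_m^2$ gives $|f| < y_0^C$), and fold each remaining ``$(\cdot)^2 < 1$'' condition---the seed, $|yx_j|<1$, and the bound $y < z$---into the same inequality as ``$(\cdot)^2\, y_m^2 < y_m^2$'', so every summand is a product of squares. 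This stays at degree $8$ and has no sign ambiguity, but you would still need to realize the quantifier bound $y_0 \in (0,\min(z,1/2))$ polynomially and check feasibility of the combined inequality, so it is a direction for a fix rather than a finished argument.
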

\begin{proof}
This follows immediately from Proposition~\ref{prop:QPOLY} if the last quantifier block is universal, so we can assume that the last quantifier block is existential. We will treat the case $k = 1$ separately in Lemma~\ref{lem:SPOLYk1}.

For $k \geq 2$, we follow the proof of Theorem~\ref{thm:strictify}. We start with a matrix of the form $g(x_1, \ldots, x_i) = 0$, where $g$ is a non-negative polynomial of degree at most $4$, and argue that
\[(\exists x_1,\dots,x_n)\ g(x_1,\dots,x_i) = 0\]
is equivalent to
\begin{equation}\label{eq:withH}
\begin{split}
(\forall z>0) & (\exists y\in (0,z)) (\exists x_1,\dots,x_n)\ \\
             &  \Big(\vert g(x_1,\dots,x_i)\vert < y^C\Big)\wedge \bigwedge_{j\in [n]} \Big( \vert y x_j\vert  < 1 \Big),
\end{split}
\end{equation}
for all $x_{n+1}, \ldots, x_i$. We need to reduce the unbounded degree term, $y^C$, and combine the multiple conditions into a single inequality. Let $m = \lceil \log_2(C)\rceil + 1$, so that $2^m > C$. Then Equation~\eqref{eq:withH} is equivalent to
\begin{equation}\label{eq:withoutH}
\begin{split}
(\forall z>0) & (\exists y_0\in (0,z)) (\exists y_1, \ldots, y_m) (\exists x_1,\dots,x_n)\ \\
             &  \sum_{k\in[m]} (y_k - y_{k-1}^2-1)^2 + y_m^2 g(x_1,\dots,x_i) + \sum_{j \in [n]} x_j^2y_m^2 < 1 .
\end{split}
\end{equation}
The matrix has degree at most $6$. As in Theorem~\ref{thm:strictify}, we can then replace $(\exists x_1,\dots,x_n)\ g(x_1,\dots,x_i) = 0$ with the strict condition in Equation~\eqref{eq:withoutH}.
\end{proof}

Corollary~\ref{cor:leq} follows: For a given complexity class $\CC$ we write $\co\CC$ for the class
of all problems whose complement belongs to $\CC$.
By definition, $\co\SR{k}{} = \PR{k}{}$, and vice versa, for all $k \geq 1$.

\begin{proof}[Proof of Corollary~\ref{cor:leq}.]
 We have $\co\PR{k}{<} \subseteq \SR{k}{\leq}$ by the definition of complementation. Since $\PR{k}{<} = \PR{k}{}$, by
 Theorem~\ref{thm:strictify}, we know that $\co\PR{k}{<} = \co\PR{k}{}=\SR{k}{}$, and $\SR{k}{} \subseteq \SR{k}{\leq}$ which implies $\SR{k}{\leq} = \SR{k}{}$. An analogous argument works for $\PR{k}{ \leq}$.
\end{proof}

Jungeblut, Kleist and Miltzow~\cite{JKM22} observed that
$\co\forall\exists_< \RN = \exists \forall_{\leq} \RN$; the corollary allows us to conclude that
both classes equal \SR{2}{}.

\medskip

To complete the proof of Proposition~\ref{prop:strictcomp} we still need to treat the \SR{1}{}-case. Ouaknine and Worrell~\cite[Theorem 7]{OW17}
showed that the bounded variant of \SPOLY{k}{<}, which we will treat in Section~\ref{sec:BU}, is \SR{k}{}-complete for $k = 1$, and write
that it is ``straightforward that the reduction [\,\dots] can be adapted [to the unbounded case]''; we have found that to be a bit tricky, so we decided
to include a proof based on the ideas in the current paper. For this proof, we need a version of Lemma~\ref{lem:ydexp} that
forces exponentially small rather than exponentially large values. At first glance it appears hard to achieve that using just
strict inequalities, but Lemma~\ref{lem:expsmall} does exactly that.
This lemma will also play a central role in Section~\ref{sec:BU} on bounded quantification.

\begin{lemma}\label{lem:expsmall}
Consider the inequality
\begin{equation}\label{eq:ineq}
400 \sum_{k \in [m]} (y_k - y_{k-1}^2)^2 < y_m^2,
\end{equation}
where $y_0 \in (0,1/2)$. The inequality~\eqref{eq:ineq} is feasible
with $y_i \in [-1,1]$ for all $i \in [m]$. For any solution $y_1,\dots,y_m$ of inequality~\eqref{eq:ineq} with $y_m \in [-1,1]$ we have
\begin{equation}\label{eq:yup}
0 < y_m^2 \leq y_0^{(5/6) 2^m}.
\end{equation}
\end{lemma}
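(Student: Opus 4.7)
My plan has two parts: proving feasibility and proving the exponential upper bound. For feasibility, I would take $y_k := y_{k-1}^2$ for $k = 1, \dots, m$: every summand on the left vanishes and $y_m = y_0^{2^m} > 0$, so the inequality reduces to $0 < y_m^2$; moreover each $y_k = y_0^{2^k}$ lies in $(0, 1/2] \subset [-1,1]$.

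For the bound, fix a solution and set $\varepsilon_k := y_k - y_{k-1}^2$. Then $\sum \varepsilon_k^2 < y_m^2/400$ forces $|\varepsilon_k| < |y_m|/20$ for every $k$. I would first establish $y_m > 0$: $y_m = y_{m-1}^2 + \varepsilon_m \geq -|\varepsilon_m| > -|y_m|/20$ rules out $y_m = 0$ (which would make $|\varepsilon_m| < 0$) and $y_m < 0$ (which would give $y_m > y_m/20$, a contradiction). Writing $d := y_m/20 \in (0, 1/20]$, each $|\varepsilon_k| < d$ and $y_{k-1}^2 = y_k - \varepsilon_k \in [y_k - d, y_k + d]$.

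I would next run two inductions. A backward induction from $y_m = 20 d$ shows $y_k \geq 2d$ for every $k \in \{0,\dots,m\}$: if $y_k \geq 2d$ then $y_{k-1}^2 \geq d$, giving $|y_{k-1}| \geq \sqrt{d} \geq 2d$ (since $d \leq 1/4$); the positive sign is forced either by $y_{k-1} = y_{k-2}^2 + \varepsilon_{k-1} > -d$ when $k \geq 2$, or by the hypothesis $y_0 > 0$ when $k = 1$. A forward induction then proves $y_k \leq y_0^{2^k} + 3d$ for $k \geq 1$. The base $y_1 \leq y_0^2 + d$ is immediate, and squaring the hypothesis gives
\[
y_{k+1} \leq (y_0^{2^k} + 3d)^2 + d = y_0^{2^{k+1}} + 6 d\, y_0^{2^k} + 9 d^2 + d,
\]
where the reduction to $y_{k+1} \leq y_0^{2^{k+1}} + 3d$ amounts to $6 y_0^{2^k} + 9d + 1 \leq 3$, which follows from $y_0^{2^k} \leq y_0^2 < 1/4$ (since $k \geq 1$) and $d \leq 1/20$.

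Setting $k = m$ yields $20 d \leq y_0^{2^m} + 3 d$, so $d \leq y_0^{2^m}/17$ and $y_m^2 \leq (400/289)\, y_0^{2^{m+1}}$. The desired inequality $y_m^2 \leq y_0^{(5/6)2^m}$ is equivalent to $400/289 \leq y_0^{-(7/6)2^m}$, which holds for $y_0 < 1/2$ and $m \geq 1$ because the right-hand side is at least $2^{(7/6) \cdot 2} = 2^{7/3} > 5$, while the left-hand side is less than $1.4$. The main obstacle I anticipate is calibrating the slack constant in the forward induction: a tighter invariant such as $y_k \leq y_0^{2^k} + d$ fails because the cross term $2 d\, y_0^{2^k}$ in the square already exceeds the $d$ budget, while the choice $3d$, combined with shifting the base case to $k = 1$ so that $y_0^{2^k} \leq 1/4$, closes the induction with enough room left over for the final exponent comparison.
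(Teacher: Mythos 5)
Your proof is correct, but it follows a genuinely different route from the paper's. The paper first shows $|y_k|\geq |y_m|/2$ for all $k$ (by a contradiction argument: once some $|y_k|$ drops below $|y_m|/2$, all subsequent values stay there, contradicting $|y_m|\geq|y_m|/2$), deduces $y_k>0$, and then switches to logarithmic coordinates $x_k=\log_{y_0} y_k$; from $y_{k-1}^2 \in \big((9/10)y_k,(11/10)y_k\big)$ it obtains $|x_k-2x_{k-1}|\leq 1/6$ with $x_0=1$ and unrolls the linear recurrence to get $x_m\geq(5/6)2^m$. Your argument stays entirely in the original coordinates: the backward induction from $y_m=20d$ plays the same role as the paper's lower bound on $|y_k|$ (establishing that all $y_k$ are positive and bounded below by a multiple of $y_m$), but instead of taking logarithms you run a second, forward induction with the invariant $y_k\leq y_0^{2^k}+3d$, pay close attention to the cross term in $(y_0^{2^k}+3d)^2$, and extract $d\leq y_0^{2^m}/17$ at the end. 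The payoff is that your version avoids any estimate on $\log_{y_0}(9/10)$ and $\log_{y_0}(11/10)$ and in fact yields a slightly stronger bound, $y_m^2 \lesssim y_0^{2\cdot 2^m}$, than the stated $y_0^{(5/6)2^m}$; the cost is that the constant $3d$ in the forward invariant has to be calibrated by hand, exactly the difficulty you flagged, whereas the paper's logarithmic substitution linearizes the recursion and makes the $5/6$ fall out of a clean geometric sum. I checked the arithmetic in your inductive step ($6y_0^{2^k}+9d+1 \leq 6/4 + 9/20 + 1 = 59/20 < 3$) and the final exponent comparison; both are sound.
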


\begin{proof}
To show feasibility take $y_k= y_0^{2^k}$. Note that this makes each term on the left-hand side of~\eqref{eq:ineq} equal to zero
and hence~\eqref{eq:ineq} is satisfied.

To argue~\eqref{eq:yup} suppose we have a solution $y_1,\dots,y_m$ of~\eqref{eq:ineq}.
From~\eqref{eq:ineq} it follows that for every $k\in [m]$ we have
\[
y_{k-1}^2 - \frac{\vert y_m \vert}{20} < y_k < y_{k-1}^2 + \frac{\vert y_m \vert}{20}.
\]
Suppose there exists $k\in\{1,\dots,m\}$ such that $\vert y_k \vert < \vert y_m \vert/2$. Then
\[
y_{k+1} < y_k^2 + \frac{\vert y_m \vert}{20} < \frac{\vert y_m \vert}{2} \left(\frac{\vert y_m \vert}{2} + \frac{1}{10}\right) \leq \frac{\vert y_m \vert}{2},
\]
(in the last inequality we used $\vert y_m \vert \leq 1$), and, similarly,
\[
-y_{k+1} < - y_k^2 + \frac{\vert y_m \vert}{20} \leq \frac{\vert y_m \vert}{2},
\]
yielding $\vert y_{k+1} \vert < \frac{\vert y_m \vert}{2}$. By induction, we conclude that $\vert y_m \vert<\vert y_m \vert/2$, a contradiction . Thus
$\vert y_k \vert\geq \vert y_m \vert/2$ for all $k\in [m]$. Hence,
\begin{equation}\label{ekk1}
y_{k-1}^2  < y_k + \frac{\vert y_m \vert}{20} \leq y_k +  \frac{\vert y_k \vert}{10},
\end{equation}
implying $y_k > 0$ for all $k \in [m]$. This allows us to drop the absolute values from this point on.
We also have
\begin{equation}\label{ekk2}
y_k\left(1-\frac{1}{10}\right) \leq y_k - \frac{y_m}{20} < y_{k-1}^2,
\end{equation}
for all $k\in [m]$. Define $x_k= \log_{y_0} y_k$. From ~\eqref{ekk1} and~\eqref{ekk2} we obtain that for
$k\in [m]$ we have
\[
\vert x_k - 2 x_{k-1}\vert  \leq 1/6;
\]
using $\log_{y_0}(9/10) \leq -\log_2(9/10)\leq 1/6$, $\log_{y_0}(11/10) \geq -\log_2(11/10)\geq -1/6$;
note that $x_0=1$. By induction we now have for $k\in [m]$
\begin{equation*}
x_k \in [  (5/6) y_0^k,  (7/6) y_0^k ].
\end{equation*}
which implies~\eqref{eq:yup}.
\end{proof}

We can now treat the \SR{1}{}-case of Proposition~\ref{prop:strictcomp}.

\begin{lemma}\label{lem:SPOLYk1}
  \SPOLY{1}{<}\ is \SR{1}{}-complete. We can assume that the degree of the polynomial in the matrix is at most $6$.
\end{lemma}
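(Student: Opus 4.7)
The plan is to reduce \FFEAS---which is \SR{1}{}-complete by Proposition~\ref{prop:QPOLY}---to \SPOLY{1}{<} using Lemma~\ref{lem:expsmall} to manufacture an exponentially small quantity via purely existential quantification, and Solern\'o's effective {\L}ojasiewicz inequality (Theorem~\ref{thm:Linq}) to supply the lower bound needed in the infeasible case.

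Given $f\in\ZN[x_1,\dots,x_n]$ of degree at most $4$, I would first introduce a scaling variable $y_0$ and variables $u_j$ standing for $y_0x_j$, so that $\bar g(u,y_0):=y_0^4 f(u/y_0)$ is a polynomial of degree $4$ in $(u,y_0)$ vanishing, for $y_0\neq 0$, exactly when $f(u/y_0)=0$. A Tseitin reduction then introduces variables $z_\beta$ for each monomial of $\bar g$ of degree $\geq 2$, defining relations $R_i:=z_\beta-z_{\beta'}v=0$ of degree $2$, and a linear expression $\tilde{\bar g}(\vec z)$ equal to $\bar g(u,y_0)$ when all $R_i=0$. The polynomial
\[ H(u,y_0,\vec z)\;:=\;\tilde{\bar g}(\vec z)^2+\sum_i R_i^2 \]
is a sum of squares of degree at most $4$, and $H=0$ is feasible iff $\bar g=0$ is. With additional cascade variables $y_1,\dots,y_m$ for $m$ polynomial in the input size, the proposed strict inequality is
\[ 400\sum_{k=1}^m(y_k-y_{k-1}^2)^2+H+y_m^2\sum_j u_j^2+y_m^2\sum_\beta z_\beta^2+4y_m^2y_0^2+y_m^4\;<\;y_m^2. \]
Every summand on the left is non-negative, so in any solution each is separately less than $y_m^2$; this enforces the hypothesis of Lemma~\ref{lem:expsmall}, together with $|y_0|<1/2$, $|y_m|<1$ (hence $y_m\neq 0$), $H<y_m^2$, and $\sum_j u_j^2,\sum_\beta z_\beta^2<1$. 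Each summand has degree at most $4$, comfortably within the bound of $6$.

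The forward direction is routine: given a zero $x^*$ of $f$, pick $y_0\in(0,1/2)$ small enough (depending on $x^*$ and the number of summands), set $u_j=y_0 x_j^*$, $z_\beta$ equal to the consistent monomials, and $y_k=y_0^{2^k}$; then $H=0$, the cascade sum vanishes, and each remaining term fits below $y_m^2/7$. The main work is the reverse direction. First, the cascade inequality alone rules out $y_0=0$, since the inductive argument of Lemma~\ref{lem:expsmall} starting from $|y_0|=0<|y_m|/2$ would force $|y_k|<|y_m|/2$ for all $k$, contradicting $k=m$. Hence $y_0\in(-1/2,0)\cup(0,1/2)$, and Lemma~\ref{lem:expsmall} (applied directly for $y_0>0$, or after one cascade step $y_1\approx y_0^2>0$ for $y_0<0$) yields $y_m^2\leq|y_0|^{c\cdot 2^m}$ for some fixed $c>0$. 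If $f$ has no zero then $\bar g(u,y_0)\neq 0$ whenever $y_0\neq 0$, so on the compact box $|u_j|,|z_\beta|\leq 1$, $|y_0|\leq 1/2$, the implication $H=0\Rightarrow y_0=0$ holds; Theorem~\ref{thm:Linq} then produces effective constants $\alpha>0$ and $L$ with $H\geq\alpha|y_0|^L$ throughout the box. Combining $H<y_m^2\leq|y_0|^{c\cdot 2^m}$ with $H\geq\alpha|y_0|^L$ gives $\alpha<|y_0|^{c\cdot 2^m-L}<(1/2)^{c\cdot 2^m-L}$, a contradiction once $m$ exceeds a polynomially sized threshold in $\log L+\log\log(1/\alpha)$, which Theorem~\ref{thm:Linq} bounds polynomially in $n$ and the bit length of $f$.

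The principal obstacle is the degree bookkeeping: a naive $f(x)^2<(\text{small})$ encoding would land at degree $8$. The two key ideas keeping the degree at most $6$ are the Tseitin reduction of $\bar g=0$ to a sum of squares of degree-$2$ polynomials (so $H$ has degree $4$ rather than $8$), and the non-negative packaging trick, by which all required constraints are combined into a single polynomial inequality with $y_m^2$ serving as a uniform tolerance; a minor subtlety, handled by the cascade argument above, is avoiding an explicit $y_0\neq 0$ term.
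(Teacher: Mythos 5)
Your proof is correct in its essentials, but it takes a genuinely different route from the paper's. The paper reduces from the known \SR{1}{}-complete problem of deciding whether a nonnegative degree-$4$ polynomial $f$ has a root in $[-1,1]^n$ (from~\cite[Lemma~3.9]{S13}), together with the guarantee that any real root can be found in $[-1,1]^n$; it then bounds each $x_k$ to $[-2,2]$ using terms $(x_k(1+z_k^2)-2z_k)^2$, which exploit the fact that $z\mapsto 2z/(1+z^2)$ has range $[-1,1]$, and it invokes an effective positivity bound from~\cite[Corollary~3.7]{SS17} rather than Solern\'{o}'s inequality. Your approach instead starts from general \FFEAS, uses homogenization $\bar g(u,y_0)=y_0^4 f(u/y_0)$ to confine the variables to a box, a Tseitin reduction to keep the degree at $4$, and applies Solern\'{o} directly (with $f_{\text{Sol}}=H$, $g_{\text{Sol}}=y_0$) to obtain the needed separation $H\geq \alpha|y_0|^L$ when $f$ is infeasible. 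What this buys: your construction yields degree $4$ outright, directly matching the improvement the paper relegates to Remark~\ref{rem:SPOLYk1deg4}, and it is self-contained in the sense of not needing the bounded-roots variant of feasibility; the paper's route is shorter because it offloads the root-localization to an existing black box.

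One place where your write-up is imprecise: the claim that Lemma~\ref{lem:expsmall} yields $y_m^2\leq|y_0|^{c\cdot 2^m}$ in the case $y_0<0$ ``after one cascade step.'' Shifting the cascade by one gives $y_m^2\leq y_1^{(5/6)2^{m-1}}$ with $y_1\leq (10/9)y_0^2$, so you really obtain $y_m^2\leq (10/9)^{(5/6)2^{m-1}}\,|y_0|^{(5/6)2^m}$, which is \emph{not} of the form $|y_0|^{c\cdot 2^m}$ for a fixed $c$. The argument still goes through, because $(10/9)^{5/12}\cdot(1/2)^{5/6}<1$ ensures the combined factor is of the form $(\text{const}<1)^{2^m}\cdot 2^{O(L)}$, so the final contradiction $\alpha<(\text{const})^{2^m}\cdot 2^{L}$ is obtained for a polynomial $m$; but you should carry the $(10/9)^{(\cdot)}$ factor explicitly rather than absorbing it into a $c$ that does not exist. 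A second small point: ruling out $y_0=0$ uses $y_m\neq 0$, which comes from the $y_m^4<y_m^2$ term of the full inequality, not from the cascade alone.
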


\begin{proof}
 It is known that testing whether a non-negative polynomial $f: \RN^n \rightarrow \RN$ of degree at most $4$ has a root in $[-1,1]^n$ is \SR{1}{}-complete; moreover, one can guarantee that if $f$ has a root in $\RN^n$, it also has a root in $[-1,1]^n$. (The existence of such an $f$ follows, for example, from the construction in~\cite[Lemma 3.9]{S13} by summing up the squares of the quadratic polynomials $f_i$ created in that result.)

 If $f(x) > 0$ for all $x \in [-1,1]^n$, then, by the properties of $f$, we also have $f(x) > 0$ for all $x \in [-2,2]^n$.
 We can efficiently, in the length of the representation of $f$, find an integer $m$ such that if $f(x) > 0$ for all $x\in [-2,2]^n$, then
 $f(x) > 2^{-2^{m}}$ for all $x \in [-2,2]^n$, see, for example,~\cite[Corollary 3.7]{SS17}.
 Consider the inequality
 \begin{equation}\label{eq:fandx}
 \begin{split}
 400 &\left(\left(y_1-\left(\frac{1}{4}\right)^2\right)^2 +  \sum_{k \in [m]\setminus[1]} (y_k-y_{k-1}^2)^2\right)  \\
                             &+ f(x) + \sum_{k \in [n]} \left(x_k(1+z_k^2)-2z_k\right)^2 < y^2_m(1-y^2_m).\\
 \end{split}
\end{equation}
Suppose the inequality holds for some $(x,y,z) \in \RN^{n+m+n}$. The three summands on the left-hand side are all non-negative,
allowing us to drop the terms $f(x)$ and $\sum_{k \in [n]} (x_k(1+z_k^2)-2z_k)^2$, without affecting the truth of the inequality, to conclude
that $y^2_m < 1$, since otherwise, the right-hand side is negative. But then Lemma~\ref{lem:expsmall} applies
with $y_0 = \frac{1}{4}$ and using $y^2_m(1-y^2_m)< y^2_m$, so we can derive the upper bound $y^2_m \leq  4^{-(5/6) 2^{m}} < 2^{-2^{m}}$.

Inequality~\eqref{eq:fandx} implies that $(x_k(1+z_k^2)-2z_k)^2 < y^2_m$, so $\left(x_k-\frac{2z_k}{1+z^2_k}\right)^2 < y^2_m$; since the range of $z \mapsto \frac{2z}{1+z^2}$ is $[-1,1]$ for $z \in \RN$, and $y^2_m < 1$, we can conclude that $x_k \in [-2,2]$ for every $k \in [n]$.
Since Inequality~\eqref{eq:fandx} also implies that $f(x) < y^2_m \leq 2^{2^{-m}}$, by definition of $m$ we must have that $f(x) = 0$. By the properties of $f$, the existence of an $x \in [-2,2]^n$ with $f(x) = 0$ implies the existence of an $x \in [-1,1]^n$ with $f(x) = 0$.

On the other hand, if $f(x) = 0$ for some $x \in [-1,1]^n$, then Inequality~\eqref{eq:fandx} is feasible: for that $x$, $y_k = 4^{-2^{k}}$
and $z_k = \frac{1-(1-x_k)^{1/2}}{x_k}$, the left-hand side of Inequality~\eqref{eq:fandx} is zero, while the right-hand-side is positive.

In summary, $f$ has a root in $[-1,1]^n$ if and only if Inequality~\eqref{eq:fandx} holds, and the inequality can easily be transformed into the form $g(x) > 0$ for a
polynomial $g$ of degree $6$.
\end{proof}

\begin{remark}\label{rem:SPOLYk1deg4}
  The degree of the polynomials in Lemma~\ref{lem:SPOLYk1} can be lowered to $4$ by replacing the term $\sum_{k \in [n]} \left(x_k(1+z_k^2)-2z_k\right)^2$
  with
  \[\sum_{k \in [n]} \left(x_kz'_k-2z_k\right)^2 +  \left(z'_k-(1+z_k^2)\right)^2, \]
  and adding variables $z'_k$, $k \in [n]$. This still implies that $x_k \in [-2,2]$ and reduces the total degree.
\end{remark}

\subsection{Bounding the Universe}\label{sec:BU}

We say a $\Sigma_k$- or $\Pi_k$-formula is {\em bounded}, and write $\bd\Sigma_{k}$ and $\bd\Pi_{k}$ if the quantifiers range over a closed bounded set; there are several common choices for such a set, e.g.\ the unit simplex; we will instead restrict each variable in a bounded sentence to the interval $[-1,1]$. We write $\bd\SR{k}{}$ and $\bd\PR{k}{}$ for the bounded versions of \SR{k}{}\ and \PR{k}{}, and \bd\SPOLY{k}{}\ and \bd\PPOLY{k}{}\ for the bounded forms of \SPOLY{k}{}\ and \PPOLY{k}{}. We also consider {\em bounded open} variants in which quantification is over the open interval $(-1,1)$. For the bounded open variants we use the prefix ``\bowod'' in place of ``\bdwod''.

It follows from the arguments in~\cite{SS17} that $\SR{1}{} = \bd\SR{1}{}$, that is, $\NPR = \bd\exists\RN$. Ouaknine and Worrell~\cite[Theorem 7]{OW17} showed that $\SR{1}{} = \bd\SR{1}{<}$, D'Costa, Lefaucheux, Neumann, Ouaknine, and Worrell~\cite[Theorem 3]{DCLNOW21} showed that $\SR{2}{\leq} = \bd\SR{2}{\leq}$, and work by Jungeblut, Kleist and Miltzow~\cite[Theorem 12]{JKM22} implies
$\PR{2}{<} \subseteq \bd\PR{2}{}$.

We handle the open case first in Section~\ref{sec:bouni}; Proposition~\ref{prop:boundedopen} in that section immediately implies
Corollary~\ref{cor:bound}. The closed case is treated in Section~\ref{sec:bduni}.

\subsubsection{The Bounded Open Universe}\label{sec:bouni}

We start by showing a version of Theorem~\ref{thm:strictify2} for quantification over bounded open domains.

\begin{theorem}\label{thm:strictify3}
Let $f(x_1,\dots,x_n)\in \RN[x_1,\dots,x_n]$ be of total degree at most $D \geq 2$, and $I = (-1,1)$.
Then the sentence
\[
\Phi = (\exists x_1,\dots,x_n \in I)\ f(x_1,\dots,x_n) = 0,
\]
is equivalent to
\[
\begin{split}
\Psi = (\forall z>0) & (\exists y\in (0,z)) (\exists x_1,\dots,x_n \in I)\ \\
                     & \Big(\vert f(x_1,\dots,x_n)\vert < y^C\Big)\wedge \bigwedge_{j\in [n]} ( y < 1-x^2_j ),
\end{split}
\]
where $C= D^{n^c}$ and $c$ is a fixed integer constant, independent of $f$.
\end{theorem}

\begin{proof}
If $\Phi$ is true, then there are $x_1,\dots,x_n\in I$ such that $f(x_1,\dots,x_n) = 0$. For sufficiently small $y'>0$
we have $y' < 1-x^2_j$ for all $j\in [n]$. For every $z>0$ we take $y=\min\{z/2,y'\}$ to satisfy $\Psi$.

Assume that $\Phi$ is false, so
$\lvert f(x_1,\dots,x_n)\rvert >0$ for all $x_1,\dots,x_n\in I$.
Let $g(x_1, \ldots, x_n) = \prod_{j \in [n]} (1-x_j^2)$. Then the conditions of Theorem~\ref{thm:Linq}, Solern\'{o}'s inequality, apply
since $g$ is zero on the boundary of $I^n$, and we can conclude that
\begin{equation}\label{eq:solstrict3}
\lvert f(x_1, \ldots,x_n)\rvert \geq \alpha \lvert g(x_1, \ldots, x_n) \rvert ^L,
\end{equation}
for all $x_1, \ldots, x_n \in I$, with $L = D^{n^{c'}}$, some fixed integer constant $c'$, independent of $f$,
and a constant $\alpha > 0$ which may depend on $f$ and $g$.

We want to show that $\Psi$ is false. To that end, choose $z \in (0,\alpha)$. Then $y \in (0,\alpha)$.
If $\bigwedge_{j\in [n]} ( y <(1-x_j^2))$ holds for some $x_1, \ldots, x_n \in I$,
then $g(x_1, \ldots, x_n) > y^n$. By~\eqref{eq:solstrict3}, we then have
\[f(x_1, \ldots, x_n) \geq \alpha g(x_1, \ldots,x_n)^L \geq \alpha y^{nL} > y^{nL+1},\]
which contradicts the first clause of $\Psi$, showing that $\Psi$ is false for $C = D^{n^{c}}$ for some $c$ independent of $f$.
\end{proof}

To obtain a degree-bounded version of Theorem~\ref{thm:strictify3} we want to replace the term $y^C$ which occurs in the bound with a fixed-degree computation. In the unbounded case we used Lemma~\ref{lem:ydexp} to achieve that goal, but that result required variables with unbounded values, which are not allowed in bounded quantification. Instead we need to compute doubly-exponentially small values, for which we will use Lemma~\ref{lem:expsmall}.


\begin{corollary}\label{cor:strictify2}
Let $f \in \RN[x_1,\dots,x_n]$ be a polynomial of total degree $d$, and let
$I = (-1,1)$.
Then we can efficiently construct a polynomial $g \in \RN[x_1, \ldots, x_m]$ of degree
at most $\max(4,2d)$ (at most $\max(4,d)$ if $f$ is non-negative) such that
\[
\Phi = (\exists x_1,\dots,x_n \in I)\ f(x_1,\dots,x_n) = 0,
\]
is equivalent to
\[
\Psi = (\forall y\in I) (\exists x_1,\dots,x_m \in I)\ g(x_1, \ldots, x_m, y) > 0.
\]
If $f \in \ZN[x_1,\dots,x_n]$ then we can efficiently find a polynomial $g \in \RN(x_1, \ldots, x_m)$ of degree
at most $\max(4,2d)$ (at most $\max(4,d)$ if $f$ is non-negative)  such that $\Phi$ is equivalent to
\[
\Psi = (\exists x_1,\dots,x_m \in I)\ g(x_1, \ldots, x_m) > 0.
\]
In both cases, the strict condition $g>0$ in $\Psi$ can be replaced with $g \geq 0$.
\end{corollary}
\begin{proof}
If $f$ is {\em not} non-negative, we replace it by $f^2$. So by doubling the degree we can assume that $f$ is non-negative. Consider
\[
f^{*}(x_0,x_1, \ldots,x_n,x'_1,\ldots,x'_n) := f(x_1, \ldots, x_n) + \sum_{i \in [n]} (x_i^2 + {x'_i}^2 - x_{0})^2.
\]
If $f(x_1, \ldots, x_n) = 0$ for some $x_1, \ldots, x_n \in I$, then there are $x_0, x'_1, \ldots, x'_{n} \in I$ for which
$f^{*}(x_0, \ldots, x_n, x'_1, \ldots, x'_n) = 0$: choose $x_0 = \max\{x^2_i: i \in [n]\}$ and let
$x_i' = (x_0-x_i^2)$; moreover, every solution of $f^*(x_0, x_1,\ldots, x_n, x'_1, \ldots, x'_n) = 0$ in $I^{2n+1}$
satisfies $x_0 \geq \max\{x^2_i, {x'_i}^2: i \in [n]\}$.
In other words, by replacing $f$ with $f^*$, we can assume that the first argument of $f$ is an upper bound on the squares of all other arguments for any solution of $f = 0$.

By Theorem~\ref{thm:strictify3}, we have
\[
\Phi = (\exists x_1,\dots,x_n \in I)\ f(x_1,\dots,x_n) = 0,
\]
is equivalent to
\begin{equation*}
\begin{split}
(\forall z>0) & (\exists y\in (0,z)) (\exists x_1,\dots,x_n \in I)\ \\
                     & \Big(\vert f(x_1,\dots,x_n)\vert < y^C\Big)\wedge \bigwedge_{j\in [n]} ( y < 1-x^2_j ),
\end{split}
\end{equation*}
where $C= D^{n^c}$. Using that $x_1 \geq \max\{x^2_i, {x'_i}^2: i \in [n]\}$ in a solution of
$f(x_1, \ldots, x_n) = 0$ we can conclude that $(1-x_1)/2 < 1-x_1^2$ as well as
$(1-x_1)/2  < 1-x_1 \leq 1-x_i^2$. This allows us to replace $\bigwedge_{j\in [n]} ( y < 1-x^2_j )$ with the stricter, but still satisfiable, condition $y < \frac{1-x_1}{2}$, which
gives us a simplified, equivalent formulation:
\begin{equation*}
\begin{split}
(\forall z>0) & (\exists y\in (0,z)) (\exists x_1,\dots,x_n \in I)\ \\
                     & \Big(\vert f(x_1,\dots,x_n)\vert < y^C\Big)\wedge  y < \frac{1-x_1}{2}.
\end{split}
\end{equation*}

We need to replace the quantification domains of $z$ and $y$ with $I$.
To do so, we replace $z$ with $1-z$ and $y$ with $1-y$.
\begin{equation}\label{eq:bdequiv}
\begin{split}
(\forall z \in I) & (\exists y \in I) (\exists x_1,\dots,x_n \in I)\ \\
                     & (1-y) < (1-z) \wedge \Big(\vert f(x_1,\dots,x_n)\vert < (1-y)^C\Big)\wedge  1-y < \frac{1-x_1}{2} .
\end{split}
\end{equation}

Pick the smallest $m \in \NN$ for which $(5/6)2^m > C$. Consider the following sentence:
\begin{equation}\label{eq:bdequiv2}
\begin{split}
(\forall z\in I) (\exists y_0,y_1, \ldots,  & y_m \in I) (\exists x_1,\dots,x_n \in I)   \\
        2(1-y_0) y_m^2  +  & 400\sum_{k \in [m]} (y_k - y_{k-1}^2)^2 +  f(x_1, \ldots, x_n) < \\
        & y_m^2 \frac{(1-z)(1-x_1)}{4}. \\
\end{split}
\end{equation}
If $\Phi$ is true, we can pick $x_1, \ldots, x_n \in I$ for which $f(x_1, \ldots, x_n) = 0$; choose $y_0 \in I$ such that $2(1-y_0) < (1-z)(1-x_1)/4$, and let $y_k = y_{k-1}^2$; these values show that~\eqref{eq:bdequiv2} is true. On the other hand, if~\eqref{eq:bdequiv2} is true, then in particular, $2(1-y_0)y_m^2 < y_m^2(1-z)(1-x_1)/4$ which implies both $1-y_0 < 1-z$ and $1-y_0 < (1-x_1)/2$, verifying
two of the conditions in~\eqref{eq:bdequiv}. We also get $2(1-y_0) < 1$, which implies that $y_0 < 1/2$, so we can use Lemma~\ref{lem:expsmall}
to conclude that $f(x_1,\ldots,x_n)<(1-y)^C$ which was the last condition in~\eqref{eq:bdequiv}
we had to verify. It follows that $\Phi$ is equivalent to~\eqref{eq:bdequiv2}.

We can then define $g$ as $y_m^2(1-z)(1-x_1) - 4\bigl(2(1-y_0^2)y_m^2+ 400 \sum_{k \in [m]} (y_k - y_{k-1}^2)^2 + f(x_1, \ldots, x_n)\bigr)$; rename the variables of $g$ as $x_1, \ldots, x_m$, with a new $m$ to get $g$ as required in the statement of the theorem; $g$ has degree $\max(4,d)$ or $\max(4,2d)$ depending on whether the original $f$ was non-negative.

Consider the case $f \in \ZN[x_1,\dots,x_n]$; as above we can assume that $f(x_1, \ldots, x_n) = 0$ for $x_1, \ldots, x_n \in (-1,1)^n$
implies that $x_1 \geq \max\{x^2_i, {x'_i}^2: i \in [n]\}$. By continuity, the same conclusion still holds in $[-1,1]^n$.

If $f(x_1, \ldots,x_n) \neq 0$ for all $x_1, \ldots, x_n \in (-1,1)$, then  $f(x_1, \ldots,x_n) = 0$ for $x_1, \ldots, x_n \in [-1,1]$
implies that $x^2_i = 1$ for some $i$, and so $x_1 = 1$. Hence, if we let $g = 1-x_1$ we can
apply Solern\'{o}'s inequality, Theorem~\ref{thm:Linq} to $f$ and $g$ on $[-1,1]^n$.
If $f(x_1,\ldots,x_n) \neq 0$ for all $x_1, \ldots, x_n \in (-1,1)$, the inequality tells us that
\begin{equation*}
\begin{split}
\vert f(x_1,\dots,x_n)\vert &\geq \alpha \vert g(x_1,\dots,x_n)\vert ^{L} = \alpha (1-x_1)^L = 2^{-\ell D^{cn^2}} (1-x_1)^{D^{n^c}}\\
        &\geq \left(\frac{1-x_1}{2}\right)^{\ell D^{cn^{c+2}}} ,
\end{split}
\end{equation*}
for all $x_1, \ldots, x_n \in [-1,1]$, where $\ell$ is the number of bits in the longest coefficient of $f$. We pick the smallest $m \in \NN$ so
that $(5/6)2^m > \ell D^{cn^{c+2}}$. Then $\Phi$ is equivalent to
\begin{equation}\label{eq:bdequiv3}
\begin{split}
(\exists y_0 \in I) & (\exists y_1, \ldots, y_m \in I) (\exists x_1,\dots,x_n \in I)   \\
        &  y_0^2 y_m^2 + 400 \sum_{k \in [m]} (y_k - y_{k-1}^2)^2 + f(x_1, \ldots, x_n) < y_m^2\frac{1-x_1}{2}, \\
\end{split}
\end{equation}
since $y_0^2 y_m^2  <  y_m^2\frac{1-x_1}{2}$ implies that $y_0^2 < (1-x_1)/2 < 1/2$. Define $g$ as
$y_m^2(1-x_1) - \bigl(2y_0^2y_m^2+ 400 \sum_{k \in [m]} (y_k - y_{k-1}^2)^2 + (f(x_1, \ldots, x_n))^2\bigr)$ and rename the variables as in the earlier case; the same degree bounds apply.

Neither case depends on $g$ being strictly greater than $0$, the bounds still work as long as $g \geq 0$, so we can replace strict inequality with weak inequality.
\end{proof}

Proposition~\ref{prop:boundedopen} implies Corollary~\ref{cor:bound}.

\begin{proposition}\label{prop:boundedopen}
\bo\SPOLY{k}{<}\ is \SR{k}{}-complete and \bo\PPOLY{k}{<}\ is \PR{k}{}-complete for all $k \geq 1$.
\bo\SPOLY{k}{=}\ is \SR{k}{}-complete for all odd $k \geq 1$, and \bo\PPOLY{k}{=}\ is \PR{k}{}-complete for even $k \geq 1$.
We can assume that the degree of the polynomial in the matrix is at most $8$ (at most $4$ if the final quantifier block is existential).
\end{proposition}

\begin{proof}
    By Proposition~\ref{prop:QPOLY} we can assume that we are starting
    with a sentence of the form
    \[(Q_1 x_1)(Q_2 x_2) \cdots (Q_i x_i)\ \varphi(x_1, \ldots, x_i),\]
    where $\varphi(x_1, \ldots, x_i)$ is of the form  $f(x_1, \ldots, x_i) \relop 0$, $f$ is a non-negative polynomial of degree at most $4$, and $\relop \in \{=,>\}$ depends on which class we are working with. We distinguish two cases based on the final quantifier block.

    {\bfseries\noindent Case $Q_i = \exists$.} So the final quantifier block is existential.
    Proposition~\ref{prop:QPOLY} then implies that the formula's matrix $\varphi(x_1, \ldots, x_i)$ has the form $f(x_1, \ldots, x_i) = 0$.
    Using the bijection $z \rightarrow \frac{z}{(1-z^2)}$
    between $(-1,1)$ and $\RN$ we replace each $x_j$ in $f$ with the term $\frac{z_j}{(1-z_j^2)}$.
    To get rid of the rational terms, we multiply the whole expression by $\Pi_{\ell \in [i]} (1-z_{\ell}^2)^4$.
    After cancellation, this leaves us with a sum of the original monomials of $f$ each multiplied
    by a product of terms of the form $(1-z_{\ell}^2)^d$ for some $\ell \in [i]$ and $d \in [4]$.
    Using a Tseitin-style reduction, we can reduce the total degree of this expression to $4$:
    Each term of the sum takes on a value in the range $(-1,1)$, since it is the product
    of terms with this property, so by adding existentially quantified variables
    we can calculate the value of each term as a sequence of products $r = st$ which
    we can encode as $(r-st)^2=0$. The value of $f$ is then the sum of the variables
    representing the terms, so we can build a new function $h$ which is
    the sum of these variables plus all the terms of the form $(r-st)^2$ we needed.
    It follows that the original sentence
    \[(Q_1 x_1)(Q_2 x_2) \cdots (Q_i x_i)\ f(x_1, \ldots, x_i) = 0,\]
    is equivalent to
    \begin{equation}\label{eq:prepstrict}
    \begin{split}
        (Q_1 z_1 &\in (-1,1))(Q_2 z_2 \in (-1,1)) \cdots (Q_i z_i \in (-1,1)) \\
                & (\exists z_{i+1}, \ldots, z_m \in (-1,1))\  h(z_1, \ldots, z_i, z_{i+1}, \ldots, z_{m}) = 0.
    \end{split}
    \end{equation}
    This implies that \bo\SPOLY{k}{=} is \SR{k}{}-complete for odd $k \geq 1$ and \bo\PPOLY{k}{=} is \PR{k}{}-complete for even $k \geq 1$; moreover, the polynomial $h$ is non-negative and has total degree at most $4$.

    We then apply Corollary~\ref{cor:strictify2} to the function resulting from $h$ in~\eqref{eq:prepstrict} by restricting it to the final block of existentially quantified variables, $z_{i+1}, \ldots, z_m$. If those are all variables, so $k = 1$, then we apply the second, integer, version of
    Corollary~\ref{cor:strictify2} to get an existentially quantified inequality over a bounded open domain. If there is quantifier alternation, so $k>1$, we apply the first, real, version of Corollary~\ref{cor:strictify2}. As in Theorem~\ref{thm:strictify}, we can absorb the $\forall\exists$ quantifiers in the final two quantifier blocks.

    This shows that \bo\SPOLY{k}{<} is \SR{k}{}-complete for odd $k \geq 1$ and \bo\PPOLY{k}{<} is \PR{k}{}-complete for even $k \geq 1$.

    \smallskip
    {\bfseries\noindent Case $Q_i = \forall$.} The final quantifier block is universal. Let $\Theta$ denote the sentence in this case. It follows from the proof of Proposition~\ref{prop:QPOLY} that the matrix $\varphi(x_1, \ldots, x_i)$ of $\Theta$ has the form $f(x_1, \ldots, x_i) >0$, where $f$ is a non-negative polynomial of degree at most $4$. By negating $\Theta$ we can then proceed as in the first case, since we can negate $f(x_1, \ldots, x_i) >0$ as $f(x_1, \ldots, x_i) = 0$, because $f$ is non-negative.
    Following the first case up to and excluding the last step (the application of Corollary~\ref{cor:strictify2}), we build a sentence equivalent to $\neg \Theta$, with the same quantifier structure as $\neg \Theta$ and
    a matrix of the form $h(z_1, \ldots, z_i, z_{i+1}, \ldots, z_{m}) = 0$. Negating that sentence, and negating the matrix condition as $h^2(z_1, \ldots, z_i, z_{i+1}, \ldots, z_{m}) > 0$ then gives us a strict, bounded sentence with the same quantifier structure as the original sentence $\Theta$.
\end{proof}

\subsubsection{The Bounded Closed Universe}\label{sec:bduni}

As mentioned earlier, the only results on bounded closed universes are $\SR{1}{} = \bd\SR{1}{}$~\cite{SS17},
$\SR{1}{} = \bd\SR{1}{<}$~\cite[Theorem 7]{OW17},
$\SR{2}{\leq} = \bd\SR{2}{\leq}$~\cite{DCLNOW21} and $\PR{2}{<} \subseteq \bd\PR{2}{}$~\cite{JKM22}.

Our results from Section~\ref{sec:bouni} allow us to locate \bd\SR{k}{}\ and \bd\PR{k}{}.

\begin{proposition}\label{prop:bcunb}
 $\bd\SR{k}{} = \SR{k}{}$ and $\bd\PR{k}{} = \PR{k}{}$ for all $k \geq 1$.
\end{proposition}
\begin{proof}
Let $\varphi$ be a formula of type \bo\SPOLY{k}{<}, that is, $\varphi$ is of the form
\begin{equation}\label{eq:bcunb1}
(Q_1 x_1 \in (-1,1))\cdots(Q_i x_i \in  (-1,1))\ f(x_1, \ldots,x_i)>0
\end{equation}
with $k$ quantifier blocks and the first block existential.
By Proposition~\ref{prop:boundedopen}, deciding the truth of such a formula is \SR{k}{}-complete.
Let $U = \{j: Q_j = \forall\}$ and $E = \{j:Q_j = \exists\}$. We can then rewrite~\eqref{eq:bcunb1} equivalently as
\begin{equation}\label{eq:bcunb2}
\begin{split}
(Q_1 x_1 \in [-1,1])\cdots& (Q_i x_i \in  [-1,1])\ \\
& \bigvee_{j \in U} x^2_j = 1 \vee \bigwedge_{j \in E}
 x^2_j < 1 \wedge  f(x_1, \ldots,x_i)>0,
 \end{split}
\end{equation}
which is a formula in \bd\SR{k}{}, showing that $\SR{k}{} \subseteq \bd\SR{k}{}$. Since \bd\SR{k}{} is a special case of \SR{k}{}, it follows that
 $\bd\SR{k}{} = \SR{k}{}$. The second claim $\bd\PR{k}{} = \PR{k}{}$ follows similarly.
\end{proof}

As soon as we restrict the signature, to $<$ or $=$, say, the bounded-closed case becomes much harder to handle than
the bounded-open case. The methods developed for the first level do not seem to generalize easily to this setting, and new ideas seem required. So we cannot locate the bounded-closed classes like \bd\SR{k}{=}\
and \bd\SR{k}{<}, but we can show that the bounded-closed classes with restricted signatures are the same.

\begin{proposition}\label{prop:bceq}
 $\bd\SR{k}{=} = \bd\SR{k}{<}$ and \bd\SPOLY{k}{=}\ and \bd\SPOLY{k}{<}\ are complete for these classes for odd $k \geq 1$.
 $\bd\PR{k}{=} = \bd\PR{k}{<}$ and \bd\PPOLY{k}{=}\ and \bd\PPOLY{k}{<}\ are complete for these classes for even $k \geq 2$. All polynomials are of degree at most $4$.
\end{proposition}

The proof of the proposition requires several ingredients; the first is a restricted version of Tseitin's Lemma~\ref{lem:Tseitin} for the bounded closed case.

\begin{lemma}\label{lem:Tseitinbd}
   From a Boolean formula $\varphi(x_1, \ldots, x_i)$ in real variables $x_1, \ldots, x_i$ without negation and only allowing comparison operators in $\{=,\leq\}$ we can efficiently construct a
   family of quadratic polynomials $f_{\ell}:\RN^{i+j}\rightarrow \RN$, $\ell \in [k]$ such that
   \[\varphi(x_1, \ldots, x_i) \Leftrightarrow (\exists y_1, \ldots, y_j, \in [-1,1]) \bigwedge_{\ell \in [k]} f_{\ell} (x_1, \ldots, x_i, y_1, \ldots, y_j) = 0\]
  for all $x_1, \ldots, x_i \in [-1,1]$; each $f_{\ell}$ depends on at most four variables. 
  Defining $f = \sum_{\ell \in [k]}f^2_{\ell}$, we obtain a non-negative polynomial $f: \RN^{i+j}\rightarrow \RN$ of degree at most $4$
  such that
  \[\varphi(x_1, \ldots, x_i) \Leftrightarrow (\exists y_1, \ldots, y_j \in [-1,1])\ f(x_1, \ldots, x_i, y_1, \ldots, y_j) = 0\]
  for all $x_1, \ldots, x_i \in [-1,1]$.
\end{lemma}

Our proof is similar to the proof of~\cite[Lemma 3.2]{SS17} which is based on Tseitin's original work~\cite{T83}, but it require some additional considerations to achieve the bounded domains.

\begin{proof}
 We can assume that the only integer constants in $\varphi$ are $0$ and $1$, see the earlier discussion on encodings. It is then easy to see that each term or subterm occurring in $\varphi$ can have value at most $2^{\ell}$, where $\ell$ is the bitlength of $\varphi$: this is true for the atomic terms $0$, $1$, and $x_i$, and remains true inductively, since there are fewer than $\ell$ arithmetical operations performed on these values.

 We start by creating new variables $z_1, \ldots, z_{\ell}$ and adding the constraints
 $z_1+z_1 = 1$, $z_i = z_{i-1}\cdot z_1$ for $i \in [\ell]$ so that $z_{\ell}$ evaluates to $2^{-\ell}$.

 For every subformula $\gamma$ of $\varphi$ we create a new variable $y_{\gamma}$ (and sometimes $y'_{\gamma}$); for every subterm $t$ occurring in an atomic formula we create a new variable $y_t$. We will create a collection of quadratic constraints in variables $y_{\gamma},y'_{\gamma}$ and $y_t$ such that if all constraints are satisfied, then $y_t = 2^{-\ell} \cdot t$ for all subterms $t$ and if $y_{\varphi} = 1$, then $\varphi$ is true.

 We add the following constraints. For every subformula $\gamma$ we add $y_{\gamma}y_{\gamma} - y_{\gamma} = 0$ (forcing $y_{\gamma} \in \{0,1\}$. If $\gamma$ is
 \begin{itemize}
  \item $\alpha\wedge \beta$, we add $y_{\gamma} = y_{\alpha}\cdot y_{\beta}$,
  \item $\alpha\vee \beta$, we add $y_{\gamma} = 1-(1-y_{\alpha})\cdot (1-y_{\beta})$.
 \end{itemize}
 For every subterm $t$ of the form
  \begin{itemize}
  \item $t = 0$, we add $y_t = 0$,
  \item $t = 1$, we add $y_t = z_{L}$,
  \item $t = x_i$, we add $y_t = x_i \cdot z_{L}$,
  \item $t = u+v$, we add $y_t = y_u+y_v$,
  \item $t = u-v$, we add $y_t = y_u-y_v$,
  \item $t = u\cdot v$, we add $y_t\cdot z_{L} = y_u \cdot y_v$,
 \end{itemize}
 We finally need to deal with atomic formulas, which are just term comparisons. If $\gamma$ is of the form
  \begin{itemize}
  \item $t = 0$, we add $y_{\gamma}\cdot y_t = 0$,
  \item $t \geq 0$, we add $y_{\gamma} \cdot (y_t- y'_{\gamma}) = 0$ and $y'_{\gamma} = y''_{\gamma}\cdot y''_{\gamma}$,
 \end{itemize}

 If $\varphi$ is true, we let $y_t = t\cdot z_{\ell}$ and $y_{\gamma} = 1$ if $\gamma$ is true and $0$ otherwise. We claim that this satisfies all constraints. This is immediate for the Boolean constraints. For the subterm constraints, note that $t\cdot z_{\ell} < 1$, by the upper bound on $t$ above, so $y_t \in [-1,1]$ for all terms $t \in T$.
 It remains to check the comparison constraints: for $t = 0$ we can let $y_{\gamma} = 1$, for $t \geq 0$ we can let $y'_{\gamma} = t$, $y''_{\gamma} = \sqrt{t}$ allowing $y_{\gamma} = 1$.

 For the other direction, suppose all constraints are satisfied. We claim that letting $x_i = y_{x_i}/z_{\ell}$ makes $\varphi$ true. First of all, we note that the subterm constraints ensure that $y_t = t \cdot z_{\ell}$ for all subterms $t$ of $\varphi$, and the Boolean constraints ensure that the logical connectives get evaluated correctly, so we are left with checking the atomic comparisons. Since $\varphi$ is monotone, it is sufficient to show that $y_{\gamma} = 1$ implies that $\gamma$ is true: it $y_{\gamma} = 0$, it cannot make a formula containing $\gamma$ as a subformula true, so it does not matter whether $\gamma$ is actually true.

 Suppose $\gamma$ is the formula $t = 0$. If $y_{\gamma} = 1$, then  $y_{\gamma}\cdot y_t = 0$ implies
 that $y_t = 0$, so $y_{\gamma} = 1$ is justified.
 In the next case, $\gamma$ is $t \geq 0$ and we proceed similarly. If $y_{\gamma} = 1$, then $(y_t- y'_{\gamma}) = 0$,
 so $y_t = y'_{\gamma}$ and $y'_{\gamma} = y''_{\gamma}\cdot y''_{\gamma}$, so $y_t$, and therefore $t(x_1, \ldots, x_i)$, is at least $0$.
\end{proof}

At the heart of Proposition~\ref{prop:bceq} is the following translation lemma.

\begin{lemma}\label{lem:bcineqtoeq} The following are true.
\begin{itemize}
\itemi A formula of type \bd\SR{k}{<}, where $k$ is odd, can efficiently be transformed into an equivalent formula in \bd\SR{k}{=}. A formula of type \bd\PR{k}{<}, where $k$ is even, can efficiently be transformed into an equivalent formula in \bd\PR{k}{=}.
\itemii A formula of type \bd\SPOLY{k}{=}, where $k$ is odd, can efficiently be transformed into an equivalent formula of type \bd\SPOLY{k}{<}. A formula of type \bd\PPOLY{k}{=}, where $k$ is even, can efficiently be transformed into an equivalent formula of type \bd\PPOLY{k}{<}. In both cases it is true that if the original polynomial is non-negative, the degree does not increase; otherwise it doubles.
\end{itemize}
\end{lemma}

We can now complete the proof of the main result in this section.

\begin{proof}[Proof of Proposition~\ref{prop:bceq}.]
Part $(i)$ of Lemma~\ref{lem:bcineqtoeq} implies that $\bd\SR{k}{<} \subseteq \bd\SR{k}{=}$ for odd $k\geq 1$.
Lemma~\ref{lem:Tseitinbd} shows that \bd\SPOLY{k}{=}\
is complete for \bd\SR{k}{=}. So part $(ii)$ of Lemma~\ref{lem:bcineqtoeq} then implies that
formulas of type \bd\SPOLY{k}{<}\ are hard for \bd\SR{k}{=}, and that $\bd\SR{k}{=} \subseteq \bd\SR{k}{<}$.
We conclude that $\bd\SR{k}{=} = \bd\SR{k}{<}$ for odd $k \geq 1$.

For even $k \geq 2$ we similarly argue that
$\bd\PR{k}{<} \subseteq \bd\PR{k}{=}$, by Lemma~\ref{lem:bcineqtoeq}$(i)$, and that \bd\SPOLY{k}{=}\ is complete for \bd\PR{k}{=}, by Lemma~\ref{lem:Tseitinbd}, so \bd\PPOLY{k}{<}\ is hard for \bd\PR{k}{=}, by Lemma~\ref{lem:bcineqtoeq}$(ii)$, so
$\bd\PR{k}{=} = \bd\PR{k}{<}$.

In all cases the polynomials the original polynomials obtained from Lemma~\ref{lem:Tseitinbd} are non-negative of degree at most $4$, so that when we invoke Lemma~\ref{lem:bcineqtoeq}$(ii)$, the degree does not increase, though the resulting polynomial will no longer be non-negative.
\end{proof}

Before we can prove Lemma~\ref{lem:bcineqtoeq} we need an intermediate lemma that allows us to replace strict inequalities with equalities in the bounded closed case, without appealing to Solern\'{o}'s inequality (which does not seem to adapt to this setting).

\begin{lemma}\label{lem:Berge}
Given a formula $\varphi$ of the form
\begin{equation}\label{eq:Berge1}
(Q_1 x_1 \in [-1,1]^{n_1})\cdots(Q_k x_k \in  [-1,1]^{n_k})\ f(x_1, \ldots,x_k)>0,
\end{equation}
where $f$ is a continuous function; then there is an $\alpha \in \RN_{>0}$ such that~\eqref{eq:Berge1} is equivalent to
\begin{equation}\label{eq:Berge2}
(Q_1 x_1 \in [-1,1]^{n_1})\cdots(Q_k x_k \in  [-1,1]^{n_k})\ f(x_1, \ldots,x_k) \geq \alpha,
\end{equation}
and also equivalent to
\begin{equation}\label{eq:Berge2b}
(Q_1 x_1 \in [-1,1]^{n_1})\cdots(Q_k x_k \in  [-1,1]^{n_k})\ f(x_1, \ldots,x_k) > \alpha.
\end{equation}
If $f(x_1, \ldots, x_k) = y$ is equivalent to a quantifier-free Boolean formula $\psi(x_1,\ldots,x_k,y)$, then $\alpha$ can be chosen to be $2^{-\ell^{{(cn)^k}}}$, where $n = \sum_{i=1}^k n_i$, $\ell$ is the bitlength of $\psi$ and $c$ is a universal constant independent of $f$, $n$ and $k$.
\end{lemma}
\begin{proof}
Let $\varphi$ be a formula of the form~\eqref{eq:Berge1}.
If $Q_i$ is universal, let $\opt^i = \inf$ , and $\opt^i = \sup$ otherwise. With that notation we introduce
the family of functions
\[g_i(x_1, \ldots, x_{i-1}) := \sideset{}{^i}\opt_{x_i \in [-1,1]^{n_i}} \cdots \sideset{}{^k}\opt_{x_k \in [-1,1]^{n_k}} f(x_1, \ldots,x_k),\]
and observe that $\alpha' := g_0$ is a constant satisfying
\begin{equation}\label{eq:Berge3}
(Q_1 x_1 \in [-1,1]^{n_1})\cdots(Q_k x_k \in  [-1,1]^{n_k})\ f(x_1, \ldots,x_k) \geq \alpha'.
\end{equation}
Since $f$ is continuous and all domains are compact, we can apply Berge's maximum theorem~\cite[Section 6.3]{B63} to show that the functions $g_k,g_{k-1}, \ldots, g_0$ are all continuous; it follows that if we define $h_i$ like $g_i$ but with $\min$ instead of $\inf$ and $\max$ instead of $\sup$, we have $h_i = g_i$ for all $i \in [k]$. In particular, $\alpha' = h_0$ so the value of $\alpha'$ is achieved over the compact domains of the quantifiers, namely, we have
\begin{equation}\label{eq:Berge4}
(Q_1 x_1 \in [-1,1]^{n_1})\cdots(Q_k x_k \in  [-1,1]^{n_k})\ f(x_1, \ldots,x_k) = \alpha'.
\end{equation}
If $\varphi$ is true, then~\eqref{eq:Berge1} and~\eqref{eq:Berge4} imply that $\alpha' > 0$, so~\eqref{eq:Berge1} is equivalent to
\begin{equation}\label{eq:Berge5}
(Q_1 x_1 \in [-1,1]^{n_1})\cdots(Q_k x_k \in  [-1,1]^{n_k})\ f(x_1, \ldots,x_k) \geq \lvert\alpha'\rvert.
\end{equation}
and also equivalent to
\begin{equation}\label{eq:Berge5b}
(Q_1 x_1 \in [-1,1]^{n_1})\cdots(Q_k x_k \in  [-1,1]^{n_k})\ f(x_1, \ldots,x_k) > \lvert\alpha'\rvert/2,
\end{equation}
where $>$ can also be replaced with $\geq$.
We can then let $\alpha = \lvert\alpha'\rvert/2$ if $\varphi$ is true, and $1$ otherwise. This proves the first part of the statement of the lemma.

For the second part, define the set $A$ as
\begin{equation}\label{eq:Berge6}
 \{y \in \RN: (Q_1 x_1 \in [-1,1]^{n_1})\cdots(Q_k x_k \in  [-1,1]^{n_k})\ f(x_1, \ldots,x_k) \geq y\},
\end{equation}
then $\alpha' \in A$ by Equation~\eqref{eq:Berge4}. Recall that $f$ can be defined using a quantifier-free formula $\psi$ in the theory of the reals, so that~\eqref{eq:Berge6} can be expressed in the theory of the reals with $k$ quantifier blocks. Using a technique of B\"{u}rgisser and Cucker~\cite[Proof of Theorem 9.2]{BC09} or Jungeblut, Kleist, Miltzow~\cite[Lemma 3.4]{JKM22}, where we use that $f$ can be defined using a formula $\psi$, there is a strict lower bound $\alpha$ on the supremum of $A$, and therefore $\alpha'$, of the form $2^{-\ell^{{(cn)^k}}}$ for some universal constant $c > 0$, independent of $f$, $n$ and $k$, where $\ell$ is the  bitlength of $f$. We conclude that $\varphi$ is equivalent to
\begin{equation}\label{eq:Berge7}
(Q_1 x_1 \in [-1,1]^{n_1})\cdots(Q_k x_k \in  [-1,1]^{n_k})\ f(x_1, \ldots,x_k)-2^{-\ell^{{(cn)^k}}}\geq 0,
\end{equation}
which proves the second part of the lemma.
\end{proof}

\begin{proof}[Proof of Lemma~\ref{lem:bcineqtoeq}.]
We prove the two parts separately.

{\bfseries\noindent Part $(i)$.} Let $\varphi$ be a formula in \bd\SR{k}{<} where $k$ is odd, or in \bd\PR{k}{<}\ where $k$ is even. In both cases, the final quantifier block is existential. Since $\varphi$ does not contain negation, it consists of conjunctions and disjunctions only. Notice that if $g$ and $h$ are two functions, then $g>0 \wedge h>0$ is equivalent to $\min(g,h)>0$ and $g>0 \vee h>0$ is equivalent to $\max(g,h)>0$. We can therefore replace the matrix of $\varphi$ with a single condition of the form $f>0$, where $f$ is a continuous function (not necessarily a polynomial), and $f$ can be defined using a quantifier-free formula $\psi$. We can then apply Lemma~\ref{lem:Berge} to obtain that $\varphi$ is equivalent to
\begin{equation}\label{eq:bcineqtoeq1}
(Q_1 x_1 \in [-1,1]^{n_1})\cdots(Q_k x_k \in  [-1,1]^{n_k})\ f(x_1, \ldots,x_k) \geq 2^{-\ell^{{(cn)^k}}}.
\end{equation}
We want to replace the condition $f(x_1, \ldots,x_k)-2^{-\ell^{{(cn)^k}}}\geq 0$ with an equality, but since we only have variables in $[-1,1]$ we need to bound the value of the left-hand side of this inequality. We do so by scaling the value of $f(x_1, \ldots,x_k)$, that is, we replace the condition with
\[(\exists \lambda_f,\delta_f \in [-1,1]^2)\ \lambda_f^2f(x_1, \ldots,x_k)-2^{-\ell^{{(cn)^k}}}=\delta_f^2,\]
which is equivalent to $f(x_1, \ldots,x_k)-2^{-\ell^{{(cn)^k}}}\geq 0$. Hence, $\varphi$ is equivalent to
\begin{equation}\label{eq:bcineqtoeq2}
\begin{split}
(Q_1 x_1 \in [-1,1]^{n_1}) & \cdots(Q_k x_k \in  [-1,1]^{n_k}) \\
                           &(\exists\lambda_f,\delta_f \in [-1,1]^2)\ \lambda_f^2f(x_1, \ldots,x_k)-2^{-\ell^{{(cn)^k}}}=\delta_f^2.
\end{split}
\end{equation}
With additional existential variables we can replace the expression $2^{-\ell^{{(cn)^k}}}$ with a (computed) variable $\varepsilon$; this nearly places~\eqref{eq:bcineqtoeq2} into \bd\SR{k}{=}\ (for odd $k$) or \bd\PR{k}{=}\ (for even $k$), except that $f$ is a function defined using minima and maxima. However, we can now unpack the definition of $f$. E.g.\ if $f$ was $\min(g,h)$ we replace
\begin{equation}\label{eq:bcineqtoeq3}
 \lambda_f^2f(x_1, \ldots,x_k)-\varepsilon =\delta_f^2
\end{equation}
with
\[
\left(\lambda_g^2f(x_1, \ldots,x_k)-\varepsilon =\delta_g^2\right) \wedge \left(\lambda_h^2f(x_1, \ldots,x_k)-\varepsilon =\delta_h^2\right),\]
and if $f = \max(g,h)$ we replace~\eqref{eq:bcineqtoeq3} with
\[
\left(\lambda_g^2f(x_1, \ldots,x_k)-\varepsilon =\delta_g^2\right) \vee \left(\lambda_h^2f(x_1, \ldots,x_k)-\varepsilon =\delta_h^2\right)\]
 and so on recursively, adding the $\lambda$ and $\delta$ variables to the final block of existentially quantified variables. This turns the matrix of $\varphi$ back into its original form except that each strict inequality $f>0$ has been replaced with an equality of the form~\eqref{eq:bcineqtoeq3}.

\smallskip

{\bfseries\noindent Part $(ii)$.}
Let $\varphi$ be a formula in \bd\SPOLY{k}{=}\ for odd $k$ or in \bd\PPOLY{k}{=}\ for even $k$. So $\varphi$ is of the form
\begin{equation}\label{eq:bcineqtoeq6}
(Q'_1 x_1 \in [-1,1]^{n_1})\cdots(Q'_k x_k \in  [-1,1]^{n_k})\ f(x_1, \ldots,x_k)=0,
\end{equation}
where $Q'_k = \exists$. We can assume that $f$ is nonnegative (by replacing it with $f^2$ if necessary; this will lead to doubling the degree).
Negating~\eqref{eq:bcineqtoeq6}, we obtain that $\overline{\varphi}$ is of the form~\eqref{eq:Berge1} in Lemma~\ref{lem:Berge} with $Q_i = \neg Q'_i$ for $i \in [k]$. We can then apply the lemma to conclude that $\varphi$ is equivalent to
\begin{equation}\label{eq:bcineqtoeq7}
(Q'_1 x_1 \in [-1,1]^{n_1})\cdots(Q'_k x_k \in  [-1,1]^{n_k})\ f(x_1, \ldots,x_k)-2^{-\ell^{{(cn)^k}}} < 0,
\end{equation}
using the strict version~\eqref{eq:Berge2b} of the lemma. We want to turn the condition
\begin{equation}\label{eq:bcineqtoeq8a}
f(x_1, \ldots,x_k)-2^{-\ell^{{(cn)^k}}} < 0
\end{equation}
into an explicit polynomial condition, and we can do so by adding new variables. Choose $m$ such that
$(5/6)2^m > \ell^{(cn)^k}$. Consider the condition
\begin{equation}\label{eq:bcineqtoeq8b}
(\exists y_1, \ldots, y_{L} \in [-1,1])\ f(x_1, \ldots,x_k) + 400 \sum_{k \in [m]} (y_k - y_{k-1}^2)^2 < y_m^2.
\end{equation}
If $f(x_1,\ldots,x_k) = 0$, then, by Lemma~\ref{lem:expsmall},~\eqref{eq:bcineqtoeq8b} is true, since $400 \sum_{k \in [m]} (y_k - y_{k-1}^2)^2 < y_m^2$ is feasible. On the other hand, if~\eqref{eq:bcineqtoeq8b} is true, then, since $f$ is nonnegative, we have $400 \sum_{k \in [m]} (y_k - y_{k-1}^2)^2 < y_m^2$ and $0 < y_m^2 <  2^{-\ell^{{(cn)^k}}}$, by choice of $m$.
Then~\eqref{eq:bcineqtoeq8b} implies that $f(x_1, \ldots,x_k) < y_m^2 < 2^{-\ell^{{(cn)^k}}}$, so~\eqref{eq:bcineqtoeq8a} is true. We have shown that~\eqref{eq:bcineqtoeq8a} and~\eqref{eq:bcineqtoeq8b} are equivalent, so we can rewrite ~\eqref{eq:bcineqtoeq7} by replacing $f(x_1, \ldots,x_k)-2^{-\ell^{{(cn)^k}}} < 0$ with~\eqref{eq:bcineqtoeq8b}, merging the additional existential quantifier with $Q_k$. The resulting formula is of the form \bd\SPOLY{k}{<}\ if $k$ is odd, and
\bd\PPOLY{k}{<}\ otherwise.
\end{proof}

The final result in this section we already mentioned earlier: at the second level of the hierarchy, the bounded closed domain classes coincide with the unbounded classes as expected.

\begin{proposition}\label{prop:bc2nd}
\bd\SPOLY{2}{<}\ is \SR{2}{}-complete for polynomials of degree at most $8$,
\bd\SPOLY{2}{\leq}\ is \SR{2}{}-complete for polynomials of degree at most $4$,
and \bd\PPOLY{2}{<}\ and \bd\PPOLY{2}{=}\ are \PR{2}{}-complete for polynomials of degree at most $4$.
\end{proposition}

The proof of Proposition~\ref{prop:bc2nd} is an extension of a result by Jungeblut, Kleist and Miltzow~\cite{JKM24}; we will be using their results in our proof.

\begin{proof}
Let us first show that \bd\PPOLY{2}{<}\ is \PR{2}{}-complete.
By Proposition~\ref{prop:QPOLY} we can start with a sentence of the form
\begin{equation}\label{eq:bc2nd1}
(\forall x \in \RN^{n_1})(\exists y \in \RN^{n_2})\ f(x,y) > 0,
\end{equation}
where $f$ is a polynomial of degree at most $4$. By Lemma~3.6 from~\cite{JKM24} it follows that~\eqref{eq:bc2nd1} is equivalent to
\begin{equation}\label{eq:bc2nd2}
(\forall x \in [-C,C]^{n_1})(\exists y \in \RN^{n_2})\ f(x,y) > 0,
\end{equation}
where $C = 2^{2^{c_1 n^{k_1}}}$ for some integers $c_1, k_1,>0$ and $n = n_1+n_2$.
The equivalence of~\eqref{eq:bc2nd1} and~\eqref{eq:bc2nd2} is not affected if we replace $C$ with any larger value in~\eqref{eq:bc2nd2}. In particular,~\eqref{eq:bc2nd2} is equivalent to
\begin{equation}\label{eq:bc2nd3}
(\forall x \in [-C^3,C^3]^{n_1})(\exists y \in \RN^{n_2})\ f(x,y) > 0.
\end{equation}
We now apply Lemma 3.7 from~\cite{JKM24} to conclude that~\eqref{eq:bc2nd3} is equivalent to
\begin{equation}\label{eq:bc2nd4}
(\forall x \in [-C^3,C^3]^{n_1})(\exists y \in [-D,D]^{n_2})\ f(x,y) > 0,
\end{equation}
where $D = 2^{2^{c_2 n^{k_2}}}$ for some integers $c_2, k_2>0$. Let $C'$ and $D'$ be such that $C \leq C' \leq C^3$ and $D \leq D' \leq D^3$. We claim that~\eqref{eq:bc2nd1} is equivalent to
\begin{equation}\label{eq:bc2nd5}
(\forall x \in [-C',C']^{n_1})(\exists y \in [-D',D']^{n_2})\ f(x,y) > 0.
\end{equation}
For the forward implication we use that~\eqref{eq:bc2nd1} is equivalent to~\eqref{eq:bc2nd4}, which directly implies~\eqref{eq:bc2nd5}. On the other hand, if~\eqref{eq:bc2nd1} is false, then there is an $x \in [-C,C]^{n_1}$ such that $(\forall y \in \RN^{n_2})\ f(x,y) \leq 0$, so~\eqref{eq:bc2nd5} is false, and we have established equivalence
of~\eqref{eq:bc2nd1} and~\eqref{eq:bc2nd5}.

Consider the inequality
 \begin{equation}\label{eq:ymCandD}
 400 \left(\left(z_1-\left(\frac{1}{4}\right)^2\right)^2 +  \sum_{k \in [m]\setminus[1]} (z_k-z_{k-1}^2)^2\right) < z^2_m.\\
\end{equation}
We know that this inequality is feasible with $z_k = 2^{-2^{k+1}}$. Moreover, we can choose $m$ sufficiently large such that
$1/2 \leq 2^{2^{k+1}}z_k \leq 3/2$ for all $k \leq \max(c_1 n^{k_1}, c_2 n^{k_2})$. (Using
Lemma~\ref{lem:expsmall} one can show by induction that $\lvert z_k - 2^{-2^{k+1}}\rvert \leq (1/20) 2^{-(5/6)2^m} 4^k$ for
$k\leq m/4$; this implies the claim.) If we define $a_i = c_i n^{k_i}$, $i \in [2]$,
we then have that $C \leq z_{a_1}^{-1} \leq C^3$, since $z_{a_1}^{-1}$ will be close to $C^2$, and similarly, $D \leq z_{a_2}^{-1} \leq D^3$.

We can then rewrite~\eqref{eq:bc2nd5} equivalently as
\begin{equation}\label{eq:bc2nd6}
\begin{split}
(\forall x  \in [-1,1]^{n_1})& (\exists y \in [-1,1]^{n_2})(\exists z \in [-1,1]^m)\ \\
(z_{a_1}& z_{a_2})^4 f(x/z_{a_1}, y/z_{a_2}) > 0\ \wedge \\
    & 400 \left(\left(z_1-\left(\frac{1}{4}\right)^2\right)^2 +  \sum_{k \in [m]\setminus[1]} (z_k-z_{k-1}^2)^2\right) < z^2_m.
\end{split}
\end{equation}
The formula in~\eqref{eq:bc2nd6} belongs to \bd\PR{2}{<}, showing that this class is \PR{2}{}-complete. Since \bd\PPOLY{2}{<}\ is complete for \bd\PR{2}{<}\ by Proposition~\ref{prop:bceq}, it follows that \bd\PPOLY{2}{<}\ is \PR{2}{}-complete. Since $\bd\PR{2}{<} = \bd\PR{2}{=}$ it follows that \bd\PPOLY{2}{=}\ is also \PR{2}{}-complete. The complement of \bd\PR{2}{<}\ is \bd\SR{2}{\leq}, so \bd\SR{2}{\leq}\ is \SR{2}{}-complete.

The complement of \bd\PR{2}{=}, which is \SR{2}{}-complete, is contained in \bd\SR{2}{<}, so it follows that $\bd\SR{2}{<} = \SR{2}{}$. We can then take a \bd\PR{2}{=}-complete formula of the form $(\forall x \in [-1,1]^{n_1})(\exists y \in [-1,1]^{n_2})\ f(x,y) = 0$, and negate it as $(\exists x \in [-1,1]^{n_1})(\forall y \in [-1,1]^{n_2})\ f^2(x,y) > 0$ to see
that \bd\SPOLY{2}{<}\ is \SR{2}{}-complete with polynomials of degree at most $8$.
\end{proof}

\section{Properties of Semi-Algebraic Sets And Exotic Quantifiers}\label{sec:SAS}

Testing properties of semialgebraic sets, such as being convex or closed, very naturally leads to problems captured by the real hierarchy. The complexity of such problems was first studied in the real polynomial hierarchy based on the BSS-model, going back to the original papers introducing the BSS-model, see~\cite{BSS89,BCSS98}.

B\"{u}rgisser and Cucker~\cite[Section 9]{BC09} showed that many of these results, often with the same proofs, carry over to the real hierarchy, the discrete version of the BSS-hierarchy. We sharpen some of these results using the ideas from the previous section, and add some new results.

Various properties defy classification using standard quantifiers, so B\"{u}rgisser and Cucker made use of three ``exotic'' quantifiers, $H$, $\forall^*$ and $\exists^*$. We will define those quantifiers below, and show that in the real hierarchy the power of these quantifiers can sometimes be explained using standard quantifiers.

\subsection{First-Level Problems}

We start with a simple example: testing the convexity of a semialgebraic set. The complexity of this problem was first determined in a paper by Cucker and Rossell\'{o}~\cite{CR92}.

\begin{theorem}[Cucker, Rossell\'{o}~\cite{CR92}]\label{thm:convex}
  Testing the convexity of a (bounded) semialgebraic set is \coNPR-complete.
\end{theorem}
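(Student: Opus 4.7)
The plan is to establish the \coNPR\ upper bound by a direct syntactic translation of the convexity condition, and matching \coNPR-hardness by a simple reduction from (bounded) semialgebraic emptiness. For the upper bound, given a semialgebraic set $S=\{x\in\RN^n:\varphi(x)\}$ with $\varphi$ quantifier-free, I would observe that $S$ is convex if and only if the $\forall\RN$-sentence
\[
(\forall x,y,t)\ \bigl(\varphi(x)\wedge\varphi(y)\wedge 0\le t\le 1\bigr)\implies \varphi\bigl(tx+(1-t)y\bigr)
\]
holds; in the bounded case one additionally conjoins $x,y\in[-1,1]^n$ to the antecedent. Either way the resulting sentence lies in \coNPR.

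For \coNPR-hardness I would reduce from the emptiness problem for (bounded) semialgebraic sets, which is \coNPR-complete as the complement of \FEAS, and which remains complete in the bounded variant because $\NPR=\bd\NPR$ at the first level of the hierarchy~\cite{SS17}. Given an input $S=\{x\in\RN^n:\varphi(x)\}$, I would build the set
\[
S'=\{(0,\mathbf{0})\}\ \cup\ \bigl\{(\tfrac{1}{2},x):\varphi(x)\bigr\}\ \subseteq\ \RN^{n+1},
\]
defined by the negation-free quantifier-free formula $(t=0\wedge x_1=0\wedge\dots\wedge x_n=0)\vee(2t=1\wedge\varphi(x))$. This set is bounded whenever $S$ is, and a trivial affine rescaling puts it inside $[-1,1]^{n+1}$ if desired.

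The central claim to verify is that $S'$ is convex if and only if $S=\emptyset$. The easy direction: if $S=\emptyset$, then $S'=\{(0,\mathbf{0})\}$ is a single point and hence convex. Conversely, if $s\in S$ then $(0,\mathbf{0})$ and $(\tfrac{1}{2},s)$ both lie in $S'$, but their midpoint $(\tfrac{1}{4},s/2)$ has first coordinate neither $0$ nor $\tfrac{1}{2}$ and is therefore not in $S'$, so convexity fails. I do not foresee any substantial obstacle: the reduction is visibly polynomial-time, boundedness is preserved, the constructed formula stays negation-free, and the only minor care point is the rescaling for the \bd-variant, which is a routine linear change of coordinates that does not affect convexity.
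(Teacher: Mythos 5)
Your proof is correct, and it proves the theorem by a genuinely different reduction than the paper's. For the upper bound both proofs are identical. For hardness, the paper starts from its own tailored complete problem \bo\PPOLY{1}{>} (a sentence $(\forall x\in(-1,1)^n)\,f(x)>0$) and builds $S=\{x\in(-1,1)^n: f(x)>0\}\cup T$, where $T$ is the boundary of $[-1,1]^n$: the point is that a convex set containing the entire cube boundary must contain the whole cube, so $S$ is convex exactly when $f$ is positive on the interior. You instead reduce from emptiness of a (bounded) semialgebraic set and use a two-slice construction $S'=\{(0,\mathbf 0)\}\cup(\{\tfrac12\}\times S)$, observing that a non-degenerate ``gap'' between the two slices immediately breaks convexity whenever $S\neq\emptyset$. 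Your reduction is a bit more elementary and self-contained (it needs only the folklore \coNPR-completeness of emptiness plus $\NPR=\bd\SR{1}{}$ from~\cite{SS17}), while the paper's version dovetails with the specific family of complete problems it develops (Corollary~\ref{cor:eqbounded}), which lets them control additional syntactic parameters such as degree and the form of the matrix. Both constructions produce bounded sets defined by negation-free quantifier-free formulas, so both deliver the stated result; the choice is essentially one of which base problem to plug in.
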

\begin{proof}
 A semialgebraic set $S = \{x: \varphi(x)\}$ is convex if
 \[(\forall x, y)(\forall t\in [0,1])\ \big(\varphi(x) \wedge \varphi(y) \implies \varphi(tx+(1-ty)\big),\]
 so testing convexity belongs to \coNPR.

 By Proposition~\ref{prop:boundedopen}, \bo\PPOLY{1}{>}\ is \coNPR-complete. So we can assume we are given a sentence  $\Phi = (\forall x \in (-1,1)^n)\ f(x) > 0$. We define two semialgebraic sets,
 \[T = \{x \in [-1,1]^n: x_i \in \{-1,1\}\ \mbox{for some $i \in [n]$}\},\]
 the bounding box of $[-1,1]^n$, and with that
 \[S = \{x \in (-1,1)^n: f(x) > 0\} \cup T.\]
 Then $\Phi$ is true if and only if $S$ is convex.
\end{proof}

Let's look at a more challenging problem: how hard is it to test whether a semialgebraic set is unbounded? The problem lies in \PR{2}{}, since $S$ is unbounded if for every $d > 0$ there is an $x \in S$ with $\lVert x\rVert > d$, but is it hard for \PR{2}{}?

The universal quantifier can be eliminated, because it only affects a single variable, and the quantified condition is monotone in that variable. This was proved by B{\"u}rgisser and Cucker~\cite[Theorem 9.2]{BC09}, and rediscovered in~\cite[Lemma 4.1]{SS17} (and used earlier in~\cite{S13}).\footnote{This elimination result relies on the discrete setting.} For B{\"u}rgisser and Cucker this is their first exotic quantifier, the infinitesimal quantifier $H$, where $(H \varepsilon)\ \varphi(\varepsilon,x)$ is defined to mean
\[(\exists \varepsilon' > 0)(\forall \varepsilon \in (0,\varepsilon'))\ \varphi(\varepsilon, x).\]
In plain English, $H \varepsilon$ can be read as ``for all sufficiently small values of $\varepsilon > 0$''. We treat $H$ as a quantifier over a single real number; all other quantifiers can range over tuples.

The unboundedness of a semialgebraic set $S = \{x: \varphi(x)\}$ is then equivalent to
\[(H \varepsilon) (\exists x)\ \varphi(x) \wedge \varepsilon \lVert x \rVert > 1.\]
Since the $H$ quantifier can be eliminated, the problem lies in \NPR.

\begin{theorem}[B{\"u}rgisser, Cucker~{\cite[Proposition 6.4,Corollary 9.4]{BC09}}]\label{thm:unbounded}
  Testing whether a semialgebraic set is unbounded is \NPR-complete.
\end{theorem}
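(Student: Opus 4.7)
My plan is to handle the two directions separately, leaning heavily on the setup already in the excerpt.

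For membership in \NPR, the paragraph preceding the theorem already gives the key characterization: a semialgebraic set $S = \{x : \varphi(x)\}$ is unbounded iff $(H\varepsilon)(\exists x)\ \varphi(x) \wedge \varepsilon \lVert x \rVert > 1$. I would then invoke the $H$-elimination result of B\"urgisser and Cucker~\cite[Theorem 9.2]{BC09} (rediscovered as~\cite[Lemma 4.1]{SS17}) to remove the infinitesimal quantifier. The hypotheses for that elimination are met here because the universal quantifier hidden in $H$ ranges over the single real variable $\varepsilon$, and the body $(\exists x)\ \varphi(x) \wedge \varepsilon \lVert x \rVert > 1$ is monotone in $\varepsilon$: if it holds for some $\varepsilon_0$, then it holds for every $\varepsilon > \varepsilon_0$ with the same witness $x$. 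The output of that elimination is a purely existential formula, placing the problem in \NPR.

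For \NPR-hardness I would reduce from \SPOLY{1}{=}, which is \NPR-complete by Proposition~\ref{prop:QPOLY}: given $f \in \ZN[x_1, \ldots, x_n]$ of degree at most $4$, decide whether $(\exists x)\ f(x) = 0$. From such an $f$, I construct the semialgebraic set
\[ S \;:=\; \{ (x,y) \in \RN^n \times \RN : f(x) = 0 \}, \]
specified by the quantifier-free formula $\psi(x,y) := (f(x) = 0)$, where $y$ is an unconstrained ``dummy'' coordinate. If $f$ has a real root $x^\ast$, then the entire line $\{(x^\ast, y) : y \in \RN\}$ lies in $S$, so $S$ is unbounded. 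Conversely, if $f$ has no real root, then $S = \emptyset$, which is vacuously bounded. The reduction is plainly polynomial time, completing the hardness argument.

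The main subtlety is not the reduction, which is a direct ``dummy coordinate'' padding, but rather the $H$-quantifier elimination used for membership; that step is nontrivial but is an off-the-shelf ingredient cited in the paragraph preceding the theorem, so I would simply import it rather than redo it here.
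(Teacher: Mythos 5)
Your proof is correct and takes essentially the same route as the paper: membership via elimination of the infinitesimal quantifier $H$, and hardness via a ``dummy coordinate'' padding reduction from a degree-$4$ single-polynomial feasibility problem. The only difference is cosmetic: the paper reduces from the \emph{bounded} variant $\bo$\SPOLY{1}{=}\ and sets $S = \{(x,y) \in (-1,1)^n \times \RN : f(x)=0\}$, whereas you reduce from the unbounded \SPOLY{1}{=}\ and drop the interval constraint on $x$. Both reductions are sound --- in your version, nonemptiness of $S$ already forces unboundedness in the $y$-direction, and emptiness of $S$ is equivalent to $f$ having no root --- and your construction has the mild advantage that $S$ is defined by a single polynomial equation, so it is literally algebraic, matching the paper's follow-up remark that the hardness holds even for algebraic sets.
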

\begin{proof}
 We already saw membership. For hardness, we can start with the sentence $\Phi = (\exists x \in (-1,1)^n)\ f(x) = 0$, by
 Proposition~\ref{prop:boundedopen}. Then $S = \{(x,y) \in (-1,1)^n \times \RN: f(x) = 0\}$ is unbounded if and only if $\Phi$ is true.
\end{proof}

Our proof shows the slightly stronger result that testing unboundedness of an algebraic set is \NPR-complete.

We say that a semialgebraic set $S$ has {\em diameter at most $d$} if $(\forall x\in S)(\forall y\in S)\ \lVert x-y \rVert \leq d$. The proof of Theorem~\ref{thm:unbounded} shows that testing whether a semialgebraic (indeed algebraic) set has diameter at most $1$, is \coNPR-complete. So what about radius? We say that $S$ has {\em radius at most $r$} if $(\exists x)(\forall y)\ y \in S \implies \lVert x-y \rVert \leq r$. By definition, the radius problem lies in \SR{2}{}. Somewhat surprisingly, it belongs to \coNPR.

\begin{theorem}
  Testing whether the radius of a semialgebraic set is at most $r$ is \coNPR-complete.
\end{theorem}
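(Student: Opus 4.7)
The plan has two parts: showing membership in $\coNPR$ (which is not immediate from the $\Sigma_2$-definition) and showing $\coNPR$-hardness (which I will obtain by recycling the reduction from Theorem~\ref{thm:unbounded}).

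For membership, I will rephrase ``radius $\leq r$'' as $\bigcap_{y \in S} B(y,r) \neq \emptyset$, where $B(y,r) \subseteq \RN^n$ is a closed Euclidean ball, hence compact and convex. The infinite version of Helly's theorem (valid here because all balls are compact) then asserts that this intersection is non-empty iff every $(n+1)$-element subfamily has non-empty intersection. Contrapositively, $S$ has radius $> r$ iff there are $n+1$ points $y_1,\dots,y_{n+1}$ in $\bar S$ with $\bigcap_i B(y_i,r) = \emptyset$; this condition is open in $(y_1,\dots,y_{n+1})$, so a routine perturbation lets me replace $\bar S$ by $S$. Finally, the emptiness of the intersection of $n+1$ closed balls admits an existential semialgebraic certificate. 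Using the identity $\sum_i \lambda_i \lVert x - y_i \rVert^2 = \lVert x - \bar y\rVert^2 + \sum_i \lambda_i \lVert y_i - \bar y\rVert^2$ with $\bar y = \sum_j \lambda_j y_j$, convex minimax shows that $\bigcap_i B(y_i,r) = \emptyset$ iff there exist $\lambda_1,\dots,\lambda_{n+1} \geq 0$ with $\sum_i \lambda_i = 1$ such that $\sum_i \lambda_i \lVert y_i - \bar y\rVert^2 > r^2$. Combining everything, ``$S$ has radius $> r$'' becomes a purely existential, polynomial-sized statement over the $y_i \in S$ and the multipliers $\lambda_i$, placing the problem in $\NPR$ and its complement in $\coNPR$.

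For hardness I reuse the construction of Theorem~\ref{thm:unbounded}. Given an instance $\Phi = (\exists x \in (-1,1)^n)\ f(x) = 0$ of $\bo\SPOLY{1}{=}$ (which is $\NPR$-complete by Corollary~\ref{cor:eqbounded}), I set $S = \{(x,t) \in (-1,1)^n \times \RN : f(x) = 0\}$, a semialgebraic set. If $\Phi$ holds, $S$ contains a full vertical line and hence has infinite radius, in particular radius $> 1$. If $\Phi$ fails, $S = \emptyset$ and the defining sentence for ``radius $\leq 1$'' is vacuously satisfied. Hence $S$ has radius $\leq 1$ iff $\neg \Phi$, so the radius problem is $\coNPR$-hard.

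The main obstacle is the membership direction: recognizing that the inner $\exists x$ in the $\Sigma_2$-definition can be eliminated by combining Helly (to reduce to $n+1$ witnesses) with convex duality (to replace $\exists x$ by $\exists \lambda$ in the resulting finite system). Once that combination is identified, the remaining steps---the perturbation from $\bar S$ to $S$, the algebraic identity for $\min_x \sum \lambda_i \lVert x - y_i\rVert^2$, and the fact that the final certificate is polynomial-sized with a polynomial inequality as its matrix---are routine.
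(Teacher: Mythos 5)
Your proof is correct and uses essentially the same two ingredients as the paper: Helly's theorem to reduce to $n+1$ witness points in $S$, followed by convex duality to eliminate the inner $\exists x$ in favor of a condition on the multipliers $\lambda$ (the paper carries out the duality via the Lagrangian and Slater's condition, and obtains the equivalent form $\sum_{k,\ell} z_k z_\ell \lVert y_k - y_\ell\rVert^2 \leq 2r^2$, which matches your $\sum_i \lambda_i \lVert y_i - \bar y\rVert^2 \leq r^2$ up to the factor of two); the hardness reduction via the unbounded vertical strip is the same as the paper's reuse of Theorem~\ref{thm:unbounded}. One minor remark: the detour through $\bar S$ and the perturbation step is unnecessary, since the infinite Helly theorem already applies directly to the family of compact closed balls $\{B(y,r): y \in S\}$ without taking any closure.
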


\begin{proof}
Let $S\subseteq\RN^d$ be a semialgebraic set defined by a predicate $\varphi(x)$, that is, $S=\{x\in\RN^d: \varphi(x)\}$. \coNPR-hardness of the problem follows as in the proof of Theorem~\ref{thm:unbounded}. We need to show that the problem lies in \coNPR.

The statement ``$S$ has radius at most $r$'' can be expressed as
\begin{equation}\label{eq:rad1}
(\exists x\in\RN^d) (\forall y\in\RN^d)\ \varphi(y) \implies \lVert x-y\rVert^2 \leq r^2.
\end{equation}
By Helly's theorem,~\eqref{eq:rad1} is equivalent to
\begin{equation}\label{eq:rad2}
(\forall y_1,\dots,y_{d+1}\in\RN^d) (\exists x\in\RN^d)\ \bigwedge_{k\in [d+1]} \varphi(y_k) \implies \bigwedge_{k\in [d+1]} \lVert x-y_k\rVert^2 \leq r^2,
\end{equation}
where $[d+1] = \{1,\ldots, d+1\}$.
Moving the existential quantifier, we obtain the equivalent formula
\begin{equation}\label{eq:rad3}
(\forall y_1,\dots,y_{d+1}\in\RN^d)\ \bigwedge_{k\in [d+1]} \varphi(y_k) \implies (\exists x\in\RN^d)  \bigwedge_{k\in [d+1]} \lVert x-y_k\rVert^2 \leq r^2 .
\end{equation}

The following claim allows us to replace the inner existential quantifier block with a universal quantifier block completing the proof.

\begin{claim}
The following two statements are equivalent:
\begin{equation}\label{eq:rad4}
(\exists x\in\RN^d)  \bigwedge_{k\in [d+1]} \lVert x-y_k\rVert^2 \leq r^2,
\end{equation}
and
\begin{equation}\label{eq:rad6}
(\forall z\in\RN^{d+1}_{\geq 0})\  \sum_{k=1}^{d+1} z_k = 1 \implies \sum_{k=1}^{d+1} \sum_{l=1}^{d+1} z_k z_\ell \lVert y_k-y_{\ell}\rVert^2 \leq 2r^2.
\end{equation}
\end{claim}

We are left with the proof of the claim. The minimum radius of a sphere enclosing all $y_k$ is given by the following program (the value of the program is the square of the radius).
\begin{eqnarray*}
\mbox{minimize} && z \\
\mbox{subject to} && x^T x - 2x^T y_k + y_k^T y_k\leq z,\quad k\in [d+1].
\end{eqnarray*}
Let $w=z-x^Tx$. The program becomes
\begin{eqnarray}
\mbox{minimize} && w + x^T x \label{eq:primal} \\
\mbox{subject to} && w + 2x^T y_k \geq y_k^T y_k,\quad k\in [d+1]. \nonumber
\end{eqnarray}
The Lagrangian for the program is
\[
L(w,x,z) = w + x^T x - \sum_{k=0}^d z_k (w + 2x^T y_k - y_k^T y_k).
\]
We have
\[
\frac{\partial}{\partial x} L(w,x,z) = 2 ( x - \sum_{k=1}^{d+1} z_k y_k )
\]
and
\[
\frac{\partial}{\partial w} L(w,x,z) = 1 - \sum_{k=1}^{d+1} z_k.
\]
It follows that the critical points satisfy
\[
x = \sum_{k=1}^{d+1} z_k y_k\quad\mbox{and}\quad \sum_{k=1}^{d+1} z_k = 1,
\]
and we have
\[
L(w,y,z) = \sum_{k=1}^{d+1} z_k (y_k^T y_k) - \sum_{k=1}^{d+1} \sum_{\ell=1}^{d+1} z_k z_\ell y_k^T y_\ell.
\]

With this, we can write the dual program as
\begin{eqnarray}
\mbox{maximize} && \sum_{k=1}^{d+1} z_k (y_k^T y_k) - \sum_{k=1}^{d+1} \sum_{\ell=1}^{d+1} z_k z_\ell y_k^T y_\ell \label{eq:dual} \\
\mbox{subject to} && \sum_{k=1}^{d+1} z_k = 1, z_k\geq 0\ \mbox{for}\ k\in [d+1]. \nonumber
\end{eqnarray}

The primal program~\eqref{eq:primal} is strictly feasible (take sufficiently large $w$) and hence, by Slater's condition, we have
strong duality, see~\cite[p. 226]{BV04}. This means that~\eqref{eq:primal}
has value at most $r^2$ if every feasible solution of~\eqref{eq:dual} has value at most $r^2$. This is equivalent to
\begin{equation}\label{eq:rad5}
(\forall z\in\RN^{d+1}_{\geq 0})\  \sum_{k=1}^{d+1} z_k = 1 \implies \sum_{k=1}^{d+1} z_k (y_k^T y_k) - \sum_{k=1}^{d+1} \sum_{\ell=1}^{d+1} z_k z_\ell y_k^T y_\ell\leq r^2.
\end{equation}
Since
\begin{equation*}
\sum_{k=1}^{d+1} z_k (y_k^T y_k) - \sum_{k=1}^{d+1} \sum_{\ell=1}^{d+1} z_k z_\ell y_k^T y_\ell = \sum_{k=1}^{d+1} \sum_{\ell=1}^{d+1} z_k z_\ell (y_k^T y_k - y_k^T y_\ell),
\end{equation*}
using $\sum_{\ell=1}^{d+1} z_{\ell} = 1$, we can simplify~\eqref{eq:rad5} to obtain~\eqref{eq:rad6}.
\end{proof}

Another tricky problem is testing whether a semialgebraic set $S$ is dense in another set $T$. Let us assume that $T$ is the ambient space $\RN^n$, we will discuss more general choices of $T$ in the next section. The denseness of $S = \{x : \phi(x)\}$ in $\RN^n$ can be expressed as
\[(\forall x)(\forall \varepsilon > 0)(\exists x')\ \lVert x-x'\rVert < \varepsilon \wedge \phi(x'),\]
which is of the form $\forall\exists$. Very surprisingly, Koiran~\cite{K99} showed that the existential quantifier can be eliminated in this case\footnote{As the reviewer points out, Koiran's result is even stronger, it applies in the BSS-model.}, and the problem lies in \coNPR. Compared to that, proving \coNPR-hardness is relatively easy. The following result was proved for semialgebraic sets defined by algebraic circuits rather than formulas by B{\"u}rgisser and Cucker~\cite[Proposition 5.2, Corollary 9.3$(ii)$]{BC09}. They discuss the difficulty of moving from circuits to formulas for this problem in their Remark 5.4$(ii)$.

\begin{theorem}\label{thm:semidense}
  Testing whether a semialgebraic set $S \subseteq \RN^n$ is dense in $\RN^n$ is \coNPR-complete.
\end{theorem}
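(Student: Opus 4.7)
The plan is to treat membership and \coNPR-hardness separately. For membership, I would start from the naïve $\forall\exists$ encoding
\[
(\forall x)(\forall \varepsilon > 0)(\exists x')\ \|x-x'\| < \varepsilon \wedge \varphi(x')
\]
of ``$S = \{x:\varphi(x)\}$ is dense in $\RN^n$'', which prima facie sits at the second level, in $\PR{2}{}$, and then invoke Koiran's theorem~\cite{K99} that for the ambient space $\RN^n$ the inner existential quantifier can be eliminated. That is the only ingredient of the membership argument I would cite rather than reconstruct.

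For the \coNPR-hardness, I would reduce from the problem \PPOLY{1}{\geq}: given $g \in \ZN[x_1,\ldots,x_n]$, decide $(\forall x)\ g(x) \geq 0$. This is \coNPR-complete because it is exactly the complement of \SPOLY{1}{<}, which Lemma~\ref{lem:SPOLYk1} (the \SR{1}{}-case of Proposition~\ref{prop:strictcomp}) identifies as \NPR-complete. The key feature of this source---and the reason I would pick it over the superficially more natural \PPOLY{1}{<}---is that its negation produces a \emph{strict} counterexample $g(x_0) < 0$, which by continuity occupies an open ball, not a mere isolated zero.

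The reduction itself is a one-liner: after checking $g\not\equiv 0$ (trivial on the coefficient list; if $g\equiv 0$ output any dense set such as $S = \RN^n$), set
\[
S = \{x \in \RN^n : g(x) > 0\},
\]
a semialgebraic set defined by a single strict polynomial inequality whose description has size linear in that of $g$. The verification is clean: if $g \geq 0$ on $\RN^n$, then $\RN^n \setminus S$ equals the real zero set $\{g = 0\}$ of a nonzero polynomial, which is nowhere dense, so $S$ is dense; conversely, if $g(x_0) < 0$ for some $x_0$, continuity yields an open ball around $x_0$ on which $g<0$, and this ball is disjoint from $S$.

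The main pitfall, and the step I expect would most easily sink a careless proof, is the choice of source problem. Starting from $(\forall x)\ g(x) > 0$ (i.e., \PPOLY{1}{<} up to sign) would doom the reduction: a NO-instance can fail \emph{only} at an isolated zero of a nonnegative $g$, and then $\{g > 0\} = \RN^n \setminus \{g = 0\}$ remains dense by the same nowhere-dense-zero-set observation used for YES-instances, so the two cases become indistinguishable. Moving to the $\geq$-formulation, which is legitimized as a complete problem by complementing the \SR{1}{}-completeness of \SPOLY{1}{<}, converts every counterexample into a strict inequality and thereby forces an open ball into the complement of $S$.
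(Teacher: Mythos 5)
Your proof is correct, and it takes a genuinely different and substantially simpler route than the paper's. The paper reduces from $(\forall x \in [-1,1]^n)\ f(x) > 0$ (via Proposition~\ref{prop:bounded}) and then has to work around exactly the pitfall you identify: a nonnegative polynomial can have isolated zeros, which do not destroy density. The paper compensates by invoking Solern\'{o}'s {\L}ojasiewicz inequality (Theorem~\ref{thm:Linq}) to derive an explicit doubly-exponential lower bound $\vert f(x)\vert \geq 2^{-\ell D^{cn^2}}$ on the positivity gap, and then encodes that bound geometrically via an auxiliary set $T$ of repeated-squaring constraints, building $S$ out of a union of $\{(x,y): f(x)\geq y_m,\ y\in T\}$ with the complement of a box. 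You instead shift the \emph{source} problem to the complement of \SPOLY{1}{<}, i.e.\ $(\forall x)\ g(x)\geq 0$, which is \coNPR-complete by Lemma~\ref{lem:SPOLYk1} together with complementation, and then the one-line reduction $S=\{x: g(x)>0\}$ (with a trivial preprocessing check that $g\not\equiv 0$) works because failure of $\geq$ is an open condition. This sidesteps the isolated-zero problem by design rather than by brute force, eliminates the dependence on Solern\'{o}'s theorem, and yields an $S$ that is a single strict polynomial inequality (a basic open semialgebraic set) rather than a union. Both arguments equally handle the remark that $\RN^n$ can be replaced by $[-1,1]^n$ as ambient space (in your version one would start from the bounded $\geq$-form, complete by Proposition~\ref{prop:bounded} and Corollary~\ref{cor:leq}). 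The only thing your simpler reduction loses is the incidental by-product of the paper's heavier construction, namely that the same gadget reappears in other second-level arguments in the paper; but as a proof of Theorem~\ref{thm:semidense} itself, yours is cleaner.
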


As the proof will show, the ambient space $\RN^n$ is not special, it can easily be replaced with other semialgebraic sets such as $[-1,1]^n$.

\begin{proof}
 Membership in \coNPR\ is non-trivial and follows from work by Koiran~\cite{K99}, also see~\cite[Corollary 5.3]{BC09}. To prove hardness, we work with an instance $\Phi = (\forall x \in [-1,1]^n)\ f(x) > 0$, which is \coNPR-complete by~\cite[Theorem 7]{OW17}. If $\Phi$ is true, then Theorem~\ref{thm:Linq}, Solern\'{o}'s inequality, implies, just as in the proof of \ref{cor:strictify2}, that
 \[
\vert f(x)\vert \geq 2^{-\ell D^{cn^2}}
\]
for all $x \in [-1,1]^n$, where $D$ is the degree of $f$ and $\ell$ bounds the number of bits in every coefficient of $f$.
With $m = \lceil \ell D^{cn^2} \rceil+1$ define the semialgebraic set
\begin{equation*}
\begin{split}
 T = \{y \in [-1,1]^m : \; & 1/4 \leq y_1 \leq 1/2,  \\
                        & y_1^2/2 \leq y_2 \leq y_1^2, \ldots, y_{m-1}^2/2 \leq y_m \leq y_{m-1}^2\}.
\end{split}
\end{equation*}
For $y \in T$ we have $0 < 4^{-\ell D^{cn^2}} < y_m < 2^{-\ell D^{cn^2}}$, so $f(x) > 0$ implies that $f(x) > y_m$ for all $y \in T$. We can then define
\begin{equation*}
\begin{split}
 S = & \{x \in [-1,1]^n, y \in [-1,1]^m :  f(x) \geq y_m \wedge y \in T\} \\
     & \cup (\RN^{n+m}-[-1,1]^{n}\times ([-1,1]^m-T)).
\end{split}
\end{equation*}
By definition of $T$ and $S$, the truth of $\Phi$ implies that $S = \RN^{n+m}$, in particular, $S$ is dense in $\RN^{n+m}$. Assume then that $\Phi$ is false, so there is an $x \in [-1,1]^n$ for which $f(x) = 0$. In this case $(x,y) \not\in S$ for any $y \in T$. Since $f$ is continuous, we can pick a $\varepsilon > 0$ such that $\vert f(x')\vert  < 4^{-\ell D^{cn^2}}$ for all $x'$ with $\lVert x-x' \rVert < \varepsilon$.
Since $y_m > 4^{-\ell D^{cn^2}}$, this implies that $(x',y) \not\in S$ for all such $x'$ and all $y \in T$, so $S$ is not dense in $\RN^{n+m}$.
\end{proof}

Following B\"{u}rgisser and Cucker, we can capture the complexity of the denseness problem using the quantifier $\forall^*$, where $(\forall^* x)\ \varphi(x)$ is defined as
\[(\forall x)(\forall \varepsilon > 0)(\exists x')\ \lVert x-x'\rVert < \varepsilon \wedge \varphi(x').\]
The $\forall^*$ quantifier can be read as ``for almost all'', or ``for a dense set of''. The meaning of the dual quantifier $\exists^*$ in $(\exists^* x)\ \varphi(x)$ is defined by
\[(\exists x)(\exists \varepsilon > 0)(\forall x')\ \lVert x-x'\rVert < \varepsilon \implies \varphi(x').\]
So $\forall^* = \neg \exists^* \neg$.

Using $\forall^*$ the denseness of the semialgebraic set $S = \{x : \varphi(x)\}$ can be written simply as $(\forall^* x)\ \phi(x)$. B\"{u}rgisser and Cucker introduce the classes $\BP^0(\exists^*)$ and $\BP^0(\forall^*)$ for problems of this type, but there is one subtle difference to our approach: since they base their approach on the BSS machine model, they need to allow algebraic circuits to define semialgebraic sets whenever the final quantifier block is exotic, that is $\exists^*$, $\forall^*$, or $H$. So the complete problem for $\BP^0(\forall^*)$ is of the form $(\forall^* x)\ C(x)$, where $C$ is an algebraic circuit, see~\cite[Section 2]{BC09}. This leads to stronger upper bound results, but weaker lower bounds. At the first level, the exotic classes collapse to their non-exotic counterparts.

\begin{corollary}[B\"{u}rgisser and Cucker~{\protect\cite[Corollary 9.3$(ii)$]{BC09}}]\label{cor:forallstarforall}
 $\BP^0(\exists^*) = \BP^0(\exists) = \NPR$ and $\BP^0(\forall^*) = \BP^0(\forall) = \coNPR$.
\end{corollary}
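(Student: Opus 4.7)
The plan is to observe that this corollary is a direct consequence of Theorem~\ref{thm:semidense} together with complementation. First I would note that the equalities $\BP^0(\exists) = \NPR$ and $\BP^0(\forall) = \coNPR$ are classical, being essentially B\"{u}rgisser and Cucker's Proposition 8.5 from~\cite{BC06}, which the excerpt already invoked in the Blum-Shub-Smale subsection to argue $\BP(\NP^0_{\RN}) = \NPR$. So only the exotic-quantifier equalities require new argument.

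Next, I would establish $\BP^0(\forall^*) = \coNPR$ using the denseness problem as the complete problem for $\BP^0(\forall^*)$: given an algebraic circuit $C$ defining a semialgebraic set $S \subseteq \RN^n$, decide whether $S$ is dense in $\RN^n$. Membership in \coNPR\ is precisely Koiran's theorem~\cite{K99} (as used in~\cite[Corollary 5.3]{BC09}), which applies to circuit-defined sets. For hardness, Theorem~\ref{thm:semidense} already gives \coNPR-hardness of denseness even for formula-defined sets, and any formula can be converted to a circuit in polynomial time, so hardness transfers to the circuit setting.

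Finally, $\BP^0(\exists^*) = \NPR$ follows by complementation: since $\exists^* = \neg \forall^* \neg$, and since the negation of an algebraic circuit $C$ is again an algebraic circuit, we have $\BP^0(\exists^*) = \co \BP^0(\forall^*) = \co \coNPR = \NPR$.

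The main obstacle is really nothing new at this stage, as the deep content, namely Koiran's quantifier-elimination for denseness and the \coNPR-hardness established in Theorem~\ref{thm:semidense}, has already been assembled; the only subtlety to flag is the formula-to-circuit direction of the hardness reduction, which is routine but ought to be mentioned explicitly since $\BP^0(\forall^*)$ is defined via circuits rather than formulas.
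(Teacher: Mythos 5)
Your proposal is correct, and it fills in a derivation that the paper itself does not spell out: in the paper, Corollary~\ref{cor:forallstarforall} is stated as a result quoted from B\"{u}rgisser and Cucker, with no proof given. Your reconstruction of how the corollary follows is sound and consistent with the surrounding text. The upper bound $\BP^0(\forall^*) \subseteq \BP^0(\forall) = \coNPR$ is exactly Koiran's quantifier elimination for denseness (which, as the paper notes in the footnote to Theorem~\ref{thm:semidense}, does hold in the BSS/circuit setting), and the lower bound $\coNPR \subseteq \BP^0(\forall^*)$ is the hardness half of Theorem~\ref{thm:semidense}, where you correctly flag that one must pass from the formula-defined instance constructed there to a circuit (routine, since any formula is a small circuit). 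The complementation step for $\exists^*$ via $\exists^* = \neg\forall^*\neg$ and closure of circuits under negation is also correct, and the base-level equalities $\BP^0(\exists)=\NPR$, $\BP^0(\forall)=\coNPR$ are indeed the standard facts the paper references from~\cite{BC06}. So what you did differently from the paper is that the paper simply attributes the corollary to~\cite{BC09}, whereas you give a self-contained argument from the paper's own Theorem~\ref{thm:semidense}; the two are compatible, and yours makes explicit the dependency structure (Koiran for membership, Theorem~\ref{thm:semidense} plus formula-to-circuit for hardness, complementation for $\exists^*$) that the paper leaves to the reader.
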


B\"{u}rgisser and Cucker show several additional, non-trivial relationships between the exotic and the standard quantifiers~\cite[Corollary 6.2, Corollary 8.3, Proposition 8.4]{BC09}.

If the final quantifier is $H$, we can easily remove it using quantifier elimination.

\begin{lemma}\label{lem:remfinalH}
    Let $\Phi$ be the sentence $(Q_1x_1)\cdots(Q_ix_i)(H \varepsilon)\ \varphi(x_1, \ldots, x_i,\varepsilon)$, where $Q_i \in \{\exists,\forall,\exists^*,\forall^*,H\}$. We can construct, in polynomial time, an equivalent sentence $\Psi =  (Q_1x_1)\cdots(Q_ix_i)\ \psi(x_1, \ldots, x_i)$.
\end{lemma}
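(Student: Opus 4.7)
The plan is quantifier elimination for the innermost $H$: produce a quantifier-free formula $\psi(x_1,\ldots,x_i)$ such that $(H\varepsilon)\varphi(x_1,\ldots,x_i,\varepsilon) \Leftrightarrow \psi(x_1,\ldots,x_i)$ holds for every assignment of $x_1,\ldots,x_i$. Once we have this, we simply set $\Psi = (Q_1x_1)\cdots(Q_ix_i)\ \psi$, regardless of whether the outer quantifiers are standard or exotic, since the equivalence is pointwise.

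First, I would view $\varphi$ as a negation-free Boolean combination of atoms $A_j = (t_j(x,\varepsilon) \rho_j 0)$. The key observation is that for any fixed $x$, each $t_j(x,\varepsilon)$ is a polynomial in $\varepsilon$ that is either identically zero or has only finitely many real roots; in either case, for all sufficiently small $\varepsilon > 0$, the truth value of $A_j(x,\varepsilon)$ stabilizes. I would express this stable value in the logic by expanding $t_j(x,\varepsilon) = \sum_{k=0}^{d_j} c_{j,k}(x)\,\varepsilon^k$ and defining, for the case $\rho_j\ =\ >$,
\[
A_j^H(x) \;:=\; \bigvee_{k=0}^{d_j}\Big(\bigwedge_{\ell<k} c_{j,\ell}(x) = 0 \;\wedge\; c_{j,k}(x) > 0\Big),
\]
which identifies the sign of the lowest-order nonvanishing coefficient. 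The cases $\rho_j\in\{<,\leq,\geq,=\}$ are handled analogously: for $=$, require all $c_{j,k}(x) = 0$; for $\geq$, take the $>$ disjunction together with the all-zero case; and symmetrically for $<$ and $\leq$. None of these formulas use negation.

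I would then define $\psi(x)$ by substituting $A_j^H(x)$ for $A_j$ throughout $\varphi$, and verify correctness pointwise. Fix $x$ and pick $\varepsilon_0(x) > 0$ smaller than every positive real root of every $t_j(x,\cdot)$ that is not identically zero. For $\varepsilon \in (0,\varepsilon_0(x))$, each atom $A_j(x,\varepsilon)$ takes the truth value $A_j^H(x)$, so $\varphi(x,\varepsilon) = \psi(x)$ throughout that interval. Hence $(H\varepsilon)\varphi(x,\varepsilon)$ holds iff $\psi(x)$ holds: if $\psi(x)$ is true then $\varepsilon_0(x)$ witnesses the $H$-quantifier; if $\psi(x)$ is false then any candidate witness $\varepsilon'$ would force $\varphi(x,\varepsilon) = \mathrm{true}$ on $(0,\min(\varepsilon',\varepsilon_0(x)))$, contradicting stability.

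The main obstacle is producing the coefficients $c_{j,k}(x)$ as arithmetic terms of polynomial size. Because arithmetic terms have degree bounded by their syntactic length, $d_j \leq |t_j|$, so the number of coefficients per atom is polynomial. I would extract them by Lagrange interpolation: evaluate $t_j(x,\ell)$ symbolically at $\ell = 0,1,\ldots,d_j$ to obtain $d_j+1$ arithmetic terms of size $O(|t_j|)$, and recover each $c_{j,k}(x)$ as the fixed rational linear combination of these evaluations prescribed by the inverse Vandermonde matrix, whose entries have polynomial bit-length. Substituting into the $A_j^H$ formulas yields $\psi$ of size polynomial in $|\varphi|$, and the overall construction runs in polynomial time.
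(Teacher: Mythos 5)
Your proof is correct but takes a genuinely different route from the paper's. The paper unfolds the definition of $H$ to get $(\exists\varepsilon'>0)(\forall\varepsilon\in(0,\varepsilon'))\,\varphi$ and then invokes a black-box quantifier-elimination theorem (Solern\'o's Theorem~1 in~\cite{S91}) twice to eliminate the two single-variable quantifiers in polynomial time, then re-attaches the outer prefix. You instead eliminate $H$ by hand: for each atom $t_j(x,\varepsilon)\,\rho_j\,0$ you observe that the eventual truth value as $\varepsilon\to 0^+$ is governed by the sign of the lowest-order nonvanishing coefficient of $t_j$ viewed as a polynomial in $\varepsilon$, encode that by the leading-sign disjunction $A_j^H$, substitute atom-by-atom, and verify the pointwise equivalence by choosing $\varepsilon_0(x)$ below all positive roots. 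Your Lagrange-interpolation trick for producing the coefficients $c_{j,k}(x)$ as polynomial-size terms (clearing the bounded-bit-length Vandermonde denominators so the sign conditions survive) is the right way to keep the construction polynomial, and your output $\psi$ is negation-free as required. The trade-off: the paper's argument is three lines at the cost of relying on a cited elimination result, while yours is longer but self-contained, gives an explicit closed form for $\psi$, and makes visible exactly why eliminating a final $H$ is cheap (it is infinitesimal sign determination, not full real quantifier elimination).
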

\begin{proof}
 Consider $\varphi(x_1, \ldots, x_i,\varepsilon)$ as a formula in the single variable $\varepsilon$. Using the definition of $H$ we can rewrite the sentence $(H \varepsilon)\ \varphi(x_1, \ldots, x_i,\varepsilon)$ as $(\exists \varepsilon' > 0)(\forall \varepsilon \in (0,\varepsilon'))\ \varphi(x_1, \ldots, x_i,\varepsilon)$, and then apply quantifier elimination (e.g.\ in the form of Theorem~1 in~\cite{S91}) twice to eliminate the two quantifiers. This gives us, in polynomial time, a formula $\psi(x_1, \ldots, x_i)$ equivalent to $(H \varepsilon)\ \varphi(x_1, \ldots, x_i,\varepsilon)$. Reintroducing the quantifiers $Q_i$, $Q_{i-1}$, $\ldots$, $Q_1$, in this order, we obtain that $\Phi$ is equivalent to $\Psi = (Q_1x_1)\cdots(Q_ix_i)\ \psi(x_1, \ldots, x_i)$.
\end{proof}

Applying the lemma repeatedly, allows us to remove a fixed number of final $H$ quantifiers. We would like to conclude that $\BP^0(\exists^* H) = \BP^0(\exists H) = \NPR$ and $\BP^0(\forall^* H) = \BP^0(\forall H) = \coNPR$, but we cannot, because the complete problems for any of the $H$ classes involve algebraic circuits; to rewrite these as formulas, we would have to add existential quantifiers after the $H$, leading to formulas of the form $\exists^* H \exists$. We cannot then apply quantifier elimination to $H$ without dealing with the final block of existential quantifiers first.

What can we do?  B\"{u}rgisser and Cucker studied the complexity of the \LS\ problem: they showed that testing whether a given hyperplane locally supports a semialgebraic set (given by an algebraic circuit) is complete for $\BP^0(\exists^* H)$. We can prove an analogue for semialgebraic sets defined by formulas.

A hyperplane {\em locally supports} a set $T$ if there is a point on the hyperplane so that a neighborhood of that point in the hyperplane belongs to $T$, and $T$, at least close to the point, lies entirely on one side of the hyperplane. For example, a square in $\RN^2$ has four supporting hyperplanes (lines); a line just passing through a corner is not considered locally supporting in this definition, since its intersection with the square is not two-dimensional.

\begin{corollary}
  Given a hyperplane and a bounded semialgebraic set, testing whether the hyperplane locally supports the semialgebraic set is \NPR-complete.
\end{corollary}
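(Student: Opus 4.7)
The plan is to establish membership and hardness separately.

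For membership in \NPR, I would write $H = \{x \in \RN^n : \langle a, x\rangle = b\}$, denote the two closed halfspaces by $H^{\leq}$ and $H^{\geq}$, and let $S = \{x : \varphi(x)\}$. Unfolding the definition, $H$ locally supports $S$ iff there is a choice of side (say $H^{\leq}$) such that
\[
(\exists p \in H)(\exists \varepsilon > 0)(\forall q)\ \lVert p - q\rVert < \varepsilon \implies \bigl((q \notin H \vee \varphi(q)) \wedge (\neg\varphi(q) \vee q \in H^{\leq})\bigr).
\]
Introducing the semialgebraic ``bad'' set $N = \{q : (q \in H \wedge \neg\varphi(q)) \vee (\varphi(q) \wedge q \notin H^{\leq})\}$, this is exactly the statement $H \not\subseteq \overline{N}$, i.e., $N$ is not dense in $H$. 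Since $H$ is affinely isomorphic to $\RN^{n-1}$, the denseness test ``is $N$ dense in $H$?'' falls under the Koiran-style quantifier elimination underlying Theorem~\ref{thm:semidense}, placing it in \coNPR. Taking the disjunction over the two possible choices of side keeps the original problem in \NPR.

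For \NPR-hardness, I would reduce from \bo\SPOLY{1}{<}, which is \NPR-complete by Proposition~\ref{prop:bounded}. Given an instance $\Phi = (\exists x \in (-1,1)^n)\ f(x) > 0$, I set
\[
U = \{x \in (-1,1)^n : f(x) > 0\}, \quad S = U \times [0,1] \subseteq \RN^{n+1}, \quad H = \{(x,y) \in \RN^{n+1} : y = 0\},
\]
so that $S$ is bounded and semialgebraic. If $\Phi$ holds then $U$ is a nonempty open set, hence contains a ball $B(x_0, \varepsilon) \subseteq U$; taking $p = (x_0,0) \in H$, a sufficiently small ball $B(p,\varepsilon')$ meets $H$ only inside $U \times \{0\} \subseteq S$, while $B(p,\varepsilon') \cap S \subseteq U \times [0,1] \subseteq \{y \geq 0\}$, so $H$ locally supports $S$. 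Conversely, if $\Phi$ fails then $U$, and hence $S$, are empty, and no hyperplane locally supports the empty set.

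The main obstacle will be the membership argument, namely transporting the denseness quantifier elimination from the ambient $\RN^n$ to the affine subspace $H$. I expect this to follow either by parametrizing $H$ as $\RN^{n-1}$ and pushing $N$ through the parametrization to a semialgebraic set in $\RN^{n-1}$ whose denseness is equivalent, or by reinspecting Koiran's proof, which generalizes naturally to denseness relative to a semialgebraic ``test'' subset. The hardness side is essentially the observation that testing nonempty interior of a bounded semialgebraic set reduces to local support through the product $U \times [0,1]$.
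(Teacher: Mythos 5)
Your hardness argument is correct and clean --- reducing from \bo\SPOLY{1}{<}\ via $U = \{x \in (-1,1)^n : f(x) > 0\}$, $U \times [0,1]$, and the hyperplane $\{y=0\}$ is a tidy construction, and the set $U \times [0,1]$ is genuinely bounded, whereas the set $T$ in the paper's own hardness proof is written as $\{(x,y) : f(x)>0 \wedge y \geq 0\}$ and is not actually bounded as stated. Your observation that the empty set has no locally supporting hyperplane closes the reverse direction correctly.

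The membership argument, however, has a genuine gap, and you flag it yourself. Reformulating local support as $H \not\subseteq \overline{N}$ (with the disjunction over the two sides) is correct, but the resulting statement is a $\Sigma_2$-type sentence: $(\exists p \in H)(\exists \varepsilon > 0)(\forall q)\ \lVert p-q\rVert < \varepsilon \implies q \notin N$. Koiran's theorem, as used in Theorem~\ref{thm:semidense} and Corollary~\ref{cor:forallstarforall}, lets one collapse $\exists^*$ to $\exists$ (equivalently, decide denseness in the ambient $\RN^m$ in \coNPR); it does not directly decide ``$H \subseteq \overline{N}$'' for a proper affine subspace $H$, which is an instance of set-containment-in-closure, a problem shown to be \PR{2}{}-complete in general in Theorem~\ref{thm:containPR2}. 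The hyperplane structure does help, but neither of your two proposed repairs is valid as stated: pushing $N$ through the parametrization of $H$ (i.e., looking at $N\cap H$) loses the points of $N$ that approach $H$ from off the hyperplane, which are precisely what can make a point of $H$ lie in $\overline{N}$; and ``Koiran generalizes naturally'' is a claim that would itself need a proof. What the paper actually does is different in a crucial way: it parametrizes the hyperplane by $\RN^{n-1}$ via an affine map $x \mapsto (x,h(x))$ and expresses local support as $(\exists^* x \in \RN^{n-1})(H\varepsilon)\ [\varphi((x,h(x))+\varepsilon v) \wedge \neg\varphi((x,h(x))-\varepsilon v)]$, splitting the neighborhood into a tangential direction (captured by $\exists^*$ over $\RN^{n-1}$) and the single normal coordinate (captured by the infinitesimal quantifier $H$ over one variable). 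The $H$-quantifier is then eliminated outright by one-variable quantifier elimination (Lemma~\ref{lem:remfinalH}), leaving $(\exists^* x)\ \psi(x)$, which is \NPR\ by Corollary~\ref{cor:forallstarforall}. Your monolithic ball $B(p,\varepsilon) \subseteq \RN^n$ does not decompose this way, because the outer $\exists p$ ranges over the $(n-1)$-dimensional $H$ while the inner $\forall q$ ranges over all of $\RN^n$, so the $\exists^*$ pattern (which requires matching domains) never appears. To fix your proof you would need to rewrite your condition in the $(\exists^* x)(H\varepsilon)$ form and argue, using the semialgebraic structure, that this is equivalent to the ball condition --- which is essentially reconstructing the paper's argument.
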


\begin{proof}
For membership we follow B\"{u}rgisser and Cucker~\cite[Proposition 7.2]{BC09}.
 We can assume the hyperplane $S$ is given as $S = \{x: v \cdot x = c\}$ with $v \in \QN^n$ and $c \in \QN$. Let $h$ be the affine function for which $S = \{(x,h(x)): x \in \RN^{n-1}\}$ and let $T = \{x: \varphi(x)\}$ be the semialgebraic sets. Then $S$ locally supports $T$ if and only if
 \[(\exists^* x \in \RN^{n-1}) (H \varepsilon)\ [\varphi((x,h(x))+\varepsilon v) \wedge \neg\varphi((x,h(x))-\varepsilon v) ) ].   \]
Using Lemma~\ref{lem:remfinalH} we can eliminate $H$, so we obtain a formula $\psi(x)$ such that
$S$ locally supports $T$ if and only if $(\exists^* x)\ \psi(x)$, but then the problem lies in \NPR\ by Corollary~\ref{cor:forallstarforall}.

To show \NPR-hardness, we can work with the sentence
\[\Phi = (\exists x \in (-1,1)^i)\ f(x) > 0,\]
 by Proposition~\ref{prop:boundedopen}. Let $S = \{(x,y) \in \RN^i \times \RN: y = 0\}$ and $T = \{(x,y) \in \RN^i\times \RN: f(x) > 0 \wedge y \geq 0\}$. Then $S$ is a hyperplane, and $T$ a bounded semialgebraic set. $S$ touches $T$ if $S \cap T$ is $i$-dimensional. But that is the case if and only of there is an $x \in \RN^i$ such that $f(x)> 0$, since if $f(x)> 0$ is true for some $x$, it is true for a neighborhood of $x$, by continuity.
\end{proof}

For a list of further problems complete for \NPR\ and \coNPR\ see~\cite[Corollary 9.4]{BC09}.

\subsection{Second-Level Problems}

B\"{u}rgisser and Cucker~\cite[Proposition 4.1, Table~1]{BC09} showed that testing whether a piecewise rational function given by an algebraic circuit is surjective is \PR{2}{}-complete. Below we include a proof that extends their result to polynomials (of bounded degree).

\begin{theorem}\label{thm:surjective}
 Testing whether a polynomial $g: A \rightarrow B$ is surjective is \PR{2}{}-complete, even if $g$ is a polynomial of degree at most $8$, and $A$ and $B$ are Cartesian products of $\RN$, and $[-1,1]$.
\end{theorem}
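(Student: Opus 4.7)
Membership in $\PR{2}{}$ is immediate: $g\colon A\to B$ is surjective iff
$(\forall b\in B)(\exists a\in A)\ \sum_i(g_i(a)-b_i)^2=0$, a bounded
$\forall\exists$ sentence with polynomial matrix, which lies in $\PR{2}{}$
by Proposition~\ref{prop:bounded}.

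For hardness I plan to reduce from $\bd\PPOLY{2}{=}$, which by
Corollary~\ref{cor:eqbounded} is $\PR{2}{}$-complete with a matrix polynomial
$f$ of degree at most $4$. Given
\[
\Phi \;=\; (\forall x\in[-1,1]^n)(\exists y\in[-1,1]^m)\ f(x,y)=0,
\]
I will construct an equivalent surjectivity instance of degree at most
$2\deg f\le 8$, taking
\[
A=[-1,1]^{n+m},\qquad B=[-1,1]^n,\qquad g(x,y)=x-f(x,y)^2\cdot v,
\]
where $v=(1,M,M^2,\dots,M^{n-1})\in\RN^n$ for an integer $M$ chosen below. The
intended equivalence is: $f(x,y)=0$ forces $g(x,y)=x$, so whenever $\Phi$
holds, every $b\in B$ is hit by $g$ via $x=b$ together with a witness $y$.

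For the converse, suppose $\Phi$ fails at some $b^*\in[-1,1]^n$. Any preimage
$(x,y)\in A$ of $b^*$ satisfies $x_i-b^*_i=f(x,y)^2\cdot M^{i-1}$ for every
$i\in[n]$; since $x_i\in[-1,1]$, the $i=n$ coordinate forces the uniform
bound $f(x,y)^2\le 2/M^{n-1}$. For $M$ large enough this would contradict a
Solern\'{o}-style lower bound on $|f|$ coming from Theorem~\ref{thm:Linq},
so no such preimage exists and $g$ misses $b^*$. Because the spreading
coefficients $M^{i-1}$ already couple all $n$ coordinates, the crucial
``pinning'' of $f(x,y)^2$ to an exponentially small window is purely linear in
$f^2$, keeping $\deg g\le 8$.

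The main obstacle is that Theorem~\ref{thm:Linq} controls $|f|$ from below
only when $f$ is nowhere zero on an entire box, whereas the negation of $\Phi$
only gives $f\ne 0$ on the slice $\{b^*\}\times[-1,1]^m$. I plan to resolve
this by first preprocessing $\Phi$, in the spirit of
Corollary~\ref{cor:strictify2} and Lemma~\ref{lem:expsmall}, into an
equivalent $\bd\PPOLY{2}{=}$ instance in which the slice-level dichotomy
``$f=0$, or $|f|$ lies above an explicit exponentially small threshold'' is
baked into the matrix via the doubly-exponential gap scaffold. After
preprocessing, $M$ becomes a concrete succinctly representable exponential in
the input size, and the preimage analysis above applies uniformly in~$b^*$.
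If the theorem requires $A$ or $B$ to include $\RN$-factors, I append dummy
identity coordinates, which preserves both the degree bound and the
equivalence.
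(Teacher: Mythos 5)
Your proposal takes a genuinely different route from the paper, and the route has a real gap. The paper's construction is
\[
g\colon [-1,1]^n\times[-1,1]^m\times\RN\times\RN\;\to\;[-1,1]^n\times\RN,\qquad
(x,y,s,t)\mapsto\bigl(x,\;s-(sf(x,y))^2-t^2\bigr).
\]
The crucial feature is that the codomain contains an unbounded $\RN$-factor. When $\Phi$ holds one takes a witness $y$ with $f(x,y)=0$ and $t=0$, so the last coordinate equals $s$ and ranges over all of $\RN$. When $\Phi$ fails at some $x$, compactness of $[-1,1]^m$ gives $f(x,y)^2\ge\varepsilon>0$ for all $y$, so the last coordinate is at most $s-s^2\varepsilon\le 1/(4\varepsilon)$ over all $s,t\in\RN$, hence $g$ misses $(x,z)$ for large $z$. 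No quantitative lower bound on $|f|$ is needed; bare compactness does the work precisely because the set to be covered is unbounded. The theorem's wording, permitting $\RN$-factors in $A$ and $B$, is a hint that this factor is load-bearing.

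By choosing a compact codomain $B=[-1,1]^n$ you throw away that leverage and are forced into the quantitative threshold argument, which does not close. Three problems. First, $g(x,y)=x-f(x,y)^2\cdot v$ does not map $A=[-1,1]^{n+m}$ into $B=[-1,1]^n$: with $f^2$ bounded away from zero and $v_n=M^{n-1}$ large, coordinates escape $[-1,1]$, so the instance ``$g\colon A\to B$'' is not even well-formed. Second, the obstacle you flag is fatal rather than a preprocessing issue: a failure of $\Phi$ gives only $f(b^*,y)\ne 0$ on the slice $\{b^*\}\times[-1,1]^m$ for some \emph{real} $b^*$, and there is no input-computable threshold $\tau>0$ with the dichotomy ``$\min_y|f(b^*,y)|=0$ or $\ge\tau$'' holding uniformly in $b^*$, because the left side varies continuously in $b^*$ and can pass through arbitrarily small positive values. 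Theorem~\ref{thm:Linq} controls $|f|$ from below only when $f$ is nonvanishing on a whole box with fixed coefficients, and the doubly-exponential scaffold of Lemma~\ref{lem:expsmall} lives in extra existential witness variables of the reduction, not in the universally quantified target $b^*$, so it cannot encode a per-slice threshold into the matrix. Third, even granting a per-slice threshold at $b^*$, your preimage analysis allows $x$ to drift from $b^*$ by $O(M^{-(n-i)})$, and $f$ may vanish at such nearby $x$, so the contradiction still does not follow. The clean fix is to adopt the paper's $\RN$-factor trick rather than patch the bounded-codomain reduction.
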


\begin{proof}
 By Proposition~\ref{prop:bc2nd}, \bd\PPOLY{2}{=}\ is \PR{2}{}-complete. So we can assume we are given a sentence  $\Phi = (\forall x \in [-1,1]^n)(\exists y \in [-1,1]^m)\ f(x,y) = 0$ with an explicit polynomial $f$ of degree at most $4$. Define a new polynomial
 \[g: \begin{array}{cll}
         [-1,1]^n \times [-1,1]^m \times \RN \times \RN & \rightarrow & [-1,1]^n \times \RN \\
         (x,y,s,t) & \mapsto & (x, s - (sf(x,y))^2-t^2)
      \end{array} \]
 Then $g$ is surjective if and only of $\Phi$ is true: If $\Phi$ is true, then for every $x \in [-1,1]^n$ there is a $y \in [-1,1]^m$ such that $f(x,y) = 0$ and $g(x,y,s,0) = (x,s)$, which shows that $g$ is surjective. If $\Phi$ is false, then there is an $x \in [-1,1]^n$ such that $(f(x,y))^2 > 0$ for all $y \in [-1,1]^m$. By compactness, there is a $\varepsilon$ with $(f(x,y))^2 > \varepsilon > 0$ for all $y \in [-1,1]^m$. Then $s - (sf(x,y))^2-t^2 < s- (s\varepsilon)^2$, which has a finite upper bound for $(s,t) \in \RN^2$, so $g$ is not surjective.
\end{proof}

There are various ways to measure how close two semialgebraic sets are. For example, testing equality of two semialgebraic sets is \coNPR-complete~\cite[Theorem 3]{JKM22}, and the same is true for containment: $S \subseteq T$.
What happens if we replace equality with denseness? In the previous section we looked at the problem of deciding whether a set $S$ contained in a set $T$ is dense in $T$ (for $T = \RN^n$). Writing $\overline{S}$ for the closure of the set $S$ we can express this as $\overline{S} = T$.

What about $S \subseteq \overline{T}$ instead? If $S = \{x: \varphi(x)\}$ and $T = \{x: \psi(x)\}$, then $S \subseteq \overline{T}$ is equivalent to
\[(\forall^* x)(\forall^* \varepsilon)(\exists y)\ \varphi(x) \wedge \psi(y) \wedge \lVert x-y \rVert < \varepsilon,\]
and B{\"u}rgisser and Cucker~\cite[Proposition 5.5]{BC09} show completeness of testing $S \subseteq \overline{T}$ for problems of the type $\forall^*\exists$. In the discrete version, their result implies completeness for the class $\BP^0(\forall^*\exists)$. Clearly, $\BP^0(\forall^*\exists) \subseteq \BP^0(\forall\exists)$.\footnote{At the second level we do not need Koiran's method for quantifier elimination~\cite{K99}, we can argue directly: $\forall^*$ can be rewritten as $\forall\exists$, and the additional existential quantifiers merged with the existing ones.}
We argue below that the two classes are the same. As was the case for the denseness problem at the first level, the quantifier $\forall^*$ does not differ from $\forall$ in computational power at the second level.

Our proof that testing $S \subseteq \overline{T}$ is \PR{2}{}-complete is
closely based on~\cite[Proposition 5.5]{BC09}.

\begin{theorem}\label{thm:containPR2}
 Testing whether $S \subseteq \overline{T}$ for two semialgebraic sets $S$ and $T$ is \PR{2}{}-complete, even if $S$ is of the form $[-1,1]^{p} \times \{0\}^q$.
\end{theorem}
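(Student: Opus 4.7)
The plan is to show membership in $\PR{2}{}$ directly from the definition, and to prove hardness by a reduction from $\bd\PPOLY{2}{=}$, which is $\PR{2}{}$-complete by Corollary~\ref{cor:eqbounded}.

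For membership, note that if $S = \{x : \varphi(x)\}$ and $T = \{y : \psi(y)\}$, then $S \subseteq \overline{T}$ is equivalent to the sentence
\[
(\forall x)(\forall \varepsilon > 0)(\exists y)\ \varphi(x) \implies \bigl(\psi(y) \wedge \lVert x - y\rVert^2 < \varepsilon^2\bigr),
\]
which, after merging the two universal blocks into one, lies in the $\forall\exists$-fragment.

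For hardness I would start with an instance $\Phi = (\forall x \in [-1,1]^p)(\exists y \in [-1,1]^q)\ f(x,y) = 0$ of $\bd\PPOLY{2}{=}$. The conceptual obstacle is that the inner existential ``$\exists y \in [-1,1]^q : f(x,y) = 0$'' cannot be captured by a quantifier-free formula defining a semialgebraic set in the sense of this paper. The plan is to promote the witness $y$ to an actual coordinate $z$ together with a scaling variable $t$, so that sending $t \to 0^+$ collapses the witness toward the origin and lets the closure operation supply the existential quantifier by compactness. Concretely, with $d$ the degree of $f$ in $y$, set $g(x,z,t) := t^{d} f(x, z/t)$, which is a polynomial (of degree at most $2\deg f$) after clearing denominators, and in $\RN^{p+q+1}$ define
\begin{align*}
S &:= [-1,1]^p \times \{0\}^{q+1}, \\
T &:= \bigl\{(x,z,t) : x \in [-1,1]^p,\ 0 < t \leq 1,\ \lvert z_i\rvert \leq t \text{ for } i \in [q],\ g(x,z,t) = 0\bigr\}.
\end{align*}
Both are quantifier-free semialgebraic, and $S$ has the form demanded by the theorem with parameters $p$ and $q+1$.

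To verify correctness I would check both directions. If $\Phi$ holds, then given $(x,0,0) \in S$, pick a witness $y \in [-1,1]^q$ with $f(x,y) = 0$; for each $t \in (0,1]$ the point $(x, ty, t)$ lies in $T$ and approaches $(x,0,0)$ as $t \to 0^+$, so $(x,0,0) \in \overline{T}$. Conversely, if $\Phi$ fails at some $x_0 \in [-1,1]^p$ and $(x_n, z_n, t_n) \in T$ converges to $(x_0,0,0)$, then $y_n := z_n/t_n \in [-1,1]^q$ satisfies $f(x_n, y_n) = 0$; by compactness a subsequence converges to some $y^* \in [-1,1]^q$, and continuity forces $f(x_0, y^*) = 0$, contradicting the choice of $x_0$. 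The step I expect to be trickiest is precisely this use of compactness, since it is what allows the topological closure to simulate the existential quantifier; the remaining bookkeeping (degree of $g$, dimensions of $S$ and $T$, polynomial-time computability of the reduction) is routine.
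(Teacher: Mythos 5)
Your proof is correct and follows essentially the same route as the paper: homogenize $f$ in the $y$-variables, take $S = [-1,1]^{p}\times\{0\}^{q+1}$, let $T$ consist of the scaled witness points with a strictly positive scaling coordinate, and use compactness of $[-1,1]^q$ to show that the closure operation exactly simulates the inner existential quantifier. One small point in your favor: your coordinate-wise constraint $\lvert z_i\rvert \leq t$ directly gives $y_n := z_n/t_n \in [-1,1]^q$ and also guarantees that $(x, ty, t) \in T$ for any witness $y \in [-1,1]^q$, whereas the paper uses the Euclidean bound $\lVert y\rVert \leq y_0$ and then writes $\lVert y_0 y\rVert \leq y_0$, which silently assumes $\lVert y\rVert \leq 1$ and would need a rescaling for $m \geq 2$.
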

\begin{proof}
 Since the $\forall^*$-quantifier can be replaced by $\forall\exists$, membership in \PR{2}{} easily follows. For the other direction, we work with Proposition~\ref{prop:bc2nd}. So we are given a sentence
 \[\Phi = (\forall x \in [-1,1]^n)(\exists y \in [-1,1]^m)\ f(x,y) = 0.\]

 Write $f(x,y) = \sum_{\alpha} f_{\alpha}(x) y^{\alpha}$. Following~\cite{BC09} we define
 \[f'(x,y,y_0) = \sum_{\alpha} f_{\alpha}(x) y_0^{d-\vert \alpha\vert } y^{\alpha} = y_0^d \sum_{\alpha} f_{\alpha}(x) (y/y_0)^{\alpha} = y_0^d f(x,y/y_0),\]
 the homogenization of $f$, where $d$ is the largest total degree of $y$ in $f$.

 Define the sets
 \begin{eqnarray*}
  S &=& [-1,1]^n\times \{0\}^{m} \times \{0\},\ \mbox{and} \\
  T &=& \{(x,y,y_0)\in [-1,1]^n\times [-1,1]^m \times [0,1]: \\
    & & \quad\quad\quad f'(x,y,y_0) = 0 \wedge y_0 \neq 0 \wedge \lVert y \rVert \leq y_0\}.
 \end{eqnarray*}
 We claim that $\Phi$ is true if and only if $S \subseteq \overline{T}$.

 Suppose $\Phi$ is true, and let $(x,0,0) \in S$. Since $\Phi$ is true, there is a $y$ such that $f(x,y) = 0$. We can then find a point $(x, y', y_0)$ in $T$ arbitrarily close to $(x,0,0)$ by choosing $y_0$ arbitrarily small (but bigger than $0$), and setting $y' = y_0y$. Then
 \[f'(x, y', y_0) = y_0^d f(x,y) = 0,\]
 and $\lVert y' \rVert \leq y_0 \lVert y \rVert \leq y_0$, so $(x,0,0)$ lies in $\overline{T}$.

 For the other direction, the difference between the BSS-model, and the discrete model comes into play. Suppose $S \subseteq \overline{T}$, and we are given $x \in [-1,1]^n$. We have to show that there is a $y \in [-1,1]^m$ such that $f(x,y) = 0$.

 For a contradiction assume that $f(x,y) \neq 0$ for all $y \in [-1,1]$; without loss of generality, $f(x,y) > 0$ for all $y \in [-1,1]^m$ (if not, replace $f$ with $-f$, we use the fact that $f$ is continuous).
 Since $[-1,1]^m$ is compact, this implies that the minimum of $f(x,y)$ over all
 $y \in [-1,1]^m$ is greater than some $\delta > 0$. Since $f$ is continuous (as a polynomial), this remains true in a sufficiently small neighborhood of $x$. But then
 \begin{eqnarray*}
 f'(x', y', y_0) & = &  y_0^d \sum_{\alpha} f_{\alpha}(x) (y'/y_0)^{\alpha} \\
                 & \geq & y_0^d  \delta.
 \end{eqnarray*}
 for all $x'$ sufficiently close to $x$, and all $y'$ with $\lVert y' \rVert \leq y_0$. In particular, $f'(x', y', y_0) > 0$ for $y_0 \neq 0$, so $(x', y', y_0) \not\in T$. Hence, $(x,0,0)$ does not belong to $\overline{T}$. This contradicts $(x,0,0) \in S \subseteq \overline{T}$.
\end{proof}

The sets $S$ and $T$ in Theorem~\ref{thm:containPR2} are basic semialgebraic sets. We do not know whether the hardness result can be extended to $S$ and $T$ being algebraic.

\begin{corollary}\label{cor:forallstar2eqforall2}
    $\BP^0(\forall^*\exists) = \BP^0(\forall\exists) = \PR{2}{}$.
\end{corollary}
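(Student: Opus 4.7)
The plan is to prove the two equalities separately, noting that $\BP^0(\forall\exists) = \PR{2}{}$ has already been established in the discussion preceding Section~\ref{sec:HRC}, so the only new content is the equality $\BP^0(\forall^*\exists) = \BP^0(\forall\exists)$.

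For the inclusion $\BP^0(\forall^*\exists) \subseteq \BP^0(\forall\exists)$, I would simply unfold the definition of $\forall^*$. A sentence of the form $(\forall^* x)(\exists y)\ \varphi(x,y)$ is equivalent, by definition of $\forall^*$, to
\[
(\forall x)(\forall \varepsilon > 0)(\exists x')(\exists y)\ \lVert x - x'\rVert < \varepsilon \,\wedge\, \varphi(x',y),
\]
which is of type $\forall\exists$ once the two universal quantifiers are merged into a single block and the two existential quantifiers are merged similarly. Since this rewriting is polynomial-time, it gives the desired inclusion.

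For the reverse inclusion $\BP^0(\forall\exists) \subseteq \BP^0(\forall^*\exists)$, I would invoke Theorem~\ref{thm:containPR2}, which shows that the problem of deciding $S \subseteq \overline{T}$ for two semialgebraic sets is \PR{2}{}-complete (and hence $\BP^0(\forall\exists)$-complete). The key observation is that this containment problem admits the natural expression
\[
(\forall^* x)(\forall^* \varepsilon)(\exists y)\ \varphi(x) \,\wedge\, \psi(y) \,\wedge\, \lVert x-y\rVert < \varepsilon,
\]
given just before Theorem~\ref{thm:containPR2}. The inner $\forall^*$ over a single real variable $\varepsilon$ can be absorbed into the outer $\forall^*$ (either by pairing the variables or by replacing the inner density quantifier with an ordinary existential over a positive rational and using continuity), leaving a sentence of type $\forall^*\exists$. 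Thus the \PR{2}{}-complete containment problem lies in $\BP^0(\forall^*\exists)$, and by closure of $\BP^0(\forall^*\exists)$ under polynomial-time reductions we conclude $\PR{2}{} \subseteq \BP^0(\forall^*\exists)$.

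The main point to be careful about is not any one hard step but rather the bookkeeping: making sure that the ``$\forall^*$ absorbs $\forall^*$'' step really produces a single $\forall^*$-block in the sense intended by the definition of $\BP^0(\forall^*\exists)$, and that the reduction from an arbitrary $\BP^0(\forall\exists)$-language to the containment problem (provided by Theorem~\ref{thm:containPR2}) is genuinely polynomial time. Both are routine, so no serious obstacle is expected; the content of the corollary is really contained in Theorem~\ref{thm:containPR2}.
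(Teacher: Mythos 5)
Your overall proof structure matches the paper's: the inclusion $\BP^0(\forall^*\exists) \subseteq \BP^0(\forall\exists)$ by unfolding the definition of $\forall^*$ and merging quantifier blocks, the identification $\BP^0(\forall\exists) = \PR{2}{}$ as already established, and the reverse inclusion via the $S \subseteq \overline{T}$ problem together with Theorem~\ref{thm:containPR2}.

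The one place where you diverge from the paper, and where your argument as written has a gap, is the claim that the $S \subseteq \overline{T}$ problem actually \emph{lies in} $\BP^0(\forall^*\exists)$. The paper does not derive this; it simply cites B\"{u}rgisser and Cucker~\cite[Proposition 5.5]{BC09}, who prove that $S \subseteq \overline{T}$ is $\BP^0(\forall^*\exists)$-complete (membership included). You instead try to show membership directly by collapsing $(\forall^* x)(\forall^* \varepsilon)$ into a single $\forall^*$. Your first suggestion, ``pairing the variables'' into $(\forall^*(x,\varepsilon))$, is in fact correct for semialgebraic matrices, but it is not a definitional triviality: the direction that a dense semialgebraic $A \subseteq \RN^{m+1}$ has a dense set of $x$ whose slice $A_x \subseteq \RN$ is itself dense requires a genuine argument about dimensions of fibers of semialgebraic maps (cell decomposition plus the fiber-dimension inequality). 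This is exactly the sort of $\forall^*$-calculus rule that B\"{u}rgisser and Cucker develop and that the paper delegates to the citation. Your second alternative, ``replacing the inner density quantifier with an ordinary existential over a positive rational and using continuity,'' does not make sense in the first-order theory of the reals (there is no quantification over rationals, and an $\exists$ does not have the semantics of $\forall^*$). So your proposal is correct in outline but treats the nested-$\forall^*$ absorption as routine bookkeeping, when in fact it is the only step that needs either a reference to~\cite[Proposition 5.5]{BC09} (the paper's choice) or a proof of the $\forall^*\forall^* = \forall^*$ collapse for semialgebraic predicates.
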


The corollary also settles the complexity of the problems \ERD, \LERD, and \IED\ mentioned in~\cite{BC09}, when restricted to their Boolean parts. They are all \PR{2}{}-complete. With the exception of \LS, these were the last remaining problems from~\cite{BC09} whose complexity in terms of the real hierarchy remained open because they included exotic quantifiers. We have to exclude \LS, since the complexity of that problem remains open for semialgebraic sets defined by circuits.

As a second consequence of Theorem~\ref{thm:containPR2}, we can show that the Hausdorff distance problem remains \PR{2}{}-complete for distance $0$. Jungeblut, Kleist and Miltzow, in the second version of~\cite{JKM22}, also realized that B\"{u}rgisser and Cucker's proof of Proposition 5.5 in~\cite{BC09} implies that the problem is \PR{2}{<}-complete.

\begin{corollary}\label{cor:Hausdorff}
 Given two semialgebraic sets $S$ and $T$ testing whether their (directed) Hausdorff distance is zero is \PR{2}{}-complete.
\end{corollary}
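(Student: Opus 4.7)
The plan is to derive both directions from Theorem~\ref{thm:containPR2}. The directed Hausdorff distance from $S$ to $T$ is zero precisely when $S\subseteq\overline{T}$, and the undirected Hausdorff distance is zero precisely when $\overline{S}=\overline{T}$. So the content of the corollary is an almost-immediate repackaging of Theorem~\ref{thm:containPR2}, once one handles the asymmetry of the undirected case.

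For membership in $\PR{2}{}$, writing $S=\{x:\varphi(x)\}$ and $T=\{y:\psi(y)\}$, the condition that the directed distance from $S$ to $T$ is zero can be expressed as
\[
(\forall \varepsilon>0)(\forall x)(\exists y)\ \varphi(x)\implies\bigl(\psi(y)\wedge\lVert x-y\rVert^2<\varepsilon^2\bigr),
\]
which is a $\PR{2}{}$-sentence; the undirected case is the conjunction of two such sentences and also lies in $\PR{2}{}$.

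For hardness of the directed version, Theorem~\ref{thm:containPR2} applies verbatim: the reduction there produces semialgebraic $S,T$ for which $S\subseteq\overline{T}$ is equivalent to the truth of a given $\PR{2}{}$-complete sentence, which is the same as the directed Hausdorff distance from $S$ to $T$ being zero. For the undirected version I would use the following trick: whenever $S'\subseteq T'$, the directed distance from $S'$ to $T'$ vanishes automatically, so $d_H(S',T')=0$ collapses to the single condition $T'\subseteq\overline{S'}$. Taking the sets $S,T$ from Theorem~\ref{thm:containPR2}, put $S'=T$ and $T'=S\cup T$ (semialgebraic, since semialgebraic sets are closed under union). Then $S'\subseteq T'$, and
\[
T'\subseteq\overline{S'}\iff S\cup T\subseteq\overline{T}\iff S\subseteq\overline{T},
\]
because $T\subseteq\overline{T}$ always. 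Hence $d_H(S',T')=0$ iff the original sentence is true. The only mild obstacle is this asymmetry in the undirected case, which the union trick resolves cleanly without any further appeal to logic or Solern\'o's inequality.
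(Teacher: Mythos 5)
Your proof is correct and follows essentially the same route as the paper's: hardness of the directed case comes straight from Theorem~\ref{thm:containPR2} via $d_H(S\!\to\!T)=0\iff S\subseteq\overline{T}$, and the directed-to-undirected reduction is the identical union trick (your pair $(T,\,S\cup T)$ and the paper's $(S\cup T,\,T)$ are the same up to the symmetry of $d_H$). The only difference is cosmetic: you also spell out membership in $\PR{2}{}$ explicitly, which the paper leaves implicit from the introduction; note that both you and the paper silently rely on the Hausdorff distance being the \emph{maximum} of the two directed distances (the word ``minimum'' in the paper's introductory definition is a typo, as the accompanying formula there confirms).
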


\begin{proof}
 The directed Hausdorff distance of $S$ and $T$ is $0$ if and only if $S \subseteq \overline{T}$, so hardness of the directed case follows from Theorem~\ref{thm:containPR2}.

 The directed case reduces to the undirected case as follows: $S \subseteq \overline{T}$ is equivalent to $S \cup T \subseteq \overline{T}$. Since $\overline{T} \subseteq \overline{S \cup T}$ this implies that $S
 \subseteq \overline{T}$ if and only if $S\cup T$ and $T$ have the same closure: $\overline{S \cup T} = \overline{T}$, which is equivalent to $S \cup T$ and $T$ having Hausdorff distance $0$.
\end{proof}

\section{Conclusion and Open Problems}

In this paper we (re)introduced the real hierarchy, a hierarchy of complexity classes based on the theory of the reals, and justified the definition by showing that, just like \NPR, it is robust under changes in the definition. We identified several, very restricted families of problems complete for \SR{k}{}, see Table~\ref{tab:shcomp}, and \PR{k}{}, see Table~\ref{tab:phcomp}.

\begin{table}[htb]
\caption{Degree bounds and references for complete problems for \protect\SR{k}{}; values in green are improved compared to the journal version. The bounded-closed domain cases have been removed.}\label{tab:shcomp}
\begin{tabular}{l|l|l}
     \SR{k}{}-complete         &  $k$ even                                              & $k$ odd \\ \hline
 \SPOLY{k}{<} & degree $\color{Green} 4$ (P\ref{prop:QPOLY})  & degree $\color{Green} 6$ (P\ref{prop:strictcomp}) \\
 \SPOLY{k}{=} &                     na                &  degree $4$ (P\ref{prop:QPOLY}) \\
 \SPOLY{k}{\leq} & degree $\color{Green} 6$ (P\ref{prop:strictcomp} + C\ref{cor:leq}) & degree $\color{Green} 4$ (P\ref{prop:QPOLY} + C\ref{cor:leq}) \\
 \bo\SPOLY{k}{<} & degree $\color{Green} 8$ (P\ref{prop:boundedopen})  & degree $\color{Green} 4$ (P\ref{prop:boundedopen}) \\
 \bo\SPOLY{k}{=} &                     na                &  degree $4$ (P\ref{prop:boundedopen}) \\
 \end{tabular}
 \end{table}

\begin{table}[htb]
\caption{Degree bounds and references for complete problems for \protect\PR{k}{}; values in green are improved compared to the journal version. The bounded-closed domain cases have been removed.}\label{tab:phcomp}
 \begin{tabular}{l|l|l}
  \PR{k}{}-complete  &  $k$ even                                              & $k$ odd \\ \hline
  \PPOLY{k}{<} & degree $\color{Green} 6$ (P\ref{prop:strictcomp}) & degree $\color{Green} 4$ (P\ref{prop:QPOLY}) \\
  \PPOLY{k}{=} & degree $4$ (P\ref{prop:QPOLY}) &  na  \\
  \PPOLY{k}{\leq} & degree $\color{Green} 4$  (P\ref{prop:QPOLY} + C\ref{cor:leq})  & degree $\color{Green} 6$ (P\ref{prop:strictcomp}+ C\ref{cor:leq}) \\
 \bo\PPOLY{k}{<} & degree $\color{Green} 4$ (P\ref{prop:boundedopen}) & degree $\color{Green} 8$ (P\ref{prop:boundedopen}) \\
 \bo\PPOLY{k}{=} & degree $4$ (P\ref{prop:boundedopen})  & na \\
\end{tabular}
\end{table}

We emphasize that the bounded-closed domain versions which were listed in the journal version have been removed from the paper, since their proofs did not hold up (except at the first and second level); we are left with Conjecture~\ref{con:bceq} which we already stated earlier.

We hope that these problems will be useful in future investigations at higher levels of the real hierarchy. Are there any natural candidates for complete problems at higher levels? Dobbins, Kleist, Miltzow and Rz\k{a}\.{z}ewski~\cite{DKMR22} explore this question in depth for \SR{2}{}\ and \PR{2}{}. They identify three mechanisms that often lead to a jump in complexity:
universal extension, imprecision, and robustness. In a {\em universal extension} variant, we ask whether any given partial solution can be extended to a full solution. E.g.\ starting with the \NP-complete graph coloring, the problem \twocolext, whether any partial $2$-coloring of a given set of vertices can be extended to a $3$-coloring of the graph is $\coNP^{\NP}$-complete~\cite{S05}; so the complexity makes a $\coNP$-jump. Dobbins \etal\ discuss a universal extension variant of the art gallery problem, and also mention some other candidates. In an {\em imprecision} variant each real parameter is replaced by a (metric) neighborhood. E.g.\ the coordinates of the art gallery may be given to within some precision bound. Can all the resulting art galleries be guarded by at most $k$ guards~\cite{DKMR22}? The {\em robustness} variant applies the precision bound to the solution set: Given an art gallery are there $k$ guards that guard the whole gallery, even if they are perturbed (each within some precision bound)~\cite{DKMR22}? We refer to the paper by Dobbins \etal~\cite{DKMR22} for more details, discussion, and further candidates.

As we saw, other natural problems arise when studying properties of semialgebraic sets, such as the Hausdorff distance of two semialgebraic sets. The paper by B{\"u}rgisser and Cucker~\cite{BC09} includes a pretty comprehensive look at computational questions about semialgebraic sets, and it leaves some interesting open questions. For example, they were able to show that testing whether a basic semialgebraic set is closed is \coNPR-complete, but it is open whether membership in \coNPR\ still holds for semialgebraic sets, see~\cite[Theorem 6.15]{BC09}. A problem of practical interest~\cite{LSdW20} is testing whether a semialgebraic set contains an isolated point. The problem is \coNPR-hard~\cite[Corollary 9.4]{BC09}, but the best known upper bound is \SR{2}{}. Some further properties to consider: A set is {\em star-shaped} if it contains a point that can ``see'' all other points in the set; by definition, testing whether a semialgebraic set is star-shaped belongs to \SR{2}{}, but is it \SR{2}{}-complete?

We can also consider replacing families of geometric objects with parameterized families of semialgebraic sets. In this context one could study the complexity of computing the Vapnik-\v{C}ervonenkis dimension of the family $\{x \in \RN^n: (\exists y \in \RN^m)\ \varphi(x,y)\}$. The problem clearly lies in \SR{3}{}. Is it \SR{3}{}-complete? This appears unlikely, since the universal quantifier is not really real, but Boolean, quantifying over subsets, so the problem lies in a hybrid discrete/real complexity class, another topic that deserves attention.\footnote{We'd like to write $\NPR^{\coNP^{\NPR}}$ for this class, but this notation suggests an oracle model for \NPR, and the details of that would still need to be worked out.} Blanc and Hansen~\cite{BH22a} study a problem in evolutionary game theory which they can show lies in \SR{2}{}, and is hard for the hybrid class that is of the form $\exists\forall$, where the existential quantifier is Boolean.

Other problems related to semialgebraic sets, like testing connectivity, or counting connected components likely do not lie at a finite level of the real hierarchy, since they cannot be defined in first-order logic, but it would be interesting to find out for which levels they are hard, following Basu and Zell~\cite{BZ10}.

Finally, we need a deeper understanding of the power of B{\"u}rgisser and Cucker's exotic quantifers in the discrete setting. We conjecture that $H$ does not affect the complexity at all, while $\forall^*$ and $\exists^*$ can be replaced with their non-exotic counterparts (at least in most situations). We have only shown this for $\forall^*$ at the second level, and we saw that $H$ can be eliminated as a first and last quantifier, but what if $H$ occurs in the middle? We are not aware of any natural problems that would require such an $H$-quantifier, with the exception of the \LS-problem which has this flavor if we use existential quantifiers to replace the algebraic circuit defining the semialgebraic set with a formula.

Let us move on to some more structural, complexity-theoretic questions. Renegar's quantifier elimination result implies that formulas with $O(\log n)$ quantifier alternations can be decided in \PSPACE. Is it possible that the logarithmic fragment of the theory of the reals exhausts all of \PSPACE? We would also like to see some oracle separations. The theory of the reals can be relativized in various (standard) ways. Are there oracles that separate \NP\ from \NPR? Or \NPR\ from \PSPACE? Because $\P \subseteq \NPR \subseteq \PSPACE \subseteq \EXP$, any oracle that collapses \PSPACE\ to \P\ separates \NPR\ from \EXP. On the other hand, since $\NP \subseteq \NPR$, any oracle that separates $\P$ and $\NP$ also separates $\P$ and $\NPR$, but these few facts seem to be the extent of our knowledge. In particular, we do not know whether the real hierarchy is proper (relative to an oracle).

Are there interesting problems in $\NPR \cap \coNPR$? One such problem is semidefinite programming; we can phrase the decision version of semidefinite programming as follows~\cite{L03}: Given symmetric $n\times n$ matrices $A_1, \ldots, A_m$, $B$ are there $x_1, \ldots, x_m \in \RN$ such that $\sum_{i \in [m]} x_iA_i - B$ is positive semidefinite? Ramana~\cite{R97} showed that the problem lies in $\NP_{\RN} \cap \coNP_{\RN}$, that is the intersection of the real versions of \NP\ and \coNP\ in the BSS-model, and it is not known whether the problem lies in $\NP \cap \coNP$. Using Ramana's duality result for semidefinite programming it is easy to show that the problem also lies in $\NPR \cap \coNPR$. One reason this is interesting is that the sum of square roots problem, that is, deciding the truth of sentences of the form $\sum_{i \in [n]} \sqrt{a_i} \leq k$, is a special case of semidefinite programming (e.g.~\cite{G98}), and is one of the few problems in \NPR\ for which a better complexity bound than \PSPACE\ is known; it is located in the counting hierarchy~\cite{ABKPM08}.

Finally, we should mention that Erickson, van der Hoog, and Miltzow~\cite{EvdHM20} defined a RAM-machine model for \NPR\ which goes beyond the BSS-model to capture integer operations; this model was extended to $\PR{2}{}$ by Dobbins, Kleist, Miltzow and Rz\k{a}\.{z}ewski~\cite{DKMR22}, and it appears likely that it can be generalized to arbitrary levels, maybe even including an oracle mechanism.

\subsection*{Acknowledgments}

We would like to thank an anonymous reader who suggested the paper by B\"{u}rgisser and Cucker~\cite{BC09}, which led to the addition of the results in Section~\ref{sec:SAS}. We would also like to thank Kristoffer Arnsfelt Hansen and Soumyajit Paul who gave valuable feedback on the paper.

\bibliographystyle{plain}
\bibliography{BeyondNPR}

\end{document}